\documentclass{easychair}

\usepackage[T1]{fontenc}
\usepackage[utf8]{inputenc}
\usepackage[main=english,russian]{babel}

\usepackage{doc}
\usepackage{amsmath}
\usepackage{amssymb}
\usepackage{enumitem}   
\usepackage{subcaption}

\usepackage{tikz}
\usetikzlibrary{calc,decorations.pathmorphing,decorations, arrows.meta, shapes, snakes} 

\title{Maslov's class K with Equivalence\thanks{Supported by Polish National Science Center, grant No. 2021/41/B/ST6/00996.}}
\author{Oskar Fiuk}
\institute{University of Wrocław\\ \email{contact.oskarfiuk@gmail.com}}
\authorrunning{Fiuk}
\titlerunning{Maslov's class K with Equivalence}

\begin{document}

\newcommand{\of}[1]{\mbox{}{\color{teal}\textbf{\small [OF: #1]}}}

\newtheorem{theorem}{Theorem}
\newtheorem{lemma}[theorem]{Lemma}
\newtheorem{corollary}[theorem]{Corollary}
\newtheorem{definition}[theorem]{Definition}
\newtheorem{proposition}[theorem]{Proposition}
\newtheorem{claim}[theorem]{Claim}
\newtheorem{assumption}[theorem]{Assumption}
\newtheorem{example}[theorem]{Example}
\newtheorem{remark}[theorem]{Remark}

\newcommand{\FO}{\mbox{\rm FO}}
\newcommand{\FOt}{\mbox{$\mbox{\rm FO}^2$}}
\newcommand{\GF}{\mbox{$\mbox{\rm GF}$}}
\newcommand{\FF}{\mbox{$\mbox{\rm FF}$}}
\newcommand{\ALC}{$\mathcal{ALC}$}
\newcommand{\UF}{\mbox{$\mbox{\rm UF}_1$}}

\newcommand{\ACK}{\mbox{$\mbox{\rm ACK}$}}
\newcommand{\GC}{\mbox{$\mbox{\rm GC}$}}
\newcommand{\SC}{\mbox{$\mbox{\rm SC}$}}
\newcommand{\MK}{\mbox{$\mbox{\rm K}$}}
\newcommand{\DMK}{\mbox{$\overline{\mbox{\rm K}}$}}
\newcommand{\DMDK}{\mbox{$\overline{\mbox{\rm DK}}$}}
\newcommand{\SNF}{\mbox{$\DMDK_{\rm snf}$}}
\newcommand{\MON}{\mbox{$\mbox{\rm MON}$}}
\newcommand{\AUFmin}{\mbox{$\mbox{\rm AUF}^-_1$}}

\newcommand{\LogSpace}{\textsc{LogSpace}}
\newcommand{\NLogSpace}{\textsc{NLogSpace}}
\newcommand{\NP}{\textsc{NPTime}}
\newcommand{\PTime}{\textsc{PTime}}
\newcommand{\PSpace}{\textsc{PSpace}}
\newcommand{\ExpTime}{\textsc{ExpTime}}
\newcommand{\ExpSpace}{\textsc{ExpSpace}}
\newcommand{\NExpTime}{\textsc{NExpTime}}
\newcommand{\TwoExpTime}{2\textsc{-ExpTime}}
\newcommand{\TwoNExpTime}{2\textsc{-NExpTime}}
\newcommand{\ThreeNExpTime}{3\textsc{-NExpTime}}
\newcommand{\APSpace}{\textsc{APSpace}}
\newcommand{\AExpSpace}{\textsc{AExpSpace}}
\newcommand{\TOWER}{\textsc{Tower}}
\newcommand{\RE}{\textsc{RE}}

\renewcommand{\phi}{\varphi} 
\newcommand{\eps}{\varepsilon} 
\renewcommand{\emptyset}{\varnothing} 

\newcommand{\AAA}{{\mbox{\large \boldmath $\alpha$}}}
\newcommand{\BBB}{{\mbox{\large \boldmath $\beta$}}}
\newcommand{\CCC}{{\mbox{\large \boldmath $\gamma$}}}
\newcommand{\DDD}{{\mbox{\large \boldmath $\delta$}}}
\newcommand{\TTT}{{\mbox{\large \boldmath $\tau$}}}
\newcommand{\TTTo}{{\mbox{$\TTT_{\ro}$}}}
\newcommand{\ro}{{\rm o}}

\newcommand{\TT}{{\mbox{\large \boldmath $T$}}}
\newcommand{\TTo}{{\mbox{$\TT_{\ro}$}}}

\newcommand{\restr}{\restriction}
\newcommand{\N}{{\mathbb N}}   
\newcommand{\Q}{{\mathbb Q}}   
\newcommand{\Z}{{\mathbb Z}}   
\newcommand{\R}{{\mathbb R}}   

\newcommand{\tp}[1]{{\rm tp}^{{#1}}}
\newcommand{\otp}[1]{{\rm tp}_{\rm o}^{{#1}}}
\newcommand{\etp}[1]{{\rm tp}_{=}^{{#1}}}
\newcommand{\lit}{{\rm lit}}

\newcommand{\Dom}{{\rm dom}}
\newcommand{\Vars}{{\rm{vars}}}
\newcommand{\Atoms}{{\rm{atoms}}}
\newcommand{\Arity}{{\rm{arity}}}
\newcommand{\chc}{{\rm{chc}}}

\newcommand{\FFF}{{F}}
\newcommand{\GGG}{{G}}
\newcommand{\EEE}{{E}}
\newcommand{\TR}{{T}}
\newcommand{\TRrefl}{{T^*}}
\newcommand{\CC}{{\operatorname{Con}}}
\newcommand{\LLL}{{L}}

\newcommand{\Cons}{{\bar{c}}}
\newcommand{\Rels}{{\bar{R}}}

\newcommand{\Qfr}{Q}

\newcommand{\tightdisplay}[1]{%
  \begingroup%
  \setlength{\abovedisplayskip}{0.4em}%
  \setlength{\belowdisplayskip}{0.4em}%
  #1\endgroup}

\renewcommand{\paragraph}[1]{%
\smallskip\noindent\textbf{#1}}

\newcommand{\varx}{\mathbf{x}}
\newcommand{\vary}{\mathbf{y}}

\newcommand{\cA}{{\mathcal{A}}}
\newcommand{\cB}{{\mathcal{B}}}
\newcommand{\cC}{{\mathcal{C}}}
\newcommand{\cD}{{\mathcal{D}}}
\newcommand{\cE}{{\mathcal{E}}}
\newcommand{\cF}{{\mathcal{F}}}
\newcommand{\cG}{{\mathcal{G}}}
\newcommand{\cH}{{\mathcal{H}}}
\newcommand{\cI}{{\mathcal{I}}}
\newcommand{\cJ}{{\mathcal{J}}}
\newcommand{\cK}{{\mathcal{K}}}
\newcommand{\cL}{{\mathcal{L}}}
\newcommand{\cM}{{\mathcal{M}}}
\newcommand{\cN}{{\mathcal{N}}}
\newcommand{\cO}{{\mathcal{O}}}
\newcommand{\cP}{{\mathcal{P}}}
\newcommand{\cQ}{{\mathcal{Q}}}
\newcommand{\cR}{{\mathcal{R}}}
\newcommand{\cS}{{\mathcal{S}}}
\newcommand{\cT}{{\mathcal{T}}}
\newcommand{\cU}{{\mathcal{U}}}
\newcommand{\cV}{{\mathcal{V}}}
\newcommand{\cW}{{\mathcal{W}}}
\newcommand{\cX}{{\mathcal{X}}}
\newcommand{\cY}{{\mathcal{Y}}}
\newcommand{\cZ}{{\mathcal{Z}}}

\newcommand{\bA}{{\mathbf{A}}}
\newcommand{\bB}{{\mathbf{B}}}
\newcommand{\bC}{{\mathbf{C}}}
\newcommand{\bD}{{\mathbf{D}}}
\newcommand{\bE}{{\mathbf{E}}}
\newcommand{\bF}{{\mathbf{F}}}
\newcommand{\bG}{{\mathbf{G}}}
\newcommand{\bH}{{\mathbf{H}}}
\newcommand{\bI}{{\mathbf{I}}}
\newcommand{\bJ}{{\mathbf{J}}}
\newcommand{\bK}{{\mathbf{K}}}
\newcommand{\bL}{{\mathbf{L}}}
\newcommand{\bM}{{\mathbf{M}}}
\newcommand{\bN}{{\mathbf{N}}}
\newcommand{\bO}{{\mathbf{O}}}
\newcommand{\bP}{{\mathbf{P}}}
\newcommand{\bQ}{{\mathbf{Q}}}
\newcommand{\bR}{{\mathbf{R}}}
\newcommand{\bS}{{\mathbf{S}}}
\newcommand{\bT}{{\mathbf{T}}}
\newcommand{\bU}{{\mathbf{U}}}
\newcommand{\bV}{{\mathbf{V}}}
\newcommand{\bW}{{\mathbf{W}}}
\newcommand{\bX}{{\mathbf{X}}}
\newcommand{\bY}{{\mathbf{Y}}}
\newcommand{\bZ}{{\mathbf{Z}}}

\newcommand{\bbA}{{\mathbb{A}}}
\newcommand{\bbB}{{\mathbb{B}}}
\newcommand{\bbC}{{\mathbb{C}}}
\newcommand{\bbD}{{\mathbb{D}}}
\newcommand{\bbE}{{\mathbb{E}}}
\newcommand{\bbF}{{\mathbb{F}}}
\newcommand{\bbG}{{\mathbb{G}}}
\newcommand{\bbH}{{\mathbb{H}}}
\newcommand{\bbI}{{\mathbb{I}}}
\newcommand{\bbJ}{{\mathbb{J}}}
\newcommand{\bbK}{{\mathbb{K}}}
\newcommand{\bbL}{{\mathbb{L}}}
\newcommand{\bbM}{{\mathbb{M}}}
\newcommand{\bbN}{{\mathbb{N}}}
\newcommand{\bbO}{{\mathbb{O}}}
\newcommand{\bbP}{{\mathbb{P}}}
\newcommand{\bbQ}{{\mathbb{Q}}}
\newcommand{\bbR}{{\mathbb{R}}}
\newcommand{\bbS}{{\mathbb{S}}}
\newcommand{\bbT}{{\mathbb{T}}}
\newcommand{\bbU}{{\mathbb{U}}}
\newcommand{\bbV}{{\mathbb{V}}}
\newcommand{\bbW}{{\mathbb{W}}}
\newcommand{\bbX}{{\mathbb{X}}}
\newcommand{\bbY}{{\mathbb{Y}}}
\newcommand{\bbZ}{{\mathbb{Z}}}

\newcommand{\fA}{{\mathfrak{A}}}
\newcommand{\fB}{{\mathfrak{B}}}
\newcommand{\fC}{{\mathfrak{C}}}
\newcommand{\fD}{{\mathfrak{D}}}
\newcommand{\fE}{{\mathfrak{E}}}
\newcommand{\fF}{{\mathfrak{F}}}
\newcommand{\fG}{{\mathfrak{G}}}
\newcommand{\fH}{{\mathfrak{H}}}
\newcommand{\fI}{{\mathfrak{I}}}
\newcommand{\fJ}{{\mathfrak{J}}}
\newcommand{\fK}{{\mathfrak{K}}}
\newcommand{\fL}{{\mathfrak{L}}}
\newcommand{\fM}{{\mathfrak{M}}}
\newcommand{\fN}{{\mathfrak{N}}}
\newcommand{\fO}{{\mathfrak{O}}}
\newcommand{\fP}{{\mathfrak{P}}}
\newcommand{\fQ}{{\mathfrak{Q}}}
\newcommand{\fR}{{\mathfrak{R}}}
\newcommand{\fS}{{\mathfrak{S}}}
\newcommand{\fT}{{\mathfrak{T}}}
\newcommand{\fU}{{\mathfrak{U}}}
\newcommand{\fV}{{\mathfrak{V}}}
\newcommand{\fW}{{\mathfrak{W}}}
\newcommand{\fX}{{\mathfrak{X}}}
\newcommand{\fY}{{\mathfrak{Y}}}
\newcommand{\fZ}{{\mathfrak{Z}}}

\newcommand{\as}{{\bar{a}}}
\newcommand{\bs}{{\bar{b}}}
\newcommand{\csbar}{{\bar{c}}}
\newcommand{\ds}{{\bar{d}}}
\newcommand{\es}{{\bar{e}}}
\newcommand{\fs}{{\bar{f}}}
\newcommand{\gs}{{\bar{g}}}
\newcommand{\hs}{{\bar{h}}}
\newcommand{\is}{{\bar{i}}}
\newcommand{\js}{{\bar{j}}}
\newcommand{\ks}{{\bar{k}}}
\newcommand{\ls}{{\bar{l}}}
\newcommand{\ms}{{\bar{m}}}
\newcommand{\ns}{{\bar{n}}}
\newcommand{\os}{{\bar{o}}}
\newcommand{\ps}{{\bar{p}}}
\newcommand{\qs}{{\bar{q}}}
\newcommand{\rs}{{\bar{r}}}
\newcommand{\ssbar}{{\bar{s}}} 
\newcommand{\ts}{{\bar{t}}}
\newcommand{\us}{{\bar{u}}}
\newcommand{\vs}{{\bar{v}}}
\newcommand{\ws}{{\bar{w}}}
\newcommand{\xs}{{\bar{x}}}
\newcommand{\ys}{{\bar{y}}}
\newcommand{\zs}{{\bar{z}}}

\maketitle

\begin{abstract}
  Maslov's class K is a decidable fragment of equality-free first-order logic.
  It subsumes numerous classical decidable fragments, including the equality-free monadic fragment, the equality-free two-variable fragment, and the Gödel class.
  Introduced over 50 years ago, the class K has been studied in automated deduction, primarily through resolution-based methods.
  Despite this long history, the precise complexity of the satisfiability problem for K remained open until recently, when NExpTime-completeness was finally established.

  The two main contributions of this work are as follows.
  \begin{itemize}[nosep]
    \item The recent proofs of NExpTime-completeness and the finite model property for K are highly non-trivial and technically sophisticated.
    We present considerably simpler proofs of both results.
    Our approach combines a novel reduction from K to the class of solvable Skolem sentences with a randomised construction of finite models.

    \item We introduce the class K+E by allowing sentences in K to contain one distinguished binary predicate E whose interpretation is constrained to be an equivalence relation.
    By leveraging our approach to this extended class, we prove that K+E has the finite model property and that its satisfiability problem is 2-NExpTime-complete.
  \end{itemize}
\end{abstract}

\section{Introduction}
\label{sec:introduction}

Hilbert's \emph{Entscheidungsproblem} asked for a general algorithm to determine the satisfiability of first-order statements.
Church and Turing famously showed that no such algorithm exists by establishing the undecidability of first-order logic.
This negative result gave rise to a broad research programme aimed at identifying fragments of first-order logic for which satisfiability remains decidable.
Over the decades, this effort has led to the discovery of numerous decidable formalisms, including the Bernays--Sch\"onfinkel class, the monadic fragment, the Ackermann class, the G\"odel class, the two-variable fragment, and the guarded fragment, among others.

One motivation for studying decidable fragments comes from automated deduction,
a branch of computer science that seeks effective methods for solving the satisfiability problem, not in full generality, but for broad and practically relevant classes of instances.
One of the pioneers of automated deduction was S.~J.~Maslov, who introduced the class \MK{}~\cite{Mas68,Mas71}.
In the 1960s, Maslov developed a proof procedure, called the \emph{inverse method}~\cite{Lifschitz89}, for deciding validity and applied it to the class \MK{}.
Subsequent work shifted attention to the dual class, denoted \DMK{}, and to its satisfiability problem.
In the 1980s, Zamov proposed a resolution procedure for \DMK{}; see Chapter~6 of the monograph \emph{Resolution Methods for the Decision Problem}~\cite{FLTZ93}.
In 1999, Hustadt and Schmidt~\cite{HuS99} continued Zamov's work and developed a refined variant of his procedure. 

The importance of Maslov's class \DMK{} stems from the fact that it is one of the broadest fragments of (equality-free) first-order logic.
In particular, \DMK{} subsumes many classical fragments (all considered without equality), including:
the monadic fragment~\cite{Low15}, the Ackermann class~\cite{Ack28} and its generalised variant due to Voigt~\cite{Voi19}, the G\"odel class~\cite{God33}, and the solvable Skolem class from Dreben and Goldfarb's monograph~\cite{DG79} (there denoted as Class~2.4).
Moreover, the even broader class of conjunctions of \DMK{} sentences, denoted \DMDK{}, is decidable~\cite{HuS99}.
The absence of equality in \DMK{} is unavoidable: already the G\"odel class with equality is undecidable~\cite{Gol84}.

The class \DMK{} also subsumes the equality-free two-variable fragment, \FOt{}, in the sense that sentences in Scott normal form belong to \DMK{}~\cite{Sco62}.
As a consequence, many standard modal and description logics can be translated into \DMK{}.
This connection is discussed in the survey~\cite{HSG04}, where \DMK{} is considered alongside other well-known fragments such as the guarded fragment and the fluted fragment, all of which are incomparable in expressive power with \DMK{}.

After the early 2000s, \DMK{} has received relatively little attention.
A renewed interest in \DMK{} is motivated by the study of the uniform one-dimensional fragment, denoted \UF{}. Introduced by Hella and Kuusisto in 2014~\cite{HK14}, see also~\cite{KK14,Kuu16},
\UF{} was designed as a canonical generalisation of \FOt{} to predicates of arity greater than $2$.
The satisfiability problem for \UF{} (with equality) is \NExpTime-complete~\cite{KK14}, matching the complexity of \FOt{}.
To increase the expressive power of (equality-free) \UF{}, Kiero\'nski~\cite{FK25} proposed a generalisation denoted \AUFmin{}.
He observed that \AUFmin{}, when put into a suitable normal form, falls within \DMDK{}.
In~\cite{FKM24}, an even broader extension of (equality-free) \UF{}, denoted $\forall\text{-}\UF$, is introduced. It, too, is subsumed by \DMDK{}.

Despite this broad applicability, the class \DMK{} remained poorly understood from a theoretical perspective, as the two central questions for logical fragments—the finite model property and the precise complexity of the satisfiability problem—remained open for over 50 years.\footnote{%
In particular, it appears that~\cite{FLTZ93,HuS99,FLHT01} refrain from making explicit claims regarding the complexity of \DMK{}, or even the running time of their decision procedures.
Nevertheless, elementary upper bounds can be derived from Theorem~6.3 in~\cite{FLTZ93} and Theorem~6 in~\cite{HuS99}; 
however, these bounds are at best doubly or triply exponential.}

These questions were resolved only recently: in 2024, Fiuk, Kiero\'nski, and Michielini~\cite{FKM24} established that \DMK{} has the finite model property and that the satisfiability problem for \DMK{} is \NExpTime{}-complete.
In light of this result, the significance of \DMK{} becomes even more apparent:
it can express many non-trivial properties, not expressible in other fragments,
while being no harder than \FOt{}, \UF{}, or the G\"odel class.
This contrasts with the guarded fragment, which is \TwoExpTime{}-complete~\cite{Gra99}, and with the fluted fragment, which is \TOWER{}-complete~\cite{P-HST19}.

\paragraph{Our contribution.}
The proof of the finite model property and the $\NExpTime{}$ upper bound for $\DMK{}$
given in~\cite{FKM24} is technically involved and difficult to follow;
we discuss it in more detail in Section~\ref{sec:complexity}.
Our first contribution is a new and substantially simpler proof of these results.

Our approach is based on a novel Skolemisation procedure that transforms
sentences in $\DMK{}$ into the solvable Skolem class~\cite{DG79,Gol93}.
Notably, the precise complexity of the solvable Skolem class was itself open prior
to~\cite{FKM24}.
We then apply Goldfarb's probabilistic construction of finite
models for sentences in the solvable Skolem class~\cite{Gol93}, which
generalises the earlier approach of Gurevich and Shelah for the G\"odel
class~\cite{GS83}.
A direct application of this randomised model construction, however,
does not yield the desired model-size bound.
To obtain a tight exponential bound, we refine the construction using
ideas from~\cite{FKM24}, which, thanks to our Skolemisation procedure,
can be applied in a considerably simpler setting.

Our second contribution extends the decidability result beyond the class $\DMK{}$.
A common way of increasing the expressive power of decidable fragments
is to restrict the class of admissible structures.
For instance, $\FOt{}$ has been studied over structures equipped with
linear orders~\cite{Ott01,SZ12}, equivalence
relations~\cite{KO12,KMPHT14,KT09}, and transitive
relations~\cite{GMV99,P-H18}.
We follow this line of research and prove that $\DMK{}$ remains decidable when extended with one equivalence relation.
More precisely, we introduce the class $\DMK{} + \EEE$, obtained by
adding to $\DMK{}$ one distinguished binary predicate $\EEE$ whose
interpretation is required to be an equivalence relation.
We show that $\DMK{} + \EEE$ has the finite model property
and that its satisfiability problem is $\TwoNExpTime{}$-complete.

\paragraph{Organisation.}
Section~\ref{sec:preliminaries} introduces the notation and basic
definitions.
Section~\ref{sec:syntax} presents the class $\DMK{}$ and related fragments.
Section~\ref{sec:skolemization} develops the Skolemisation procedure.
Section~\ref{sec:complexity} gives a simplified proof of the $\NExpTime{}$ upper bound for $\DMK{}$.
Section~\ref{sec:equivalence} establishes that $\DMK{} + \EEE$ is $\TwoNExpTime$-complete and shows that closely related extensions are already undecidable.
Finally, Section~\ref{sec:remarks} concludes with remarks on the possible addition of equality to \DMK{}.

\section{Preliminaries}
\label{sec:preliminaries}

\paragraph{Notation.}
Let $\N = \{ 0, 1, 2, \dots \}$.
For $k \in \N$, the notation $[k]$ stands for the set $\{1, \dots, k\}$, with the convention that $[0] = \emptyset$. More generally, we use interval notation $[a, b] \subseteq \N$ to denote the set $\{a, a+1, \dots, b\}$ for $a \leq b$, and the empty set $\emptyset$ for $a > b$.
The powerset of a set $S$ is denoted by $2^S$.
Tuples are denoted by bars and, by abuse of notation, sometimes regarded as sets of their elements.
Concatenation of tuples is denoted by juxtaposition.

\paragraph{First-order logic.}
We assume general familiarity with first-order logic ($\FO$).  
The logical symbols are $=$, $\bot$, $\top$, $\vee$, $\wedge$, $\neg$, $\rightarrow$, $\leftrightarrow$, and the quantifiers $\forall$, $\exists$.
The non-logical symbols are relation symbols (from a countably infinite set), constant symbols (also from a countably infinite set), and variables (again countably many).
We do not allow function symbols of positive arity.

A \emph{signature} is a pair $\Sigma = \langle \Rels,\Cons \rangle$,
where $\Rels$ is a non-empty tuple of relation symbols and $\Cons$ is a tuple of constant symbols.
When convenient, we identify $\Sigma$ with the set $\Rels \cup \Cons$.
Every relation symbol $R \in \Rels$ comes with associated arity, denoted $\Arity(R)$.

The \emph{length} of a formula $\phi$, denoted $|\phi|$, is the total number of symbols it contains, where each occurrence of a symbol---be it a variable, relation symbol, or constant---contributes~$1$.
The signature of $\phi$ consists of relation and constant symbols that appear in the formula.
The set of atomic subformulas of $\phi$ is denoted by $\Atoms(\phi)$.
The set of \emph{free} variables of $\phi$ is denoted by $\Vars(\phi)$;
we often view this set as a tuple $\xs$ and indicate this by $\phi[\xs]$.
Given a tuple of elements $\as$ of the same length as $\xs$, we write
$\phi[\as]$ for the result of componentwise substitution of the
variables in $\xs$ by the corresponding elements of $\as$.
More generally, for any tuple of variables $\ys$, not necessarily
coinciding with the free variables of $\phi$, and any tuple of elements
$\bs$ of the same length as $\ys$, we write $\phi[\bs/\ys]$ for the
result of componentwise substitution of the variables in $\ys$ by the
corresponding elements of $\bs$.
We write $\pm \phi$ to denote either $\phi$ or $\neg \phi$.

We use Fraktur letters such as $\fA, \fB, \dots$ to denote structures, and the corresponding Roman letters $A, B, \dots$ for their domains.
Let $\Sigma = \langle\Rels,\Cons\rangle$ be a signature.
A \emph{$\Sigma$-structure} $\fA$ interprets every relation symbol $R \in \Rels$ as a relation $R^{\fA}\subseteq A^{\Arity(R)}$ and every constant symbol $c \in \Cons$ as an element $c^{\fA} \in A$.
An element $a \in A$ is \emph{unnamed} if $a \neq c^\fA$ for every constant $c \in \Cons$.

For a sub-signature $\Sigma_0 \subseteq \Sigma$,
the \emph{$\Sigma_0$-reduct} of $\fA$ is a $\Sigma_0$-structure $\fA_0$ obtained from $\fA$ by
forgetting the interpretations of the symbols in $\Sigma \setminus \Sigma_0$.
Equivalently, $\fA$ is a \emph{$\Sigma$-expansion} of $\fA_0$.

\paragraph{Finite model property.}
Let $\cF$ be a fragment (or class) of $\FO{}$ sentences.
The fragment $\cF$ has the \emph{finite model property} if every satisfiable sentence $\phi \in \cF$ admits a finite model.
If there exists a fixed polynomial $p$ such that every satisfiable sentence $\phi \in \cF$ admits a model over a domain of size at most $2^{p(|\phi|)}$, then $\cF$ has the \emph{exponential model property}.
If, instead, the domain size is bounded by $2^{2^{p(|\phi|)}}$, then $\cF$ has the \emph{doubly exponential model property}.

The \emph{satisfiability problem} for $\cF$ is to decide whether an input sentence $\phi \in \cF$ admits a model.
The finite model property for $\cF$ implies that the satisfiability problem for $\cF$ is decidable.
The exponential model property for $\cF$ implies that the satisfiability problem for $\cF$ is in $\NExpTime{}$, while the doubly exponential model property implies that it is in $2$-$\NExpTime{}$.

\paragraph{Types and outer-types.}
Let $\Sigma=\langle\Rels,\Cons\rangle$ be a signature.
For $k\in\N$, let $\lit^k(\Sigma)$ denote the set of all literals over
$\Sigma$ and the variables $x_1,\dots,x_k$; literals involving $=$ or
$\neq$ are not included.

For $k\in\N$, a \emph{$k$-type} over $\Sigma$ is a maximal consistent
subset of $\lit^k(\Sigma)$; that is, for every pair of dual literals
$\lambda,\neg\lambda\in\lit^k(\Sigma)$, exactly one of $\lambda$ and
$\neg\lambda$ belongs to the $k$-type.
A $0$-type contains literals involving only constants.
A $(k+1)$-type $\sigma$ \emph{extends} a $k$-type $\tau$ if
$\tau\subseteq\sigma$.
A \emph{type} is simply a $k$-type for some $k\ge0$.
When convenient, we identify types with the conjunctions of the
literals they contain.
Let $\TTT^k(\Sigma)$ denote the set of all $k$-types over $\Sigma$, and
write
\(
  \TTT(\Sigma)
  \coloneq
  \bigcup_{k\in\N}\TTT^k(\Sigma).
\)

Let $\fA$ be a $\Sigma$-structure, and let
$\as=\langle a_1,\dots,a_k\rangle$ be a $k$-tuple of pairwise distinct
unnamed elements of $\fA$.
We write
\(
  \tp{\fA}[a_1,\dots,a_k]
  \coloneq
  \bigl\{
    \lambda\in\lit^k(\Sigma)
    \mid
    \fA\models\lambda[\as/\xs]
  \bigr\}
\)
for the $k$-type \emph{realised} by $\as$ in $\fA$, where
$[\as/\xs]$ denotes the substitution $x_i\mapsto a_i$.

Our notion of types is a slight adaptation of the standard notion of
atomic types, tailored to the setting needed in this paper.
First, our types do not contain equality literals.
Second, we define the type realised by a tuple only when its elements
are pairwise distinct and unnamed.

In our proofs, we also need a refined variant of types.
For $k\in\N$, let
\tightdisplay{\[
  \lit^k_\ro(\Sigma)
  \coloneq
  \bigl\{
    \lambda\in\lit^k(\Sigma)
    \mid
    \Vars(\lambda)=\{x_1,\dots,x_k\}
    \text{ or }
    |\Vars(\lambda)|\le1
  \bigr\}.
\]}%
A \emph{$k$-outer-type} over $\Sigma$ is a maximal consistent subset of
$\lit^k_\ro(\Sigma)$.
Thus, a $k$-outer-type is akin to a $k$-type, but contains only literals
involving either all variables $x_1,\dots,x_k$ or at most one of them.
In particular, $k$-types and $k$-outer-types coincide for $k\le2$.

We use for outer-types the terminology and conventions introduced above
for types.
In particular, an \emph{outer-type} is a $k$-outer-type for some
$k\ge0$, and we identify outer-types with the conjunctions of their
literals when convenient.
We write $\TTT_\ro^k(\Sigma)$ and $\TTT_\ro(\Sigma)$ for the sets of
all $k$-outer-types and all outer-types over $\Sigma$, respectively.

For a $k$-tuple $\as=\langle a_1,\dots,a_k\rangle$ of pairwise distinct unnamed elements of $\fA$,
the $k$-outer-type \emph{realised} by $\as$ in $\fA$ is
\(
  \otp{\fA}[a_1,\dots,a_k]
  \coloneq
  \tp{\fA}[a_1,\dots,a_k]\cap\lit^k_\ro(\Sigma).
\)

For a $k$-outer-type $\tau_\ro$, we write $\tau_\ro\langle x_i \rangle$ for the $1$-type \emph{induced on the variable $x_i$}.
It is obtained by taking all literals $\lambda \in \tau_\ro$ such that $\Vars(\lambda) \subseteq \{ x_i \}$ and renaming $x_i$ with $x_1$.
Intuitively, if a tuple $\as$ realises $\tau_\ro$ in a structure, then $\tau_\ro\langle x_i \rangle$ is the $1$-type of the $i$th element of $\as$.

For a $k$-type $\tau$ and pairwise distinct indices $i_1,\dots,i_\ell \in [k]$, we write $\tau \langle x_{i_1},\dots,x_{i_\ell} \rangle$ for the $\ell$-outer-type \emph{induced by $\tau$ on $\langle x_{i_1},\dots,x_{i_\ell}\rangle$}.
It is obtained by taking all literals $\lambda \in \tau$ such that either
$\Vars(\lambda) = \{x_{i_1},\dots,x_{i_\ell}\}$ or
$\Vars(\lambda) \subseteq \{x_{i_j}\}$ for some $j \in [\ell]$,
and then renaming the variables according to the mapping
$x_{i_j} \mapsto x_j$ for every $j \in [\ell]$.
Intuitively, if a tuple
$\langle a_1,\dots,a_k\rangle$ realises $\tau$ in a structure, then
$\tau \langle x_{i_1},\dots,x_{i_\ell} \rangle$ is the $\ell$-outer-type of the subtuple
$\langle a_{i_1},\dots,a_{i_\ell}\rangle$.

For the set of all outer-types induced by the $k$-type $\tau$, we write
\tightdisplay{\[ \TTTo(\tau) \;\coloneq\; \big\{ \tau\langle x_{i_1},\dots,x_{i_\ell} \rangle \mid \ell \in [0,k] \text{ and } i_1,\dots,i_\ell \in [k] \text{ are pairwise distinct} \big\}.\]}%

\section{The Class \DMK{} and Related Fragments}
\label{sec:syntax}

\paragraph{Definition.}
Let $\phi$ be an equality-free sentence in negation normal form.
The sentence $\phi$ may contain constant symbols.
Without loss of generality, 
we assume that distinct quantifier
occurrences in $\phi$ bind distinct variables.

For an atom $\alpha\in\Atoms(\phi)$, the \emph{$\phi$-prefix} of
$\alpha$ is the sequence of quantifiers in $\phi$ that bind the
variables in $\Vars(\alpha)$.
The order of quantifiers in the $\phi$-prefix matches the order of their
occurrences in $\phi$ from left to right.
In contrast, the order and multiplicity of variables within $\alpha$,
as well as the constants occurring in $\alpha$, are irrelevant for
determining its $\phi$-prefix.
In particular, if $\alpha$ contains only constant symbols, then its
$\phi$-prefix is the empty sequence.

The sentence $\phi$ belongs to the class $\DMK{}$ if there exists a selection
\( \forall x_1,\dots,\forall x_n \)
of universal quantifier occurrences in $\phi$, for some $n \geq 0$, such that:
\begin{enumerate}[nosep]
    \item None of the selected universal quantifiers is in the scope of an existential quantifier.
    \item For every atom $\alpha \in \Atoms(\phi)$, at least one of the following holds:
    \begin{enumerate}[label=(\roman*),nosep]
      \item\label{DMK-category1} the $\phi$-prefix of $\alpha$ has length at most $1$;
      \item\label{DMK-category2} the $\phi$-prefix of $\alpha$ consists exactly of the selected
            universal quantifiers
            $\forall x_1,\dots,\forall x_n$;
      \item\label{DMK-category3} the last quantifier in the $\phi$-prefix of $\alpha$ is
            existential.
    \end{enumerate}
\end{enumerate}
The selected universal quantifiers are listed in their left-to-right
order in $\phi$, but need not form a contiguous block in the sequence
of all quantifiers of $\phi$.
The variables $x_1,\dots,x_n$ bound by the selected quantifiers are called the \emph{uniform variables}
of $\phi$, and $n$ is called the \emph{uniformity} of~$\phi$.

To avoid ambiguity, we adopt the convention that $n$ is always chosen
as small as possible.
Since for $n=1$ condition~\ref{DMK-category2} is subsumed by
condition~\ref{DMK-category1}, we have $n=0$ or $n\geq2$.
Under this convention, the required selection
$\forall x_1,\dots,\forall x_n$, whenever it exists, is unique.
Moreover, if $n\geq2$, it must coincide
with the $\phi$-prefix of some atom of $\phi$.

The above definition of $\DMK{}$ does not essentially rely on negation
normal form; it only requires that every quantifier occur positively.
We therefore adopt a more liberal syntax for $\DMK{}$ and allow
arbitrary Boolean combinations within quantifier-free subformulas.
In particular, we admit $\rightarrow$ and $\leftrightarrow$, provided
that no quantifier occurs in the premise of an implication or on
either side of a bi-implication.

\paragraph{Prenex form.}
It is often convenient to consider sentences of $\DMK{}$ in prenex
form.
Given a sentence $\phi\in\DMK{}$, its standard conversion to prenex
form yields, up to a reordering of the initial universal quantifiers, a
sentence of the form
\tightdisplay{\begin{equation}\label{prenex-DMK}
  \forall x_1\; \dots\; \forall x_n\;
  \Qfr_{n+1}x_{n+1}\; \dots\; \Qfr_mx_m\; \psi,
\end{equation}}%
where $0\le n\le m$, each $\Qfr_i\in\{\forall,\exists\}$, and $\psi$
is equality-free and quantifier-free.

The conditions on $\phi$-prefixes translate as follows: for every
atom $\alpha\in\Atoms(\psi)$, at least one of the following holds:
\begin{enumerate}[label=(\roman*),nosep]
  \item[\ref{DMK-category1}]\label{prenex-DMK:item1}
    $|\Vars(\alpha)|\le1$;
  \item[\ref{DMK-category2}]\label{prenex-DMK:item2}
    $\Vars(\alpha)=\{x_1,\dots,x_n\}$, where
    $x_1,\dots,x_n$ are the uniform variables of $\phi$;
  \item[\ref{DMK-category3}]\label{prenex-DMK:item3}
    the variable $x_i\in\Vars(\alpha)$ with maximal index $i$ is
    existentially quantified.
\end{enumerate}

Conversely, it is straightforward to verify that a sentence in negation normal form belongs to
$\DMK{}$ whenever its standard conversion to prenex form satisfies the
conditions above.

\paragraph{Examples.}
We illustrate the definition of $\DMK{}$ with a few examples.

\smallskip
\emph{Example 1.}
Consider the sentence $\phi_1$:
\tightdisplay{\[
\begin{aligned}
\forall a,j,s\;
\bigl(&
  \bigl(
    \mathsf{article}(a)
    \wedge \mathsf{journal}(j)
    \wedge \mathsf{scientist}(s)
    \wedge \mathsf{submittedToBy}(a,j,s)
  \bigr)
  \rightarrow \\[-1mm]
&\exists r\;
  \bigl(
    \mathsf{reviewer}(r)
    \wedge \mathsf{reviewsFor}(r,j)
    \wedge \neg\mathsf{hasConflict}(r,s)
    \wedge \mathsf{qualifiedFor}(r,a)
  \bigr)
\bigr).
\end{aligned}
\]}%

The sentence $\phi_1$ belongs to $\DMK{}$ with uniform variables
$a,j,s$.
The $\phi_1$-prefixes of its atoms satisfy the conditions of the definition as follows:
\begin{enumerate}[label=(\roman*),nosep]
  \item[\ref{DMK-category1}]
    the prefixes $\forall a$, $\forall j$, $\forall s$, and
    $\exists r$ have length at most~$1$;
  \item[\ref{DMK-category2}]
    the prefix $\forall a\forall j\forall s$ consists of all
    selected universal quantifiers;
  \item[\ref{DMK-category3}]
    the remaining prefixes $\forall j\exists r$,
    $\forall s\exists r$, and $\forall a\exists r$ end with an
    existential quantifier.
\end{enumerate}

\smallskip
\emph{Example 2.}
Consider the sentence $\phi_2$:
\tightdisplay{\[
\begin{aligned}
\forall s\; \bigl(&
  \mathsf{student}(s) \rightarrow
  \exists p\; \bigl(
    \mathsf{professor}(p)
    \wedge \mathsf{teaches}(p,s)
    \wedge \forall w\; \bigl(
      \mathsf{week}(w) \rightarrow \\[-1mm]
      &\exists h\; \bigl(
        \mathsf{homework}(h)
        \wedge \mathsf{assignedToBy}(h,s,p)
        \wedge \mathsf{deadlineFor}(w,h)
      \bigr)
    \bigr)
  \bigr)
\bigr).
\end{aligned}
\]}%

The sentence $\phi_2$ belongs to $\DMK{}$ with no uniform variables.
The $\phi_2$-prefixes of its atoms satisfy the conditions of the definition as follows:
\begin{enumerate}[label=(\roman*),nosep]
  \item[\ref{DMK-category1}]
    the prefixes $\forall s$, $\exists p$, $\forall w$, and
    $\exists h$ have length at most~$1$;
  \item[\ref{DMK-category3}]
    the remaining prefixes $\forall s\exists p$,
    $\forall s\exists p\exists h$, and
    $\forall w\exists h$ end with an existential quantifier.
\end{enumerate}

\smallskip
\emph{Example 3.}
Observe that $\DMK{}$ allows permutations of
variables within atoms:
\tightdisplay{\[
  \phi_3
  \;\coloneq\;
  \forall x_1,x_2,x_3,x_4\;
  \bigl(
    P(x_1,x_2,x_3,x_4)
    \rightarrow
    \bigl(
      P(x_2,x_1,x_3,x_4)
      \wedge P(x_2,x_3,x_4,x_1)
    \bigr)
  \bigr).
\]}%
Indeed, $\phi_3$ belongs to $\DMK{}$ with uniform variables
$x_1,x_2,x_3,x_4$: every atom has the same $\phi_3$-prefix
$\forall x_1\forall x_2\forall x_3\forall x_4$, which satisfies
condition~\ref{DMK-category2}.

\smallskip
\emph{Nonexample.}
Consider the transitivity axiom
\tightdisplay{\[
  \phi_4
  \;\coloneq\;
  \forall x,y,z\;
  \bigl(
    (T(x,y)\wedge T(y,z))
    \rightarrow T(x,z)
  \bigr).
\]}%
Its atoms have the $\phi_4$-prefixes
$\forall x\forall y$, $\forall y\forall z$, and
$\forall x\forall z$.
They have length greater than~$1$, do not end with an existential
quantifier, and cannot all satisfy condition~\ref{DMK-category2}.
Hence, $\phi_4\notin\DMK{}$.

\paragraph{The class $\DMDK{}$.}
Following~\cite{HuS99}, let $\DMDK{}$ denote the class of conjunctions
of $\DMK{}$ sentences; that is,
\tightdisplay{\[
  \phi_1\wedge\dots\wedge\phi_k\in\DMDK{}
  \qquad\text{whenever}\qquad
  \phi_1,\dots,\phi_k\in\DMK{}.
\]}%

A sentence of $\DMDK{}$ need not be equivalent to a sentence of
$\DMK{}$: conjuncts of different uniformities cannot, in general, share
a common sequence of uniform variables.
If they have the same uniformity, however, they can be combined into a
single $\DMK{}$ sentence.
To see this, consider $\phi_1,\phi_2\in\DMK{}$ of the same uniformity.
Without loss of generality, both are in prenex form~\eqref{prenex-DMK},
with the same sequence $\xs$ of uniform variables.
Writing
\(
  \phi_i=\forall\xs\;\phi'_i[\xs],
\)
for $i \in \{ 1, 2\}$,
we have
\tightdisplay{\[
  \phi_1\wedge\phi_2
  \quad\Longleftrightarrow\quad
  \forall\xs\;
  \bigl(\phi'_1[\xs]\wedge\phi'_2[\xs]\bigr).
\]}%
A routine check shows that the sentence on the right belongs to
$\DMK{}$.

\paragraph{The solvable Skolem class.}
In their monograph~\cite{DG79}, Dreben and Goldfarb systematically
study solvable classes of formulas, including several classes of
Skolem sentences ($\forall^*\exists^*$).
One of these, referred to as Class~2.4 and later also studied
in~\cite{Gol93}, is the only such class relevant to this paper.
We therefore refer to it simply as the \emph{solvable Skolem class}.

Sentences in the solvable Skolem class have the form
\tightdisplay{\begin{equation}
  \forall x_1\; \dots\; \forall x_n\;
  \forall x_{n+1}\; \dots\; \forall x_\ell\;
  \exists x_{\ell+1}\; \dots\; \exists x_m\; \psi,
\end{equation}}%
where $0\le n\le\ell\le m$, $\psi$ is equality-free and
quantifier-free, and every $\alpha\in\Atoms(\psi)$ satisfies at least
one of the following:
\begin{enumerate}[label=(\roman*),nosep]
  \item $|\Vars(\alpha)|\leq 1$;
  \item $\Vars(\alpha)=\{x_1,\dots,x_n\}$;
  \item $x_i\in\Vars(\alpha)$ for some $i>\ell$.
\end{enumerate}

The solvable Skolem class is a subfragment of $\DMK{}$: its sentences
are in prenex form~\eqref{prenex-DMK}, with $x_1,\dots,x_n$ as uniform
variables, subject to the additional restriction that all existential
quantifiers follow all universal quantifiers.
In contrast, $\DMK{}$ allows arbitrary quantifier alternation after
the initial block binding the uniform variables.

The solvable Skolem class will play an important role in our
decidability proof, as we show that every sentence of $\DMK{}$ can be
reduced to this class while preserving satisfiability.

\paragraph{Other fragments.}
From the perspective of satisfiability, Maslov's class $\DMK{}$ subsumes several
important fragments of equality-free $\FO{}$.
These include the Ackermann class (the prenex class
$\exists^*\forall\exists^*$), the Scott class (the normal form of
$\FOt{}$), and the G\"odel class (the prenex class
$\exists^*\forall\forall\exists^*$), where the Ackermann and Scott
classes are understood without equality.
The initial block of existential quantifiers in the Ackermann and
G\"odel classes can be replaced by fresh constants while preserving
satisfiability.
These fragments are discussed in detail in the
monographs~\cite{BGG,PH23}.

The class $\DMK{}$ also subsumes the normal form of the equality-free
uniform one-dimensional fragment~\cite{HK14,KK14}, as well as its
variants with alternating quantifier blocks studied
in~\cite{FK25,FKM24}.
In contrast, $\DMK{}$ is incomparable with several other well-known
decidable fragments.
These include the guarded and clique-guarded
fragments~\cite{ABN98,BGO14}, the unary negation
fragment~\cite{StC13}, and the fluted and adjacent
fragments~\cite{P-HST19,BKP23}.

\section{A Skolemisation Procedure}
\label{sec:skolemization}

In this section, we introduce the \emph{Skolem normal form} for sentences in $\DMDK{},$ which allows for a substantially simpler treatment of $\DMDK{}$ in subsequent proofs.
We then show a polynomial-time reduction from the general case to Skolem normal form that preserves (finite) satisfiability.

\begin{definition}\label{def:skolem-class}
  We say that a sentence $\phi\in\DMDK{}$ is in \emph{Skolem normal form} if
  \tightdisplay{\begin{equation*}
    \phi
    \;=\;
    \bigwedge\nolimits_{t\in\cI}
    \forall\xs\; \exists x_{m+1}\; \psi_t[\xs,x_{m+1}],
  \end{equation*}}%
  where
  $\cI$ is a finite set of indices,
  $\xs=\langle x_1,\dots,x_m\rangle$ is a variable sequence of length
  $m\ge1$, common to all $t$,
  $x_{m+1}$ is a single variable, and
  $\psi_t[\xs,x_{m+1}]$ is quantifier-free.

  For every $t\in\cI$, let
  \tightdisplay{\[
    \phi_t
    \;\coloneq\;
    \forall\xs\; \exists x_{m+1}\; \psi_t
  \]}%
  and let $n_t\in\{0\}\cup[2,m]$ denote the uniformity of $\phi_t$,
  with uniform variables $x_1,\dots,x_{n_t}$.

  We partition the set of atoms of $\psi_t$ as
  $\Atoms(\psi_t)=\AAA_t\cup\BBB_t\cup\CCC_t$, where
  \tightdisplay{\[
  \begin{aligned}
    \AAA_t
    &\coloneq
    \bigl\{
      \alpha\in\Atoms(\psi_t)
      \mid
      \Vars(\alpha)\subseteq\{x_i\}
      \text{ for some $i\in[m]$}
    \bigr\},
    \\
    \BBB_t
    &\coloneq
    \bigl\{
      \alpha\in\Atoms(\psi_t)
      \mid
      n_t \ge 2
      \text{ and }
      \Vars(\alpha)=\{x_1,\dots,x_{n_t}\}
    \bigr\},
    \\
    \CCC_t
    &\coloneq
    \bigl\{
      \alpha\in\Atoms(\psi_t)
      \mid
      x_{m+1}\in\Vars(\alpha)
    \bigr\}.
  \end{aligned}
  \]}%
\end{definition}

Let $\SNF{}$ denote the class of all Skolem normal form sentences of
$\DMDK{}$.
Whenever we consider a sentence of $\SNF{}$, we use the notation of
Definition~\ref{def:skolem-class}.

We now establish the required reduction from $\DMDK{}$ to $\SNF{}$.
Since the reduction can be viewed as a form of
\emph{relational Skolemisation}, we refer to it as the
\emph{Skolemisation procedure}.

The usual Skolem form can subsequently be obtained by replacing the
existential variable $x_{m+1}$ in each conjunct $\phi_t$ by a fresh term
$f_t(x_1,\dots,x_m)$.
We will not need this form.

\begin{proposition}\label{prop:reduction}
  There exists a polynomial-time function
  $\cF \colon \DMDK{} \rightarrow \SNF{}$ such that, for every non-empty domain $D$,
  $\phi \in \DMDK{}$ has a model over $D$ if and only if $\cF(\phi)$ has a model over $D$.
\end{proposition}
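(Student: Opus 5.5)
We process the conjuncts of $\phi=\phi_1\wedge\dots\wedge\phi_k\in\DMDK{}$ one at a time, since the target class is closed under conjunction once all conjuncts share a common prefix $\forall\xs\,\exists x_{m+1}$. Fix a conjunct $\phi_j\in\DMK{}$, written in prenex form~\eqref{prenex-DMK} as $\forall x_1\dots\forall x_n\,\Qfr_{n+1}x_{n+1}\dots\Qfr_{m_j}x_{m_j}\,\psi$.

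We eliminate the quantifier alternation by relational Skolemisation, working from the innermost quantifier outwards. For each position $i>n$ we introduce a fresh predicate $P_i$ that stands for the subformula $\Qfr_{i+1}x_{i+1}\dots\Qfr_{m_j}x_{m_j}\psi$. The key point is the arity of $P_i$. We do not give it all of $x_1,\dots,x_i$ as arguments. Instead, $P_i$ takes as arguments only those variables that occur freely in that subformula and are relevant to it, which we arrange so that the resulting atoms still satisfy the $\DMK{}$ prefix conditions.

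For an existential position $i$, we write the defining conjunct
\[\forall x_1\dots\forall x_{i-1}\;\bigl(P_{i-1}(\dots)\rightarrow\exists x_i\,P_i(\dots)\bigr).\]
Here $P_i$'s last variable is $x_i$, which is existential, so condition~\ref{DMK-category3} holds. For a universal position $i$, we write
\[\forall x_1\dots\forall x_i\;\bigl(P_{i-1}(\dots)\rightarrow P_i(\dots)\bigr),\]
with a dummy $\exists x_{m+1}$ appended.

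This is the main obstacle. A universal step produces an atom $P_i(x_1,\dots,x_i)$ whose last variable is universal, and this atom violates the $\DMK{}$ conditions unless it has at most one variable or equals the uniform block. The plan is to exploit the fact that in $\psi$ every atom is of category~\ref{DMK-category1}, \ref{DMK-category2} or \ref{DMK-category3}. So we would define the Skolem predicates not by quantifier position but by the atoms. For each existential variable $x_e$ and each atom $\alpha$ of category~\ref{DMK-category3} whose maximal variable is $x_e$, the atom depends only on $x_e$ and on variables earlier than $x_e$. We therefore replace $\alpha$ by a predicate $R_\alpha$ indexed by its earlier variables, and we guess witnesses via ``choice'' predicates.

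Concretely, we would use the following two-level scheme. The first level is a predicate $W_e(\ys_e)$, where $\ys_e$ are the universals preceding $x_e$. The second level is a $\forall\ys_e\exists x_e$-conjunct that asserts the existence of $x_e$ realising the required literals among the atoms ending in $x_e$. A universally quantified matrix is then left over, whose atoms are only of categories~\ref{DMK-category1} and~\ref{DMK-category2}, plus the new atoms on $x_1,\dots,x_n$ or single variables.

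Padding handles the shape of the prefix. All conjuncts are padded with vacuous universal variables to a common length $m$, and with a vacuous $\exists x_{m+1}$ where needed. Variables are renamed so that the uniform block comes first and each existential becomes the last variable $x_{m+1}$. Padding with vacuous universals preserves uniformity, because the new variables occur in no atom.

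Correctness is checked in two directions. If $\fA\models\phi$ over $D$, we expand $\fA$ by interpreting each fresh predicate as the truth set of the subformula it abbreviates, which gives a model of $\cF(\phi)$ over the same $D$. Conversely, if $\fA\models\cF(\phi)$, we show $\phi$ by induction from the outermost quantifier inward, using the defining implications: positive occurrence makes the one-directional implications suffice. The domain is unchanged, and the construction adds polynomially many conjuncts of polynomial size.

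Verifying that every conjunct lands in $\DMK{}$ with the right uniformity, and in particular handling the universal steps, is where I expect the real work. If the atom-indexed choice scheme fails, the fallback is to choose the argument lists of the $P_i$ as the uniform variables together with the single last variable.
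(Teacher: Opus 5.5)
There is a genuine gap. The step you defer as ``the real work'' is making every produced conjunct land in $\DMK{}$ with the correct uniformity. That step is the whole content of the proposition, and the concrete schemes you offer do not achieve it.

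In the atom-indexed scheme, $R_\alpha$ and $W_e$ take as arguments the variables preceding $x_e$. These generally include non-uniform universals. For instance, in $\forall x_1\,\exists x_2\,\forall x_3\,\exists x_4\,\psi$ (uniformity $0$), the legal atom $T(x_1,x_3,x_4)$ yields $R_T(x_1,x_3)$, a two-variable atom on universals. With further $\forall\exists$ pairs you get atoms on $\{x_1,x_3\}$, $\{x_1,x_3,x_5\}$, \dots, which cannot all be the uniform block. So your claim that the leftover matrix contains only new atoms on $x_1,\dots,x_n$ or on single variables is unfounded.

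Omitting $x_e$ from the guard $W_e(\ys_e)$ also severs the link to the chosen witness. A later atom of category (iii) may mention $x_e$, so the conjunct for a later existential must quantify $x_e$ universally. Without a guard over a tuple containing $x_e$, that conjunct demands witnesses for \emph{every} value of $x_e$, which a model of $\phi$ need not supply.

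The fallback (uniform variables plus the last variable) fails for the opposite reason. Such a predicate cannot abbreviate a subformula with two or more free non-uniform variables, so the ``expand by truth sets'' direction breaks. Finally, predicates that replace atoms occurring under both polarities need biconditional definitions, not one-directional implications.

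The missing idea is to let the uniform block grow, which is how the paper proceeds:
\begin{itemize}
\item Pad the prefix so that it alternates $\forall\exists$ after $\forall\xs$, and never introduce a predicate at a universal position.
\item Skolemise one $\forall\exists$ pair at a time. Let $\xs'=\langle x_1,\dots,x_{n+2}\rangle$, let $\chi$ be the remainder, let $\DDD$ be the atoms with variables among $\xs'$, and let $\chi^+$ replace each $\alpha\in\DDD$ by a full-arity atom $R_\alpha(\xs')$.
\item Take $\mu=\forall\xs\,\forall x_{n+1}\,\exists x_{n+2}\,\bigl(R_\chi(\xs')\wedge\bigwedge_{\alpha\in\DDD}(\alpha\leftrightarrow R_\alpha(\xs'))\bigr)$ and $\nu=\forall\xs'\,\bigl(R_\chi(\xs')\rightarrow\chi^+\bigr)$.
\end{itemize}
Each $\alpha\in\DDD$ is legal in $\mu$, because an atom of $\phi$ on $x_1,\dots,x_{n+2}$ has at most one variable, is uniform, or ends at the existential $x_{n+2}$. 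The new atoms of $\mu$ also end at $x_{n+2}$.

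In $\nu$, every new atom has exactly the variables $\xs'$, while the untouched atoms of $\chi$ keep their category (i) or (iii) status. Hence $\nu\in\DMK{}$ with uniform block $\xs'$ and one alternation fewer, and induction gives $\cF(\phi)=\mu\wedge\cF(\nu)$. The guard $R_\chi(\xs')$ contains the witness $x_{n+2}$, which is exactly what ties later conjuncts to the chosen witnesses. It is also the uniform atom of $\nu$, which removes your universal-step obstacle.
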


\begin{proof}
We show the rewriting for a single $\DMK{}$ sentence; the extension to
$\DMDK{}$ is straightforward.

Let $\phi\in\DMK{}$ and let $\Sigma=\langle\Rels,\Cons\rangle$ be the
signature of $\phi$.
By converting $\phi$ to the prenex form~\eqref{prenex-DMK} and
introducing dummy quantifiers whenever necessary, we may assume that
\tightdisplay{\[
  \phi
  \;=\;
  \forall\xs\;
  \forall x_{n+1}\;
  \exists x_{n+2}\;
  \cdots\;
  \forall x_{m-1}\;
  \exists x_m\;
  \psi,
\]}%
where $\xs=\langle x_1,\dots,x_n\rangle$ are the uniform variables,
$m-n>0$ is even, the quantifier binding $x_{n+i}$ is universal for
odd $i$ and existential for even $i$, and $\psi$ is quantifier-free.

We proceed by induction on $m-n$.
If $m-n=2$, then
\(
  \phi
  =
  \forall x_1\cdots\forall x_{n+1}\;
  \exists x_{n+2}\;\psi
\)
is already in $\SNF{}$, and we set
\(
  \cF(\phi)\coloneq\phi.
\)
Suppose now that $m-n>2$.
Let
\(
  \xs'=\langle x_1,\dots,x_{n+2}\rangle
\)
and
\tightdisplay{\[
  \chi[\xs']
  \;\coloneq\;
  \forall x_{n+3}\;
  \exists x_{n+4}\;
  \cdots\;
  \forall x_{m-1}\;
  \exists x_m\;
  \psi.
\]}%
Thus,
\(
  \phi
  =
  \forall\xs\;
  \forall x_{n+1}\;
  \exists x_{n+2}\;
  \chi.
\)

Define
\(
  \DDD
  \coloneq
  \bigl\{
    \alpha\in\Atoms(\psi)
    \mid
    \Vars(\alpha)\subseteq\{x_1,\dots,x_{n+2}\}
  \bigr\}.
\)
We extend $\Sigma$ to a signature $\Sigma^+$ by introducing a fresh
$(n+2)$-ary relation symbol $R_\chi$ and, for every
$\alpha\in\DDD$, a fresh $(n+2)$-ary relation symbol $R_\alpha$.
Let $\chi^+$ be obtained from $\chi$ by replacing every occurrence of
each atom $\alpha\in\DDD$ by $R_\alpha(\xs')$.
The new atoms use precisely the variables
$\xs'=\langle x_1,\dots,x_{n+2}\rangle$, in this order, while the
subscript $\alpha$ records the original relation symbol and its full
sequence of arguments, including repetitions and constants.

Over $\Sigma^+$, define
\tightdisplay{\[
  \mu
  \;\coloneq\;
  \forall\xs\;
  \forall x_{n+1}\;
  \exists x_{n+2}\;
  \Bigl(
    R_\chi(\xs')
    \wedge
    \bigwedge\nolimits_{\alpha\in\DDD}
    \bigl(
      \alpha\leftrightarrow R_\alpha(\xs')
    \bigr)
  \Bigr)
  \quad\text{and}\quad
  \nu
  \;\coloneq\;
  \forall\xs'\;
  \bigl(
    R_\chi(\xs')
    \rightarrow
    \chi^+
  \bigr).
\]}%

The above sentences provide an intermediate rewriting step in the
construction of $\cF(\phi)$.
We return to $\cF(\phi)$ shortly and first verify
the correctness of this step.

We first show that
\(
  \mu\wedge\nu\models\phi.
\)
By the definition of $\nu$, we may replace the conjunct
$R_\chi(\xs')$ in the matrix of $\mu$ by $\chi^+$ when deriving a
consequence of $\mu\wedge\nu$.
It follows that
\tightdisplay{\[
  \mu\wedge\nu
  \;\models\;
  \forall\xs\;
  \forall x_{n+1}\;
  \exists x_{n+2}\;
  \Bigl(
    \chi^+
    \wedge
    \bigwedge\nolimits_{\alpha\in\DDD}
    \bigl(
      \alpha\leftrightarrow R_\alpha(\xs')
    \bigr)
  \Bigr).
\]}%

For every $\alpha\in\DDD$, the equivalence
\(
  \alpha\leftrightarrow R_\alpha(\xs')
\)
allows each occurrence of $R_\alpha(\xs')$ in $\chi^+$ to be replaced
by the corresponding atom $\alpha$.
By the definition of $\chi^+$, performing these replacements reconstructs
$\chi$.
It follows that
\tightdisplay{\[
  \mu\wedge\nu
  \;\models\;
  \forall\xs\;
  \forall x_{n+1}\;
  \exists x_{n+2}\;
  \chi.
\]}%
The sentence on the right is precisely
\(
  \phi.
\)
Hence, $\mu\wedge\nu \models \phi$.

Conversely, suppose that $\fA\models\phi$.
We expand $\fA$ to a $\Sigma^+$-structure by interpreting
\tightdisplay{\[
  R_\chi^{\fA}
  \;\coloneq\;
  \bigl\{
    \as\in A^{n+2}
    \mid
    \fA\models\chi[\as]
  \bigr\},
  \qquad
  R_\alpha^{\fA}
  \;\coloneq\;
  \bigl\{
    \as\in A^{n+2}
    \mid
    \fA\models\alpha[\as/\xs']
  \bigr\}
  \quad\text{for $\alpha\in\DDD$.}
\]}%
By construction, the resulting expansion satisfies
\tightdisplay{\[
  \eta
  \;\coloneq\;
  \forall\xs'\;
  \bigl(
    R_\chi(\xs')\leftrightarrow\chi
  \bigr)
  \qquad\text{and}\qquad
  \theta
  \;\coloneq\;
  \forall\xs'\;
  \bigwedge\nolimits_{\alpha\in\DDD}
  \bigl(
    \alpha\leftrightarrow R_\alpha(\xs')
  \bigr).
\]}%
Thus, every model of $\phi$ admits a $\Sigma^+$-expansion satisfying
$\phi\wedge\eta\wedge\theta$.
It remains to show that
\(
  \phi\wedge\eta\wedge\theta\models\mu\wedge\nu.
\)

Since
\(
  \phi
  =
  \forall\xs\;
  \forall x_{n+1}\;
  \exists x_{n+2}\;
  \chi,
\)
we have
\tightdisplay{\[
  \phi\wedge\eta
  \models
  \forall\xs\;
  \forall x_{n+1}\;
  \exists x_{n+2}\;
  R_\chi(\xs').
\]}%
Together with $\theta$, this yields
\(
  \phi\wedge\eta\wedge\theta\models\mu.
\)

Moreover, $\theta$ allows every occurrence of an atom
$\alpha\in\DDD$ in $\chi$ to be replaced by $R_\alpha(\xs')$.
Hence, by the definition of $\chi^+$,
\tightdisplay{\[
  \theta
  \models
  \forall\xs'\;
  \bigl(\chi\leftrightarrow\chi^+\bigr).
\]}%
Combining this with $\eta$ yields
\(
  \eta\wedge\theta\models\nu.
\)
Therefore,
\(
  \phi\wedge\eta\wedge\theta\models\mu\wedge\nu.
\)
This establishes the correctness of the rewriting step.
We now return to the construction of $\cF(\phi)$.

The sentence $\mu$ belongs to $\SNF{}$.
The sentence $\nu$ need not belong to $\SNF{}$, but it is again a
$\DMK{}$ sentence, now with uniform variables $\xs'$.
Moreover, $\nu$ has strictly smaller quantifier alternation depth than $\phi$,
so the induction hypothesis applies to it.
We set
\(
  \cF(\phi)
  \coloneq
  \mu\wedge\cF(\nu).
\)

Finally, we introduce dummy universal quantifiers as necessary to ensure that all conjuncts have the same number of them, as required by Definition~\ref{def:skolem-class}.
This establishes the reduction.
\end{proof}

\paragraph{A simple decidability argument.}
Observe that sentences in $\SNF{}$ are, in fact, conjunctions of sentences in the solvable Skolem class.
Combining this observation with Proposition~\ref{prop:reduction} yields a simple decidability argument for $\DMDK{}$, based on the earlier work of Goldfarb~\cite{Gol93}.

Using a probabilistic method, Goldfarb establishes the finite model property and decidability of the solvable Skolem class.
Although his result is technically stated for a single sentence in the solvable Skolem class, whereas $\SNF{}$ consists of conjunctions of such sentences, his argument extends seamlessly to this broader setting.
Thus, we obtain the following result.

\begin{corollary}
  The class \DMDK{} has the finite model property. The satisfiability problem for the class \DMDK{} is decidable.
\end{corollary}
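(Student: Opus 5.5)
The plan is to reduce the corollary to Goldfarb's finite model property for the solvable Skolem class via Proposition~\ref{prop:reduction}. Given a satisfiable $\phi\in\DMDK{}$, we apply the polynomial-time map $\cF$ to obtain $\cF(\phi)\in\SNF{}$. Proposition~\ref{prop:reduction} preserves satisfiability over every fixed non-empty domain. Hence $\phi$ has a model of size $N$ if and only if $\cF(\phi)$ has one, and in particular $\phi$ has a finite model if and only if $\cF(\phi)$ does. It therefore suffices to prove the finite model property for $\SNF{}$.

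The central step is to check that every satisfiable $\SNF{}$ sentence has a finite model. Write $\cF(\phi)=\bigwedge_{t\in\cI}\forall\xs\,\exists x_{m+1}\,\psi_t$. Each conjunct $\phi_t$ lies in the solvable Skolem class with $\ell=m$: every atom of $\psi_t$ is in $\AAA_t$, $\BBB_t$ or $\CCC_t$, which matches conditions (i)--(iii) of that class. The conjuncts may have different uniformities $n_t$, so the conjunction is generally not itself a single solvable Skolem sentence. I would therefore rerun Goldfarb's probabilistic construction for the whole conjunction rather than cite his theorem directly.

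That construction proceeds in three steps. First, from a model of $\cF(\phi)$ we extract the realised $1$-types and, for each conjunct, the witness outer-types. Second, we take a large finite domain and assign each element a $1$-type, giving each realised $1$-type many copies. Third, for each $t$ and each tuple $\as$, we pick a witness $b$ at random among the elements of the required $1$-type and set the $\CCC_t$-atoms on $\as b$ according to a pattern copied from the original model.

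The main obstacle is consistency. Atoms in $\CCC_t$ contain $x_{m+1}$ together with some subset of $\xs$, and two different requests $(t,\as)$ and $(t',\as')$ may demand values for the same ground atom with conflicting truth values. Goldfarb's counting argument shows that for a large enough domain, some random choice produces no conflicts. I would adapt that argument by treating requests from all $t\in\cI$ simultaneously. Conflicts only arise when chosen witnesses coincide on overlapping argument sets, and the number of requests grows only polynomially with the number of conjuncts, so the union bound still holds. $\BBB_t$-atoms range over exactly $\{x_1,\dots,x_{n_t}\}$, and since different $n_t$ give disjoint arity patterns, they can be fixed first, for each $n_t$-tuple of distinct elements, by copying a realised outer-type. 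Atoms with repeated elements or constants are then handled by choosing the domain so that constants and repeated tuples reuse patterns from the original model.

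Given the finite model property, decidability follows in the standard way. We run a semi-decision procedure for unsatisfiability (first-order validity) in parallel with an enumeration of finite structures checked against $\phi$. One of the two must terminate, so satisfiability for $\DMDK{}$ is decidable.
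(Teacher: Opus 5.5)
Your overall plan matches the paper's. Proposition~\ref{prop:reduction} transfers models over every fixed domain to $\SNF{}$. Each conjunct is a solvable Skolem sentence with $\ell=m$. Goldfarb's argument is then rerun on the whole conjunction, and decidability follows from the finite model property as you say.

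The gap is in the central step you chose to redo. The construction you sketch is not Goldfarb's, and its consistency claim is false. If each request $(t,\as)$ draws a witness $b$ and then \emph{forces} the $\CCC_t$-atoms on $\as b$, conflicts are the norm rather than a rare event. Fix an element $a$. Roughly $|\cI|\cdot m\cdot p^{m-1}$ requests contain $a$, but only about $p$ elements can serve as witnesses. So for $m\ge2$ many of these requests share a witness $b$, and all of them write ground atoms over $\{a,b\}$.

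For instance, take $\psi_t=S(x_1,x_2)\leftrightarrow R(x_1,x_3)$ with $m=2$. Suppose $S(a,\cdot)$ holds on linearly many elements and fails on linearly many, and all these requests need witnesses of the same $1$-type. Then some request demanding $R(a,b)$ and some request demanding $\neg R(a,b)$ pick the same $b$ with probability $1-e^{-\Theta(p)}$. So the per-request failure probability does not tend to $0$ as $p\to\infty$, and no union bound helps; the relevant parameter is the domain size, not the number of conjuncts.

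Your two-phase scheme has the same defect. A ground atom such as $R(a,b)$ can be an instance both of a $\BBB_t$-atom $R(x_1,x_2)$ with $n_t=2$ and of a $\CCC_{t'}$-atom $R(x_1,x_{m+1})$. Fixing the $\BBB$-atoms first therefore already clashes with later forced witnesses.

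What Goldfarb (following Gurevich and Shelah) and the paper's Subsection~\ref{subsec:criterion} do instead is never force facts. After choosing random $1$-types, every set of at most $m+1$ unnamed elements receives an outer-type drawn uniformly from a satisfiability witness $\cW$. By the independence property for outer-types, these choices cannot conflict.

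Witnesses are then \emph{found}, not imposed. Fix $(t,\as)$ and condition on the type of $\as$. The events that the various $b\notin\as$ realise a prescribed extension depend on disjoint random choices, and each has probability at least $\delta=|\cW|^{-(s_\gamma+1)}$. So failure has probability at most $(1-\delta)^{p-m}$, which does survive the union bound over $|\cI|\cdot p^m$ requests.

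The price is that the random $m$-type of $\as$ is typically not realised in the original model, so one needs an extension for \emph{every} $\cW$-requirement. This is where the \DMK{} restriction enters, via Claim~\ref{claim:type-surgery}: the $\AAA_t\cup\BBB_t$-atoms depend only on the $1$-types and on the outer-type of the uniform variables. Since this works conjunct by conjunct, differing uniformities $n_t$ are harmless. That is why the paper can treat the extension of Goldfarb's argument to conjunctions as seamless.
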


\section{The Complexity of \DMDK{}: A Simpler Proof}
\label{sec:complexity}

In this section, we show the optimal \NExpTime{} upper bound on the complexity of \DMK{} and \DMDK{}.
As this result is not new, we begin with a brief overview of the existing literature.

\begin{itemize}[nosep]
  \item Dreben and Goldfarb~\cite{DG79} showed that the solvable Skolem class (referred to as Class~2.4 therein) has the finite model property, with a doubly exponential bound on the size of constructed models.
  Later, Goldfarb~\cite{Gol93} provided an alternative proof by extending the probabilistic method of Gurevich and Shelah~\cite{GS83} for the G\"odel class.\footnotemark{}
  However, he proved only that, as the domain size tends to infinity,
  the probability that a random structure satisfies a given sentence converges to~$1$.
  A closer inspection of the argument yields at best a doubly exponential upper bound on the required domain size.
  Moreover, none of~\cite{DG79,Gol93} addresses the complexity of satisfiability, and only a naïve \TwoNExpTime{} upper bound can be inferred.
  The optimal \NExpTime{} upper bound for the solvable Skolem class follows instead from the corresponding upper bound for \DMK{} and is stated as a corollary in~\cite{FKM24}.

  It also appears that standard proofs of the \NExpTime{} upper bound for the G\"odel class (see, e.g.,~\cite{BGG,PH23}) do not extend to the solvable Skolem class.
  In particular, these proofs exploit the fact that G\"odel-class sentences admit only two universal quantifiers.

  \item A recent work~\cite{FKM24} establishes that \DMK{} has the exponential model property and is therefore \NExpTime{}-complete.
  However, the approach is technically involved.
  In the absence of a suitable Skolemisation procedure, the authors work directly with \DMK{} sentences featuring quantifier alternation, handled via a game-theoretic approach (the so-called \emph{satisfiability game} in~\cite{FKM24}).
  Satisfiability is characterised in terms of the existence of a winning strategy in this game.
  To obtain exponential bounds, equivalence relations on game positions are introduced to yield a compact representation of strategies.
  From such representations, finite models are constructed using technically intricate arguments: construction is based on so-called paradoxical tournament graphs with vertices labelled by game positions.
\end{itemize}

\footnotetext{Goldfarb~\cite{Gol89} also developed a probabilistic proof for a different fragment, called the \emph{Maslov--Krom class}. This fragment is unrelated to the class \MK{} and should not be confused with it.}

Our proof avoids the technical complications of~\cite{FKM24} and yields the optimal \NExpTime{} upper bound via a substantially simpler argument.
Our plan is as follows.

In Subsection~\ref{subsec:prelim-reductions}, we begin with two
preliminary reductions that simplify the subsequent arguments.
In Subsection~\ref{subsec:criterion}, we provide a criterion for satisfiability of sentences in \SNF{} in terms of the existence of a combinatorial certificate, called a \emph{satisfiability witness}, and show that this condition is both necessary and sufficient.
For the sufficiency direction, we construct finite random models by refining Goldfarb's approach with a more detailed analysis of probabilities.

A purely random construction is not sufficient, as the size of satisfiability witnesses is, in general, doubly exponential.
Therefore, in Subsection~\ref{sec:exponential-certificates}, we show that every satisfiable sentence admits a witness of exponential size.
This part of the proof is inspired by~\cite{FKM24}; however, thanks to our Skolemisation procedure, ideas of~\cite{FKM24} are applied in a more direct setting.

\subsection{Preliminary reductions}
\label{subsec:prelim-reductions}

\paragraph{Standard names.}
In a structure $\fA$, distinct constants may be interpreted by the same
element, $c_1^\fA=c_2^\fA$, which is inconvenient in the subsequent
proofs.
Hence, from this point until the end of the section, we adopt the
\emph{standard name assumption}: distinct constants denote distinct
elements, and we identify each constant with its interpretation, that
is, $c^\fA=c$.
This assumption is without loss of generality for both satisfiability
and model-size bounds.
Indeed, whenever $c_1^\fA=c_2^\fA$, the constant $c_2$ can be replaced
by $c_1$ throughout $\phi$ without affecting its satisfaction or the
size of the model.
Accordingly, we partition the domain as
\(
  A=A_0\mathbin{\dot\cup}\Cons,
\)
where $A_0$ is the set of \emph{unnamed} elements and $\Cons$ is the
tuple of constants in the signature.

\paragraph{Simple assignments.}
To avoid case distinctions, decidability proofs for the G\"odel class
(see, e.g.,~\cite{GS83}) typically begin with a simple reduction that
allows one to consider only pairs of distinct elements.
Namely, an input sentence
\(
  \forall x_1,x_2\,\exists \ys\,\psi[x_1,x_2,\ys],
\)
with $\psi$ quantifier-free,
is transformed into
\tightdisplay{\[
  \forall x_1,x_2\,\exists \ys\,\exists \zs\,
  \bigl(
    x_1 \neq x_2
    \rightarrow
    \bigl(
      \psi[x_1,x_2,\ys]
      \wedge
      \psi[x_1,x_1,\zs]
    \bigr)
  \bigr),
\]}%
where $\zs$ is a tuple of fresh variables of the same length as $\ys$.
Over domains of size at least two, this transformation preserves
satisfiability.
Its advantage is that the subsequent argument need only consider the
case $x_1 \neq x_2$, since the case $x_1=x_2$ is accounted for by the second conjunct $\psi[x_1,x_1,\zs]$.
Equality is used only as part of the reduction and is not allowed
to occur in $\psi$.

We generalise this idea to $\SNF{}$ by introducing
\emph{simple} assignments, which map distinct variables to pairwise
distinct unnamed elements.
The reduction follows the same general scheme as for the G\"odel class:
it considers all possible \emph{equality patterns} among the universal
variables and constants.
Consequently, the subsequent arguments become significantly cleaner, as
it suffices to consider only simple assignments.

However, unlike in the G\"odel-class case, carrying out the reduction is
not immediate.
Indeed, with an unbounded number of universal variables, the
number of equality patterns is exponential in the sentence length.
We address this issue by carefully controlling the relevant syntactic
parameters and expressing the resulting model-size bounds in terms of
the length of the input sentence.
We formalise this convention as follows.

Given a structure $\fA$ and $m\in\N$, we write
$\langle A\rangle^m$ for the set of all $m$-tuples $\as\in A^m$
whose entries are pairwise distinct and unnamed.
The following lemma establishes a reduction that allows satisfaction of sentences in $\SNF{}$ to
be verified by considering only tuples from $\langle A\rangle^m$.

\begin{lemma}\label{lemma:reduction-to-simple-form}
  Every sentence $\phi_{\rm in}\in\SNF{}$ can be reduced to a sentence
  $\phi\in\SNF{}$ over the same signature such that,
  with the notation of Definition~\ref{def:skolem-class} applied to
  $\phi$, the following hold:
  \begin{enumerate}[nosep]
    \item\label{lemma:reduction-to-simple-form:item1}
    The sentences $\phi_{\rm in}$ and $\phi$ are logically equivalent.

    \item\label{lemma:reduction-to-simple-form:item2}
    For every structure $\fA$ with at least $m$ unnamed elements,
    \tightdisplay{\[
      \fA\models\phi
      \quad\Longleftrightarrow\quad
      \text{for every $t\in\cI$ and
      $\as\in\langle A\rangle^m$, there exists $b\in A$ such that
      $\fA\models\psi_t[\as,b]$.}
    \]}%
  \end{enumerate}
  Moreover, the reduction does not introduce new variables and the following bounds hold:
  \tightdisplay{\[
    |\cI|
      =
      2^{\,\cO(|\phi_{\rm in}|\log|\phi_{\rm in}|)},
    \qquad
    \max_{t\in\cI}|\psi_t|
      =
      \cO(|\phi_{\rm in}|),
    \qquad
    \Bigl|
      \bigcup\nolimits_{t\in\cI\colon n_t\ge2}\AAA_t
    \Bigr|
      =
      \cO(|\phi_{\rm in}|^2).
  \]}%
\end{lemma}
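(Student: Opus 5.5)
We will build $\phi$ conjunct by conjunct.
Fix a conjunct $\phi_{{\rm in},t}=\forall\xs\,\exists x_{m+1}\,\psi_t$ of $\phi_{\rm in}$.
An assignment of $\xs$ into $A$ is described by an \emph{equality pattern}. Such a pattern consists of:
\begin{itemize}[nosep]
  \item a partition of $[m]$ into blocks;
  \item for each block, a label that is either a constant $c\in\Cons$ or ``unnamed'';
  \item the requirement that distinct blocks labelled ``unnamed'' receive distinct elements, and that no two blocks receive the same constant.
\end{itemize}
For each pattern $P$ we form $\psi_t^P$ from $\psi_t$ by a substitution. Variables in a block labelled $c$ are replaced by $c$. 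Variables in an unnamed block are replaced by one representative variable of that block, say the one with least index. We then rename the surviving representatives, in increasing order of index, to $x_1,\dots,x_k$. Finally we pad with dummy variables $x_{k+1},\dots,x_m$ that do not occur, and we keep $x_{m+1}$.

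We set $\phi\coloneq\bigwedge_{t,P}\forall\xs\,\exists x_{m+1}\,\psi_t^P$, so $\cI$ is the set of pairs $(t,P)$. No new variables are introduced, and $|\psi_t^P|\le|\psi_t|$.

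\textbf{Equivalence (item~1).}
Each new conjunct is an instance of the old one under a substitution of constants and variables. Hence $\phi_{\rm in}\models\phi$.

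For the converse, take any assignment $\bs$ of $\xs$. It realises exactly one pattern $P$. Let $\as$ be its representatives, and extend $\as$ by arbitrary values on the dummy variables. Then a witness for $\psi_t^P[\as,\cdot]$ is a witness for $\psi_t[\bs,\cdot]$. Thus $\phi\models\phi_{\rm in}$.

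\textbf{Membership in $\SNF{}$.}
We must check that each $\forall\xs\exists x_{m+1}\psi_t^P$ is in $\DMK{}$. Merging variables and replacing variables by constants only shrinks $\Vars(\alpha)$, so conditions (i) and (iii) are preserved. For condition (ii), the uniform variables $x_1,\dots,x_{n_t}$ map to the set of surviving representatives among them. After renaming these form an initial segment $x_1,\dots,x_{n'}$, because representatives are taken by least index and the uniform variables come first. If $n'\le1$, then (ii) collapses into (i). So every atom still satisfies one of the three conditions.

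\textbf{Item 2.}
The direction from left to right is immediate from $\fA\models\phi$. For the converse, suppose every $\as\in\langle A\rangle^m$ has a witness $b$ with $\fA\models\psi_{(t,P)}[\as,b]$, and take an arbitrary assignment $\bs$ realising pattern $P$. The representatives of $\bs$ are pairwise distinct and unnamed. Because $\fA$ has at least $m$ unnamed elements, we can fill the dummy positions with fresh distinct unnamed elements, obtaining a tuple in $\langle A\rangle^m$. Its witness is then a witness for $\bs$, since the dummy variables do not occur in $\psi_t^P$.

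\textbf{Bounds.}
Here $m$, $|\Cons|$ and $|\psi_t|$ are all at most $|\phi_{\rm in}|$, and the number of original conjuncts is at most $|\phi_{\rm in}|$. The number of patterns is at most $(m+|\Cons|)^m=2^{\cO(|\phi_{\rm in}|\log|\phi_{\rm in}|)}$, which gives the bound on $|\cI|$.

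The step I expect to be the main obstacle is the last bound, on $\bigl|\bigcup_{t:n_t\ge2}\AAA_t\bigr|$. Atoms in $\AAA_t$ have at most one variable, so after renaming they have the form $R(\dots)$ with all variable positions filled by a single $x_i$ or by constants.

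The difficulty is that the number of distinct such atoms could be exponential. To control it, I plan to use the fact that when $n_t\ge2$ the uniform block has at least two surviving representatives, which restricts how the rest of the pattern can look. I would try to choose canonical renaming conventions so that an $\AAA$-atom is determined by two pieces of data:
\begin{enumerate}[nosep]
  \item an original atom of $\phi_{\rm in}$;
  \item the index $i$ of its single variable, together with which original variable occurrences collapse to $x_i$ and which to constants.
\end{enumerate}

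If this fails in full generality, I would refine the construction. Whenever an atom of $\psi_t$ collapses to a single variable $x_i$ with constants, it would be replaced by a fresh-free canonical form, for instance by always sending the collapsed non-representative occurrences to $x_i$. I would then count the resulting atoms as (original atom) $\times$ (index $i\le m$), giving $\cO(|\phi_{\rm in}|^2)$.
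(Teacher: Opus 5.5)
Up to the last bound, your argument is the paper's. You use the same equality patterns (the representative of a block is its constant if it has one, otherwise its least-index variable) and the same instantiated conjuncts. For item~2 you use the same device of padding the representatives with fresh unnamed elements. Your explicit renaming of the surviving representatives to an initial segment is handled in the paper by a remark, and it is harmless.

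The genuine gap is the bound $\bigl|\bigcup_{t\colon n_t\ge2}\AAA_t\bigr|=\cO(|\phi_{\rm in}|^2)$, which you leave unproved and even suspect may fail. Your fallback does not help. If ``sending the collapsed non-representative occurrences to $x_i$'' means overriding positions that the pattern maps to constants, it changes the meaning of $\psi_t^P$ and breaks the equivalence of item~1. Otherwise it is just the substitution you already perform.

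The missing observation is short and needs no change to your construction. Take an output conjunct $(t,P)$ whose uniformity is at least~$2$. Every atom in it with at most one variable is the image of an atom in the \emph{input} set $\AAA_t$, for two reasons:
\begin{itemize}
\item images of input $\BBB_t$-atoms contain exactly the $n'\ge2$ surviving uniform representatives, because a block containing a uniform variable has a uniform least-index member;
\item images of $\CCC_t$-atoms still contain $x_{m+1}$.
\end{itemize}
An input atom $\alpha\in\AAA_t$ with $\Vars(\alpha)\subseteq\{x_i\}$ has only one variable. Its image under the substitution followed by your renaming is therefore $\alpha[v/x_i]$ for some $v\in\{x_1,\dots,x_m\}\cup\Cons$, whatever the rest of $P$ looks like. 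So each such atom has at most $m+|\Cons|$ images, and the union has size at most $\sum_t|\AAA_t|\cdot(m+|\Cons|)=\cO(|\phi_{\rm in}|^2)$.

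This shows your data item about ``which occurrences collapse to $x_i$ and which to constants'' is vacuous, since all occurrences of the single variable share one image. It also shows your final count (original atom times index $i\le m$) omits the $|\Cons|$ constant targets.

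The exponential blow-up you fear is real, but only for collapsed $\BBB_t$-atoms, such as $R(x_1,\dots,x_{n_t})$ with arbitrary subsets of positions sent to constants. That happens exactly when at most one uniform representative survives. The conjunct then has uniformity~$0$ and is excluded from the union. This, rather than a restriction on the rest of the pattern, is the role of the condition $n_t\ge2$.
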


\begin{proof}
We fix an input sentence $\phi_{\rm in}\in\SNF{}$, and let
$\Sigma=\langle\Rels,\Cons\rangle$ be its signature.
For the duration of this proof, the notation of
Definition~\ref{def:skolem-class} refers to $\phi_{\rm in}$, rather
than to the resulting sentence $\phi$ as in the statement of the
lemma.
Hence,
\tightdisplay{\[
  \phi \;=\; \bigwedge\nolimits_{t \in \cI} \forall \xs\; \exists x_{m+1}\; \psi_t[\xs,x_{m+1}].
\]}%

Let $e$ be an equivalence relation on $\xs\cup\Cons$.
We call $e$ an \emph{equality pattern} if no two distinct constants
are $e$-equivalent, and write $\cE$ for the set of all
equality patterns.

For $e\in\cE$ and $a \in \xs \cup \Cons$, we define the \emph{$e$-representative} of $a$ as follows.
If the $e$-class $[a]_e$ contains a constant, then this constant is unique, and we take it as the $e$-representative of $a$.
Otherwise, we take the variable $x_i\in[a]_e$ with the least index~$i$.
We denote by $\rho_e$ the substitution that maps each variable in $\xs$ to its $e$-representative and fixes $x_{m+1}$.

For $t\in\cI$ and $e\in\cE$, let $\psi_{t,e}$ be obtained from
$\psi_t$ by applying $\rho_e$.
We define
\tightdisplay{\[
  \phi
  \;\coloneq\;
  \bigwedge\nolimits_{(t,e)\in\cI\times\cE}
  \forall\xs\;
  \exists x_{m+1}\;
  \psi_{t,e}[\xs,x_{m+1}].
\]}%

We claim that $\phi\in\DMDK{}$.
To see this, fix $t\in\cI$ and $e\in\cE$.
Write $\phi_{t,e} \coloneq \forall\xs\; \exists x_{m+1}\; \psi_{t,e}$.
Recall that $x_1,\dots,x_{n_t}$ are the uniform variables of
$\phi_t$.
Consider those among them that are their own $e$-representatives, that
is, the variables $x_i$ such that
$i \le n_t$
and
\(
  \rho_e(x_i)=x_i.
\)
If there are at least two such variables, select them as the uniform variables of $\phi_{t,e}$.
Otherwise, take $\phi_{t,e}$ to have no uniform variables.
Let $n_{t,e}$ denote the number of the selected uniform variables.
This selection does not necessarily satisfy the convention
imposed in Definition~\ref{def:skolem-class}, since it
needs not form an initial segment of $x_1,\dots,x_m$.
This can be remedied, when necessary, by suitably renaming the
universally quantified variables in $\phi_{t,e}$.

We now define the partition
\(
  \Atoms(\psi_{t,e})
  =
  \AAA_{t,e}\cup\BBB_{t,e}\cup\CCC_{t,e}.
\)
If $n_{t,e}=0$, we set
\tightdisplay{\[
  \AAA_{t,e}
  \coloneq
  \rho_e(\AAA_t)\cup\rho_e(\BBB_t),
  \qquad
  \BBB_{t,e}\coloneq\emptyset,
  \qquad
  \CCC_{t,e}\coloneq\rho_e(\CCC_t).
\]}%
If $n_{t,e}\ge2$, we instead set
\tightdisplay{\[
  \AAA_{t,e}\coloneq\rho_e(\AAA_t),
  \qquad
  \BBB_{t,e}\coloneq\rho_e(\BBB_t),
  \qquad
  \CCC_{t,e}\coloneq\rho_e(\CCC_t).
\]}%

In both cases, every atom in $\AAA_{t,e}$ contains at most one
variable, every atom in $\BBB_{t,e}$ contains precisely the declared
uniform variables, and every atom in $\CCC_{t,e}$ contains the
existential variable $x_{m+1}$.
Thus, every conjunct of $\phi$ belongs to $\DMK{}$, and hence
$\phi\in\DMDK{}$.

It is straightforward to verify that $\phi_{\rm in}$ and $\phi$ are logically equivalent.
Indeed, since the variables in $\xs$ are universally quantified,
$\phi_{\rm in}$ already accounts for all possible identifications among variables and substitutions by constants.
The constructed sentence $\phi$ merely makes this fact explicit through considering all equality patterns.

This explicit treatment of equality patterns allows satisfaction of
$\phi$ over any structure $\fA$ with at least $m$ unnamed elements to
be verified by considering only tuples from $\langle A\rangle^m$.
To see this, take a tuple $\as \in A^m$ and write $\as = \langle a_1,\dots,a_m \rangle$.
Let $e$ be the equivalence relation generated by
\tightdisplay{\[
\bigl\{ (x_i,x_j) \mid i,j \in [m],\, a_i=a_j \bigr\}
\;\cup\;
\bigl\{ (x_i,c) \mid i \in [m],\, c \in \Cons,\, a_i=c \bigr\}.
\]}%
Distinct constants cannot be identified because we work under the
standard name assumption.
Hence, $e$ is an equality pattern.
We construct a tuple $\as'\in\langle A\rangle^m$ in two steps.
First, for every $e$-representative variable $x_i$, its $e$-class
contains no constant, and we set the $i$th entry of $\as'$ as in $\as$.
These entries are pairwise distinct and unnamed.
Second, using the assumption that $\fA$ has at least $m$ unnamed
elements, we fill the remaining positions with fresh unnamed elements
so that all entries of $\as'$ are pairwise distinct.
The definition of $\rho_e$ then ensures that
\tightdisplay{\[
  \fA \models \psi_{t,e}[\as', b]
  \quad\Longleftrightarrow\quad
  \fA \models \psi_t[\as, b]
  \qquad
  \text{for every $b \in A$.}
\]}%
Thus, witnesses for all tuples in $\langle A\rangle^m$ suffice to 
provide witnesses for arbitrary tuples.

It remains to bound the increase in the relevant syntactic parameters.
The first two bounds of the lemma are immediate.
The number of conjuncts of $\phi$ is
\(
  |\cI\times\cE|
  =
  2^{\cO(|\phi_{\rm in}|\log|\phi_{\rm in}|)},
\)
and every introduced conjunct satisfies
\(
  |\psi_{t,e}|=\cO(|\phi_{\rm in}|).
\)
We now establish the third bound, namely, that the total number of
distinct $\AAA$-atoms occurring in conjuncts of uniformity at least~$2$
remains quadratic in the size of the input sentence.

Fix $t\in\cI$.
For every $e\in\cE$ with $n_{t,e}\ge2$, we have
\(
  \AAA_{t,e}=\rho_e(\AAA_t).
\)
It therefore suffices to bound the number of distinct images of each
atom $\alpha\in\AAA_t$ under the substitutions $\rho_e$.

Suppose that $\Vars(\alpha)\subseteq\{x_i\}$ for some $i\in[m]$.
For every $e \in \cE$, the substitution $\rho_e$ maps $x_i$
either to some variable $x_j$, with $j\in[m]$, or to a constant in
$\Cons$.
Hence, as $e$ ranges over $\cE$, the atom $\alpha$ has at most
$m+|\Cons|$ distinct images.

Consequently,
\tightdisplay{\[
  \Bigg|
    \bigcup_{\substack{(t,e)\in\cI\times\cE\\ n_{t,e}\ge2}}
    \AAA_{t,e}
  \Bigg|
  \le
  \sum_{t\in\cI}
    |\AAA_t|\cdot\bigl(m+|\Cons|\bigr).
\]}%
Since
\(
  \sum_{t\in\cI}|\AAA_t|
  =
  \cO(|\phi_{\rm in}|)
\)
and
\(
  m+|\Cons|
  =
  \cO(|\phi_{\rm in}|),
\)
the desired quadratic bound follows.
\end{proof}

\subsection{A Satisfiability Criterion}
\label{subsec:criterion}

For the rest of this section, we fix an \emph{input sentence}
$\phi_{\rm in}\in\SNF{}$ and let $\phi$ be the sentence obtained from
$\phi_{\rm in}$ by Lemma~\ref{lemma:reduction-to-simple-form}.
Let $\Sigma=\langle\Rels,\Cons\rangle$ denote their common signature.
Unless stated otherwise, we consider only $\Sigma$-structures with at
least $m$ unnamed elements.
From this point on, the notation of Definition~\ref{def:skolem-class}
always refers to the components of $\phi$, while
$\phi_{\rm in}$ is retained only to record the size of the input.
This distinction is important because the reduction incurs an
exponential blow-up, whereas the model-size bounds established below
must be stated in terms of $|\phi_{\rm in}|$ rather than $|\phi|$.
For this purpose, Lemma~\ref{lemma:reduction-to-simple-form} also
records precise bounds on the relevant syntactic parameters of $\phi$.
The significance of these bounds will become clear in
Subsection~\ref{sec:exponential-certificates}.

\paragraph{Satisfiability witnesses.}
Following~\cite{Gol93}, we characterise satisfiability of $\phi$ in terms of the existence of a collection of outer-types satisfying suitable realisability and extension properties.

We first characterise,
in combinatorial terms, when a collection of outer-types corresponds to the set of all $k$-outer-types for $k \le m+1$ realised in a structure in which every realised $1$-type has at least $m+1$ distinct realisations.
This is captured by the notion of $\phi$-realisability.

\begin{definition}\label{definition:phi-realisability}
  A set of outer-types $\cS \subseteq \TTTo(\Sigma)$ is \emph{$\phi$-realisable} if the following conditions hold:
  \begin{itemize}[nosep]
    \item $\cS$ contains a unique $0$-type, and every other outer-type $\tau_\ro \in \cS$ is consistent with it.
    
    \item $\cS$ contains at least one $1$-type.

    \item $\cS$ is closed under automorphisms of outer-types induced by permutations of variables.

    \item For every $k \in [m+1]$ and every $k$-outer-type $\tau_\ro \in \cS$,
    $\tau_\ro \langle x_i \rangle \in \cS$ for every $i \in [k]$.

    \item For every $k \in [m+1]$ and all $1$-types
    $\pi_1,\dots,\pi_k \in \cS$, there exists a $k$-outer-type
    $\tau_\ro \in \cS$ such that $\tau_\ro \langle x_i \rangle = \pi_i$ for every $i \in [k]$.
  \end{itemize}
\end{definition}

Suppose for a moment that we want to construct a model of $\phi$ using
only outer-types from a fixed $\phi$-realisable set $\cS$.
A natural approach is to proceed inductively as follows.

Suppose that a structure $\fA$ has already been constructed.
We select a tuple $\as \in \langle A \rangle^m$,
representing an assignment to the universal variables, and an index
$t \in \cI$ of a conjunct that is unsatisfied for $\as$, that is,
\(
  \fA\not\models
  \exists x_{m+1}\; \psi_t[\as,x_{m+1}].
\)
We then extend $\fA$ in three steps.
First, we introduce a fresh element $b$ to serve as a witness.
Next, we define the $(m+1)$-type of the extended tuple $\as b$ so that
\(
  \fA\models\psi_t[\as,b].
\)
Finally, using $\phi$-realisability of $\cS$,
we complete the structure by choosing
outer-types from $\cS$ for the remaining tuples involving $b$.

For such a construction to succeed, two properties must be satisfied.
First, the chosen $(m+1)$-type must be compatible both with the
$m$-type already realised by $\as$ and with the collection $\cS$ of
allowed outer-types.
Second, it must be possible to assign outer-types from $\cS$ to the remaining
tuples involving $b$ consistently with the facts already defined.

We defer the second issue and focus for now on capturing the first at
the level of types.

An \emph{$\cS$-requirement} is any $m$-type $\tau$ that induces only outer-types from $\cS$.
An \emph{$(\cS,t)$-extension} of $\tau$ is any $(m+1)$-type $\sigma$ that extends $\tau$, satisfies $\psi_t$, and likewise induces only outer-types from $\cS$.
The \emph{$\phi$-extension property} requires that every $\cS$-requirement has an $(\cS,t)$-extension for every index $t \in \cI$.
Formally:

\begin{definition}\label{definition:phi-extension}
  Let $\cS \subseteq \TTTo(\Sigma)$.
  An $m$-type $\tau$ is an \emph{$\cS$-requirement} if
  $\TTTo(\tau) \subseteq \cS$.
  Given $t \in \cI$, an \emph{$(\cS,t)$-extension} of $\tau$
  is an $(m{+}1)$-type $\sigma$ such that
  $\TTTo(\sigma) \subseteq \cS$ and
  $\sigma \models \tau \wedge \psi_t$.
  
  The set $\cS$ has the \emph{$\phi$-extension property}
  if, for every $\cS$-requirement $\tau$ and every $t \in \cI$,
  there exists an $(\cS,t)$-extension of $\tau$.
\end{definition}

We combine the notions of $\phi$-realisability and the $\phi$-extension
property in the following definition of a satisfiability witness.
As we show subsequently, $\phi$ is satisfiable if and only if there
exists a collection of outer-types with these two properties.

\begin{definition}\label{def:satisfiability-witness}
  A set of outer-types $\cW \subseteq \TTTo(\Sigma)$ is called a
  \emph{satisfiability witness} for $\phi$ if it is $\phi$-realisable
  and has the $\phi$-extension property.
\end{definition}

\paragraph{Completeness.}
The notion of a satisfiability witness is \emph{complete}: whenever
$\phi$ is satisfiable, it admits a satisfiability witness.
This is established by the following argument.

We begin with an auxiliary claim describing a type-surgery operation
that allows us to transfer witnessing extensions between $m$-types
while controlling the induced outer-types.

\begin{claim}\label{claim:type-surgery}
  Fix $t\in\cI$, and let $\tau_1,\tau_2$ be $m$-types such that
  \begin{enumerate}[nosep]
    \item
    $\tau_1\models\alpha\iff\tau_2\models\alpha$ for every
    $\alpha\in\AAA_t\cup\BBB_t$;
    \item
    $\tau_1 \langle x_i \rangle= \tau_2 \langle x_i \rangle$ for every $i\in[m]$.
  \end{enumerate}
  Let $\sigma_1$ be an $(m+1)$-type extending $\tau_1$ and satisfying
  $\psi_t$, and define
  \(
    \sigma_2\coloneq(\sigma_1\setminus\tau_1)\cup\tau_2.
  \)
  Then $\sigma_2$ is an $(m+1)$-type extending $\tau_2$ and satisfying
  $\psi_t$.
  Moreover,
  \(
    \TTTo(\sigma_2)
    \subseteq
    \TTTo(\tau_2)\cup\TTTo(\sigma_1).
  \)
\end{claim}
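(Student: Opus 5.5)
The plan is a direct check from the definitions: $\sigma_2$ replaces the literals of $\sigma_1$ over $x_1,\dots,x_m$ by those of $\tau_2$ and keeps every literal mentioning $x_{m+1}$.

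\emph{Step 1: $\sigma_2$ is an $(m+1)$-type extending $\tau_2$.} Partition $\lit^{m+1}(\Sigma)$ into the literals whose variables lie in $\{x_1,\dots,x_m\}$, which are exactly $\lit^m(\Sigma)$, and those containing $x_{m+1}$. Since $\sigma_1$ extends $\tau_1$ and $\tau_1$ is maximal consistent on $\lit^m(\Sigma)$, we get $\sigma_1\cap\lit^m(\Sigma)=\tau_1$. Hence $\sigma_1\setminus\tau_1$ is precisely the part of $\sigma_1$ on literals containing $x_{m+1}$. So $\sigma_2$ contains exactly one literal from each dual pair: from $\tau_2$ on the first class and from $\sigma_1$ on the second. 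Because the two classes are closed under negation and disjoint, $\sigma_2$ is maximal consistent, and $\tau_2\subseteq\sigma_2$ holds by construction.

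\emph{Step 2: $\sigma_2\models\psi_t$.} The truth value of the quantifier-free $\psi_t$ under a type is determined by the truth values of its atoms. Split the atoms using $\Atoms(\psi_t)=\AAA_t\cup\BBB_t\cup\CCC_t$.
\begin{itemize}[nosep]
\item Atoms in $\CCC_t$ contain $x_{m+1}$, so $\sigma_1$ and $\sigma_2$ agree on them.
\item Atoms in $\AAA_t\cup\BBB_t$ contain no $x_{m+1}$. By hypothesis~1, $\tau_1$ and $\tau_2$ agree on them, so $\sigma_1$ and $\sigma_2$ agree on them as well.
\end{itemize}
Hence every atom of $\psi_t$ has the same value under $\sigma_1$ and $\sigma_2$, and $\sigma_2\models\psi_t$ because $\sigma_1\models\psi_t$. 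Constant-only atoms lie in $\AAA_t$, since $\Vars(\alpha)=\emptyset\subseteq\{x_i\}$, so they are covered by hypothesis~1.

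\emph{Step 3: $\TTTo(\sigma_2)\subseteq\TTTo(\tau_2)\cup\TTTo(\sigma_1)$.} This is the only step needing care. Take an outer-type $\sigma_2\langle x_{i_1},\dots,x_{i_\ell}\rangle$ and split on whether some $i_j=m+1$.
\begin{itemize}[nosep]
\item If no $i_j=m+1$, the literals involved mention no $x_{m+1}$. So the outer-type equals $\tau_2\langle x_{i_1},\dots,x_{i_\ell}\rangle\in\TTTo(\tau_2)$.
\item If some $i_j=m+1$, the literals involved are of two kinds. The first kind uses all of $x_{i_1},\dots,x_{i_\ell}$, or only $x_{m+1}$, or no variable; these contain $x_{m+1}$ or mention only constants. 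The second kind uses a single variable $x_{i_j}$ with $i_j\le m$. Literals of the first kind come from $\sigma_1$, and constant-only literals agree anyway because $\tau_1$ and $\tau_2$ agree on $0$-literals via hypothesis~2. Literals of the second kind form the $1$-type $\tau_2\langle x_{i_j}\rangle$, which equals $\tau_1\langle x_{i_j}\rangle$ by hypothesis~2.
\end{itemize}
Therefore in the second case $\sigma_2\langle x_{i_1},\dots,x_{i_\ell}\rangle=\sigma_1\langle x_{i_1},\dots,x_{i_\ell}\rangle\in\TTTo(\sigma_1)$.

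The main obstacle is this last case, specifically confirming that outer-types never contain literals over a proper subset of at least two variables. This holds by the definition of $\lit^k_\ro$, and it is exactly why hypothesis~2, about $1$-types only, suffices.
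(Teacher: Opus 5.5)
Your proposal is correct and follows the paper's proof: atoms are compared via the $\AAA_t\cup\BBB_t$ versus $\CCC_t$ split, and induced outer-types are handled according to whether they involve $x_{m+1}$. Your version also makes explicit several points the paper leaves as ``clearly'': that $\sigma_1\cap\lit^m(\Sigma)=\tau_1$, so $\sigma_2$ is a genuine $(m+1)$-type, and that constant-only literals agree via hypothesis~2.
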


\begin{proof}
  Clearly, $\sigma_2$ extends $\tau_2$.
  We now verify that $\sigma_2\models\psi_t$.
  Replacing $\tau_1$ by $\tau_2$ preserves the truth
  values of all atoms in $\AAA_t\cup\BBB_t$.
  Moreover, since every atom in $\CCC_t$ contains $x_{m+1}$, its truth
  value is determined by the literals in $\sigma_1\setminus\tau_1$ and
  is therefore the same in $\sigma_2$ as in $\sigma_1$.
  Thus, every atom of $\psi_t$ has the same truth value in $\sigma_1$
  and $\sigma_2$.
  Since $\sigma_1\models\psi_t$, it follows that
  $\sigma_2\models\psi_t$.

  It remains to verify the inclusion of induced outer-types.
  Since $\tau_1$ and $\tau_2$ agree on the induced $1$-types, and the
  literals in $\sigma_1\setminus\tau_1$ are retained in $\sigma_2$,
  every outer-type induced by $\sigma_2$ that involves $x_{m+1}$ is
  also induced by $\sigma_1$.
  Every remaining outer-type induced by $\sigma_2$ is induced on some of
  the first $m$ variables, and hence is induced by $\tau_2$.
  Therefore,
  \(
    \TTTo(\sigma_2)
    \subseteq
    \TTTo(\tau_2)\cup\TTTo(\sigma_1).
  \)
\end{proof}

We now return to the completeness argument.
Let $\fA\models\phi$.
We first transform $\fA$ into a new structure $\fB$ by taking its
\emph{infinite unfolding}, constructed as follows.
The domain $B$ consists of countably many copies of each element of
$A$ and is identified with $A\times\N$, where $(c,0)$ is identified
with the constant $c\in\Cons$.
For every relation symbol $R\in\Rels$ of arity $k$, we define
\tightdisplay{\[
  R^\fB
  \;\coloneq\;
  \bigl\{\bigl\langle(a_1,i_1),\dots,(a_k,i_k)\bigr\rangle \mid a_1,\dots,a_k \in A,\, i_1,\dots,i_k \in \N,\, R^\fA(a_1,\dots,a_k)\bigr\}.
\]}%
By a standard argument, $\fB\models\phi$.
Moreover, for every $t\in\cI$ and every
$\as\in\langle B\rangle^m$, there exists an unnamed element
$b_{\as,t}\in B$ not occurring in $\as$ such that
\(
  \fB\models\psi_t[\as,b_{\as,t}].
\)
Indeed, every element $(a,i)\in B$, including one identified with a
constant, is indistinguishable from its copy $(a,j) \in B$.
Hence, any witness for $\psi_t$ can be replaced by an unnamed copy not
occurring in $\as$.

Let $\cW$ consist of all $k$-outer-types, for $k\le m+1$, realised
in $\fB$.
We claim that $\cW$ is a satisfiability witness for $\phi$.

By construction of $\fB$, every realised $1$-type has
infinitely many distinct realisations.
It directly follows that $\cW$ is $\phi$-realisable.
It remains to verify the $\phi$-extension property.

Fix $t\in\cI$ and let $\tau$ be a $\cW$-requirement.
Consider the atoms in $\AAA_t \cup \BBB_t$, that is, those that contain only universal variables.
The atoms in $\AAA_t$ depend only on the $1$-types induced by $\tau$,
while those in $\BBB_t$ depend only on the outer-type induced on the uniform
variables $x_1,\dots,x_{n_t}$.
Hence, although $\tau$ itself need not be realised in $\fB$, we can
choose a tuple $\as\in\langle B\rangle^m$ that agrees with $\tau$ on
$\AAA_t\cup\BBB_t$.
Indeed, we first choose distinct unnamed elements
$a_1,\dots,a_{n_t}$ realising
$\tau\langle x_1,\dots,x_{n_t} \rangle$.
We then choose distinct unnamed elements
$a_{n_t+1},\dots,a_m$, also distinct from the preceding ones, such that
\(
  \tp{\fB}[a_i]=\tau\langle x_i \rangle
\)
for every $i\in[n_t+1,m]$.

Let $\tau'\coloneq\tp{\fB}[\as]$ and $\sigma'\coloneq\tp{\fB}[\as,\bs_{\as,t}]$.
Then $\sigma'$ extends $\tau'$ and satisfies $\psi_t$.
By the choice of $\as$, the types $\tau$ and $\tau'$
agree on all atoms in $\AAA_t\cup\BBB_t$ and induce the same
sequences of $1$-types.
Hence, Claim~\ref{claim:type-surgery}, applied to $\tau$ and $\tau'$,
shows that
\(
  \sigma\coloneq(\sigma'\setminus\tau')\cup\tau
\)
extends $\tau$, satisfies $\psi_t$, and
\(
  \TTTo(\sigma)
  \subseteq
  \TTTo(\tau)\cup\TTTo(\sigma').
\)
Since $\tau$ is a $\cW$-requirement, we have
$\TTTo(\tau)\subseteq\cW$.
Moreover, $\sigma'$ is realised in $\fB$, and hence
$\TTTo(\sigma')\subseteq\cW$.
Therefore,
\(
  \TTTo(\sigma)\subseteq\cW,
\)
so $\sigma$ is a $(\cW,t)$-extension of $\tau$.

Consequently, $\cW$ has the $\phi$-extension property and is therefore
a satisfiability witness for~$\phi$.

\paragraph{Random models.}
We now establish the more interesting direction, namely that the notion
of a satisfiability witness is \emph{sound} for finite satisfiability:
whenever $\phi$ admits a satisfiability witness, it has a finite model.
The proof follows Goldfarb's probabilistic approach~\cite{Gol93}, but refines the
probability analysis to obtain an explicit bound on the size of the constructed models.

The randomised construction relies on a simple observation, which we
call the \emph{independence property for outer-types}.
Once the $1$-types of the elements have been fixed, outer-types can be
assigned independently to distinct sets of elements, provided that they
are consistent with these $1$-types.
For each such set, it suffices to choose the outer-type of a single
ordering; those of all other orderings are then determined by the
corresponding permutations of variables.

This independence property fails for ordinary types.
To see this, consider
$\as=\langle a_1,a_2,a_3\rangle$ and
$\as'=\langle a_2,a_3,a_4\rangle$.
Even after the $1$-types of the elements have been fixed, the
$3$-types of $\as$ and $\as'$ cannot be chosen independently:
the literals $\pm R(x_2,x_3)$ in the type of $\as$ and
$\pm R(x_1,x_2)$ in the type of $\as'$ both determine the truth value
of $R(a_2,a_3)$ and must therefore agree.
By contrast, $3$-outer-types avoid this dependence
by omitting such literals with two distinct variables.

We now proceed to the construction.
Fix a satisfiability witness $\cW$ for $\phi$.
Let $p \in \N$ be a parameter, to be determined later, specifying the number of unnamed elements in the domain.
We construct a finite $\Sigma$-structure $\fA$ with domain \( A \coloneq [p] \cup \Cons \) in three steps: 
\begin{enumerate}[nosep]
  \item\label{item:random-model-stage1}
  For every $a \in [p]$, choose uniformly at random a
  $1$-type $\pi \in \cW \cap \TTT^1(\Sigma)$, and set
  \( \tp{\fA}[a] \coloneq \pi. \)

  \item\label{item:random-model-stage2}
  For every $k \in [2,m+1]$ and every
  $\{a_1 < \dots < a_k\} \subseteq [p]$, choose uniformly at random a
  $k$-outer-type
  $\tau_\ro \in \cW \cap \TTT^k_\ro(\Sigma)$ such that
  \( \tau_\ro\langle x_i \rangle = \tp{\fA}[a_i] \) for every $i \in [k]$.
  Since $\cW$ is $\phi$-realisable, at least one such $k$-outer-type exists.
  By the independence property for outer-types,
  $\tau_\ro$ can
  be assigned to $\langle a_1,\dots,a_k\rangle$ without affecting the
  outer-type assigned to any other tuple with a different set of underlying elements.
  Thus, set \( \otp{\fA}[a_1,\dots,a_k] \coloneq \tau_\ro. \)

  \item Set to false the truth values of all facts with more than $m+1$ distinct unnamed elements.
\end{enumerate}

The construction of $\fA$ is now complete.
In particular, recall that types and outer-types also include literals involving constants.
Thus, all facts involving them are implicitly defined.

We now show that $\bbP[\fA \models \phi] > 0$ once the parameter $p$
is sufficiently large.
For the proof, let
$s_\gamma \coloneq \max_{t \in \cI} |\CCC_t|$
and
$\delta \coloneq |\cW|^{-(s_\gamma+1)}$.

\begin{claim}\label{claim:extension-probability}
  Let $t \in \cI$ and let $\as \in \langle A \rangle^m$.
  Then
  \(
    \bbP\big[
      \fA \not\models
      \exists x_{m+1}\,\psi_t[\as,x_{m+1}]
    \big]
    \le
    (1-\delta)^{p-m}.
  \)
\end{claim}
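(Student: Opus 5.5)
We condition on everything that the random construction decides about the elements of $\as$ alone. Then we show that each of the $p-m$ remaining unnamed elements $b\in[p]\setminus\as$ independently fails to be a witness with probability at most $1-\delta$.

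\emph{Step 1: the conditioned $m$-type is an $\cS$-requirement.} Fix $t$ and $\as=\langle a_1,\dots,a_m\rangle\in\langle A\rangle^m$. Condition on the $1$-types of $a_1,\dots,a_m$ and on the outer-types assigned in stage~\ref{item:random-model-stage2} to all subsets of $\{a_1,\dots,a_m\}$ of size at least~$2$. These data determine $\tau\coloneq\tp{\fA}[\as]$. Every outer-type induced by $\tau$ comes from $\cW$, either directly or via a permutation of variables, and $\cW$ is closed under such permutations. The $0$-type is the unique one in $\cW$. Hence $\TTTo(\tau)\subseteq\cW$, so $\tau$ is a $\cW$-requirement. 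By the $\phi$-extension property there is a $(\cW,t)$-extension $\sigma$ of $\tau$ with $\sigma\models\psi_t$.

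\emph{Step 2: lower-bound the success probability of a single $b$.} Fix $b\in[p]\setminus\as$. The event $E_b$ that $\tp{\fA}[\as,b]$ agrees with $\sigma$ on all atoms of $\CCC_t$ implies $\fA\models\psi_t[\as,b]$. The atoms of $\AAA_t\cup\BBB_t$ are already fixed by $\tau$, and the truth of $\psi_t$ depends only on its atoms. Each atom $\gamma\in\CCC_t$ contains $x_{m+1}$. Its truth value is therefore determined by the $1$-type of $b$ (if $\gamma$ has only the variable $x_{m+1}$) or by the outer-type of the set $\{b\}\cup\{a_i: x_i\in\Vars(\gamma)\}$, namely its full-variable literals.

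It suffices to force two things: the $1$-type of $b$ equals $\sigma\langle x_{m+1}\rangle$, and, for each of the at most $s_\gamma$ relevant subsets $U$, the outer-type of $U\cup\{b\}$ is the one induced by $\sigma$ on those variables (suitably ordered). Distinct atoms may share a subset, which only reduces the number of constraints. Each required outer-type lies in $\TTTo(\sigma)\subseteq\cW$ and is consistent with the already-chosen $1$-types. Each random choice is uniform over a nonempty subset of $\cW$, so it hits the required value with probability at least $|\cW|^{-1}$. These choices concern pairwise distinct underlying sets and are therefore independent. This gives $\bbP[E_b]\ge|\cW|^{-(s_\gamma+1)}=\delta$.

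\emph{Step 3: independence across $b$.} For distinct $b,b'\notin\as$, the events $E_b$ and $E_{b'}$ depend on disjoint families of random choices: the $1$-type of $b$ and the sets containing $b$ but not $b'$, versus the analogous choices for $b'$. All of these are disjoint from the conditioned data. Hence, conditionally, the events $E_b$ are independent, and
\[
\bbP\bigl[\fA\not\models\exists x_{m+1}\,\psi_t[\as,x_{m+1}]\bigr]\le\prod_{b}(1-\bbP[E_b])\le(1-\delta)^{p-m}.
\]
Averaging over the conditioning yields the claim.

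The main obstacle is the bookkeeping in Step~2. We must make sure that an atom with variables $\{x_{m+1}\}\cup X$, where $X\subseteq\xs$, is governed exactly by the outer-type of the corresponding set: its literal uses all variables of that set, possibly with repetitions and constants, so it is a full-variable outer-type literal. We must also check that the target outer-types are compatible with the $1$-types being conditioned on. The latter holds because they are read off the single consistent type $\sigma$, which extends $\tau$. One further point: sets of size greater than $m+1$ never occur here, since $|X\cup\{b\}|\le m+1$.
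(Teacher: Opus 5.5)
Your proposal is correct and follows the paper's proof essentially step for step: you condition on the data determining $\tp{\fA}[\as]$ (equivalent to the paper's conditioning on $\tau$), take a $(\cW,t)$-extension $\sigma$, bound each element's success probability below by $\delta$ using at most $s_\gamma+1$ uniform choices, and then use conditional independence over $b\in[p]\setminus\as$. One small imprecision: the choices of $b$'s $1$-type and of the outer-types of the sets $U\cup\{b\}$ are not literally ``independent'', because the admissible set for the latter depends on the former. The bound is really a chain-rule product, which is what the paper's ``product rule'' phrasing expresses.
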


\begin{proof}
  Write $\as=\langle a_1,\dots,a_m\rangle$ and condition on the type
  \(
    \tau\coloneq\tp{\fA}[\as].
  \)
  By construction of $\fA$, every outer-type induced by $\tau$ belongs
  to $\cW$, and hence $\tau$ is a $\cW$-requirement.
  Since $\cW$ is a satisfiability witness, $\tau$ admits a
  $(\cW,t)$-extension $\sigma$.

  Fix an element $a_{m+1}\in[p]\setminus\as$.
  We lower-bound the probability that $a_{m+1}$ is a witness for $\as$,
  that is, that
  \(
    \fA\models\psi_t[\as,a_{m+1}].
  \)
  Since $\sigma\models\psi_t$, it suffices that every atom of $\psi_t$
  has the same truth value on $\langle a_1,\dots,a_{m+1}\rangle$ as in
  $\sigma$, namely,
  \tightdisplay{\begin{equation}\label{eqn:claim:extension-probability-1}
    \tp{\fA}[a_1,\dots,a_{m+1}]\models\alpha
    \quad\iff\quad
    \sigma\models\alpha
    \qquad
    \text{for every $\alpha\in\Atoms(\psi_t)$.}
  \end{equation}}%

  Since $\sigma$ extends $\tau$, the atoms in
  $\AAA_t\cup\BBB_t$ already have the required truth values.
  It therefore remains to consider the atoms in $\CCC_t$.
  For an atom $\alpha\in\CCC_t$ with
  \(
    \Vars(\alpha)
    =
    \{x_{i_1},\dots,x_{i_k},x_{m+1}\},
  \)
  the required truth value is ensured whenever
  \tightdisplay{\begin{equation}\label{eqn:claim:extension-probability-2}
    \otp{\fA}[a_{i_1},\dots,a_{i_k},a_{m+1}]
    =
    \sigma\langle x_{i_1},\dots,x_{i_k},x_{m+1} \rangle.
  \end{equation}}%

  Let $\cE(a_{m+1})$ be the event that
  \eqref{eqn:claim:extension-probability-2} holds for every
  $\alpha \in \CCC_t$.
  For this event to occur, the $1$-type of $a_{m+1}$ must agree with the
  $1$-type induced by $\sigma$ on $x_{m+1}$, and the outer-type of every
  relevant tuple
  \(
    \langle a_{i_1},\dots,a_{i_k},a_{m+1}\rangle
  \)
  must agree with the corresponding outer-type induced by $\sigma$.

  Each of the above requirements is an admissible choice in the random
  construction, and each of them is made uniformly from a set of at most
  $|\cW|$ possibilities.
  As there are at most $|\CCC_t|+1 \le s_\gamma+1$ such requirements,
  the product rule for probabilities gives
  \tightdisplay{\begin{equation}\label{eqn:claim:extension-probability-3}
    \bbP\big[
      \cE(a_{m+1})
      \mid
      \tp{\fA}[\as]=\tau
    \big]
    \ge
    |\cW|^{-(s_\gamma+1)}
    =
    \delta.
  \end{equation}}%

  Now let
  \(
    B \coloneq [p] \setminus \as.
  \)
  The events $\{\cE(b) \mid b \in B\}$ are mutually independent,
  conditional on $\tp{\fA}[\as]=\tau$.
  Indeed, for distinct $b,b' \in B$, the events $\cE(b)$ and $\cE(b')$
  depend on disjoint random choices, except for the $1$-types of the
  elements of $\as$, which are fixed by the conditioning.
  Consequently, the probability that no element $b \in B$ provides a witness for $\as$ is at most
  \tightdisplay{\begin{equation}\label{eqn:claim:extension-probability-4}
      \bbP\big[
        \bigcap\nolimits_{b\in B} \neg\cE(b)
        \mid
        \tp{\fA}[\as]=\tau
      \big]
      =
      \prod\nolimits_{b\in B}
      \bbP\big[
        \neg\cE(b)
        \mid
        \tp{\fA}[\as]=\tau
      \big]
      \le
      (1-\delta)^{|B|}.
  \end{equation}}%

  Since $|B|=p-m$, we obtain
  \(
    \bbP\big[
      \fA \not\models
      \exists x_{m+1}\,\psi_t[\as,x_{m+1}]
      \mid
      \tp{\fA}[\as]=\tau
    \big]
    \le
    (1-\delta)^{p-m}.
  \)
  As the final bound is independent of $\tau$, the condition $\tp{\fA}[\as]=\tau$ may be dropped.
\end{proof}

We now bound the probability that the sampled structure fails to satisfy $\phi$.
By the form of $\phi$, we have $\fA \models \phi$ precisely when
\(
  \fA \models
  \exists x_{m+1}\,\psi_t[\as,x_{m+1}]
\)
for every $t \in \cI$ and every tuple $\as \in \langle A \rangle^m$.
For every such pair $t$ and $\as$,
Claim~\ref{claim:extension-probability} gives
\(
  \bbP\big[
    \fA \not\models
    \exists x_{m+1}\,\psi_t[\as,x_{m+1}]
  \big]
  \le
  (1-\delta)^{p-m}.
\)
Taking the union bound over all $t \in \cI$ and
$\as \in \langle A \rangle^m$, and using $|\langle A \rangle^m| \le p^m$, we obtain
\tightdisplay{\begin{equation}\label{eqn:union-bound:1}
  \bbP[\fA \not\models \phi]
  \le
  \sum\nolimits_{t \in \cI} \sum\nolimits_{\as \in \langle A \rangle^m}
  \bbP\big[
    \fA \not\models
    \exists x_{m+1}\,\psi_t[\as,x_{m+1}]
  \big]
  \le
  |\cI| \cdot p^m \cdot (1-\delta)^{p-m}.
\end{equation}}%

For fixed $\phi$ and $\cW$, the right-hand side of
\eqref{eqn:union-bound:1} tends to $0$ as $p \to \infty$, since its
polynomial factor $p^m$ is dominated by the exponentially decreasing
factor $(1-\delta)^{p-m}$.
Thus, for sufficiently large $p$,
\(
  \bbP[\fA \models \phi] > 0.
\)
Consequently, there exists a finite model of $\phi$.

A more careful analysis of~\eqref{eqn:union-bound:1} gives the following
quantitative version of the result, which will be needed for our
complexity bounds.

\begin{lemma}\label{lemma:random-models}
  Let $\phi_{\rm in}\in\SNF{}$, and let $\phi$ be obtained from
  $\phi_{\rm in}$ by the reduction of
  Lemma~\ref{lemma:reduction-to-simple-form}.
  Given a satisfiability witness $\cW$ for $\phi$, the sentence $\phi$
  admits a finite model $\fA$ with domain size
  \tightdisplay{\[
    |A|
    =
    2^{\cO(|\phi_{\rm in}|\log|\phi_{\rm in}|)}
    \cdot
    |\cW|^{\cO(|\phi_{\rm in}|)}.
  \]}%
\end{lemma}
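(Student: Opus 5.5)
The proof is a quantitative analysis of the random construction above. The plan is to choose $p$ explicitly so that the right-hand side of \eqref{eqn:union-bound:1}, namely $|\cI|\cdot p^m\cdot(1-\delta)^{p-m}$, is strictly below $1$. Then $\bbP[\fA\models\phi]>0$, so some outcome of the construction is a model of $\phi$ with domain size $|A| = p+|\Cons|$.

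First, I will express every parameter in terms of $|\phi_{\rm in}|$ using Lemma~\ref{lemma:reduction-to-simple-form}. The reduction introduces no new variables, so $m\le|\phi_{\rm in}|$. We also have $|\cI| = 2^{\cO(|\phi_{\rm in}|\log|\phi_{\rm in}|)}$ and $|\Cons|\le|\phi_{\rm in}|$. Since $\max_t|\psi_t|=\cO(|\phi_{\rm in}|)$, we get $s_\gamma=\max_t|\CCC_t|=\cO(|\phi_{\rm in}|)$. Hence
\[
\delta^{-1}=|\cW|^{s_\gamma+1}=|\cW|^{\cO(|\phi_{\rm in}|)}.
\]
(If $|\cW|=1$ then $\delta=1$, every sampled element is a witness, and the bound is trivial; so assume $|\cW|\ge2$.)

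Next, I will use $(1-\delta)^{p-m}\le e^{-\delta(p-m)}$ and take logarithms. It then suffices that
\[
\delta(p-m) > \ln|\cI| + m\ln p .
\]
I will take $p = m + C\,\delta^{-1}L$, where $L\coloneq |\phi_{\rm in}|\log|\phi_{\rm in}| + |\phi_{\rm in}|^2\log|\cW|$ and $C$ is a suitable constant. With this choice the left-hand side equals $CL$, and the first term on the right is $\ln|\cI| = \cO(|\phi_{\rm in}|\log|\phi_{\rm in}|)$.

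The only delicate point is the term $m\ln p$, because $p$ itself depends on $\delta^{-1}$. Here $\ln p = \cO(\log m + \log\delta^{-1} + \log L)$, and $\log\delta^{-1} = \cO(|\phi_{\rm in}|\log|\cW|)$. Therefore
\[
m\ln p = \cO\bigl(|\phi_{\rm in}|\log|\phi_{\rm in}| + |\phi_{\rm in}|^2\log|\cW|\bigr) = \cO(L),
\]
so a large enough constant $C$ makes the inequality hold. This self-referential term is the main obstacle; I handle it exactly as above, by making $L$ large enough to absorb $m\log\delta^{-1}$.

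Finally, I bound the domain size:
\[
p = \cO(m) + \delta^{-1}\cdot\cO(L) = |\cW|^{\cO(|\phi_{\rm in}|)}\cdot 2^{\cO(|\phi_{\rm in}|\log|\phi_{\rm in}|)}.
\]
The factor $L$ is absorbed because $|\phi_{\rm in}|^2\log|\cW|\le 2^{\cO(\log|\phi_{\rm in}|)}\cdot|\cW|$ fits into the two factors after adjusting the hidden constants. Adding the $|\Cons|\le|\phi_{\rm in}|$ named elements gives the stated bound. I also need the standing assumption that $\fA$ has at least $m$ unnamed elements, which holds since $p\ge m$; so Lemma~\ref{lemma:reduction-to-simple-form}(2) and Claim~\ref{claim:extension-probability} apply.
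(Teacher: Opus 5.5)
Your proof is correct and takes essentially the paper's route: you make $|\cI|\cdot p^m\cdot(1-\delta)^{p-m}$ smaller than $1$ by using $\log(1-x)\le -x$ together with the parameter bounds from Lemma~\ref{lemma:reduction-to-simple-form}, and then read off $p$. The difference is only bookkeeping. The paper bounds $\log|\cI|$ and $m$ crudely by $|\phi|$ and handles $m\log p$ with $\log p\le\sqrt{p}$, arriving at $p\ge 16|\phi|^2\delta^{-2}$. Your explicit choice $p=m+C\delta^{-1}L$ instead yields the slightly sharper bound $\mathrm{poly}(|\phi_{\rm in}|)\cdot|\cW|^{\cO(|\phi_{\rm in}|)}$; the $m\ln C$ term hidden in your estimate of $\ln p$ is harmless because it grows only logarithmically in $C$.
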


\begin{proof}
  It suffices to choose $p$ so that the right-hand side
  of~\eqref{eqn:union-bound:1} is strictly smaller than $1$.
  For convenience, we impose the stronger requirement that it be at most
  $1/e$, that is, we require
  \tightdisplay{\begin{equation}\label{eqn:union-bound:2}
    |\cI|\cdot p^m\cdot(1-\delta)^{p-m}
    \le \frac{1}{e}.
  \end{equation}}%
  
  Taking natural logarithms, inequality~\eqref{eqn:union-bound:2} is equivalent to
  \tightdisplay{\begin{equation}\label{eqn:union-bound:3a}
    \log|\cI|
    +m\cdot\log p
    +(p-m)\cdot\log(1-\delta)
    \le -1.
  \end{equation}}%
  Using
  \(
    \log(1-x)\le -x
  \)
  for $x\in(0,1)$, it suffices that
  \tightdisplay{\begin{equation}\label{eqn:union-bound:3}
    \log|\cI|
    +m\cdot\log p
    -(p-m)\cdot\delta
    \le -1.
  \end{equation}}%
  Rearranging, this is equivalent to
  \tightdisplay{\begin{equation}\label{eqn:union-bound:4}
    \bigl(\log|\cI|+m\cdot\log p+1\bigr)\cdot\delta^{-1}
    +m
    \le p.
  \end{equation}}%
  Bounding each of $\log|\cI|$, $m$, and $1$ by $|\phi|$, it suffices
  that
  \tightdisplay{\begin{equation}\label{eqn:union-bound:5}
    \bigl(|\phi|+|\phi|\cdot\log p+|\phi|\bigr)\cdot\delta^{-1}
    +|\phi|
    \le p.
  \end{equation}}%
  Absorbing the lower-order terms, it is enough to require
  \tightdisplay{\begin{equation}\label{eqn:union-bound:6}
    4 \cdot |\phi| \cdot \delta^{-1} \cdot \log p
    \le p.
  \end{equation}}%
  Since
  \(
    \log p\le \sqrt{p}
  \)
  for all $p \ge 1$, inequality~\eqref{eqn:union-bound:6} is satisfied
  whenever
  \tightdisplay{\begin{equation}\label{eqn:union-bound:7}
    4 \cdot |\phi| \cdot \delta^{-1}
    \le \sqrt{p}.
  \end{equation}}%
  Thus, it suffices to take
  \(
    p
    \ge
    16 \cdot |\phi|^2 \cdot \delta^{-2}.
  \)

  By Lemma~\ref{lemma:reduction-to-simple-form},
  \(
    |\phi|
    =
    2^{\cO(|\phi_{\rm in}|\log|\phi_{\rm in}|)}
  \)
  and
  \(
    \delta^{-1}
    =
    |\cW|^{\cO(|\phi_{\rm in}|)}.
  \)
  Consequently, for
  \tightdisplay{\begin{equation}
    p
    =
    2^{\cO(|\phi_{\rm in}|\log|\phi_{\rm in}|)}
    \cdot
    |\cW|^{\cO(|\phi_{\rm in}|)},
  \end{equation}}%
  the random model construction yields a model $\fA\models\phi$
  of the claimed size.
\end{proof}

\subsection{Exponential-Size Certificates}
\label{sec:exponential-certificates}

We retain the notation from Subsection~\ref{subsec:criterion}: the
input sentence $\phi_{\rm in}\in\SNF$, its conversion $\phi$ given by
Lemma~\ref{lemma:reduction-to-simple-form}, and the signature
$\Sigma=\langle\bar{R},\bar{c}\rangle$.
In this subsection, using techniques from~\cite{FKM24}, we show that
whenever $\phi$ is satisfiable, it admits a satisfiability witness of
size exponential in $|\phi_{\rm in}|$.
Stating the bound in terms of $\phi_{\rm in}$ is important,
because $\phi$ is exponentially larger than~$\phi_{\rm in}$.

Obtaining the desired bound is not immediate.
A straightforward estimate on the number of possible $k$-outer-types
gives a doubly exponential bound in the size of the signature $\Sigma$.
In the presence of constants and relation symbols of unbounded arity,
this bound is already attained at $k = 1$.
Thus, simply counting all outer-types that may occur in a satisfiability
witness yields only a doubly exponential bound.
To obtain a singly exponential bound, we act as follows.

We start with an arbitrary satisfiability witness $\cW$ for $\phi$ and reduce its size in two stages.
First, we reduce the number of $1$-types occurring in $\cW$, obtaining a new witness $\cW'$.
This step may, in fact, increase the number of $k$-outer-types for $k > 1$.
Second, we take the witness $\cW'$ and reduce the number of $k$-outer-types for $k > 1$ while preserving the number of $1$-types. This way we obtain the final witness $\cW''$ which is exponentially bounded in $|\phi_{\rm in}|$.

\paragraph{Reduction of $1$-types.}
Recall that the notation of Definition~\ref{def:skolem-class} refers to
$\phi$, and, in particular, that
\( \Atoms(\psi_t)=\AAA_t\cup\BBB_t\cup\CCC_t. \)
For the construction, we define the auxiliary sets
\tightdisplay{\begin{equation}\label{eqn:AAA-star}
  \AAA^*
  \;\coloneq\;
  \bigcup\nolimits_{t\in\cI\colon n_t\ge2}\AAA_t,
  \qquad
  \CCC'_t
  \;\coloneq\;
  \bigl\{
    \alpha\in\CCC_t
    \mid
    \Vars(\alpha)=\{x_{m+1}\}
  \bigr\}.
\end{equation}}%

For brevity, we introduce the following notation for the relevant set sizes:
\tightdisplay{\[
s^*_\alpha = |\AAA^*|,\qquad
s_\alpha = \max\nolimits_{t \in \cI} |\AAA_t|,\qquad
s_\beta = \max\nolimits_{t \in \cI} |\BBB_t|,\qquad
s'_\gamma = \max\nolimits_{t \in \cI} |\CCC'_t|.
\]}%

We reduce the number of $1$-types occurring in $\cW$ by introducing a family of equivalence relations.
For $t \in \cI$ and $1$-types $\pi_1,\pi_2 \in \cW \cap \TTT^1(\Sigma)$, define $\pi_1 \sim_t \pi_2$ if
\tightdisplay{\begin{equation*}
  \pi_1 \models \alpha[x_1/\,\cdot\,] \;\iff\; \pi_2 \models \alpha[x_1/\,\cdot\,] \quad\text{for every $\alpha \in \AAA^* \cup \CCC'_t $}.
\end{equation*}}%

Above, the notation $[x_1/\,\cdot\,]$ stands for the substitution replacing every variable with $x_1$.

Let $N_{\sim_t}$ denote the number of equivalence classes of $\sim_t$. By definition,
\tightdisplay{\begin{equation}\label{eqn:size-estimate-1}
  \sum\nolimits_{t \in \cI} N_{\sim_t} \;\le\; |\cI| \cdot 2^{s^*_\alpha + s'_\gamma}.
\end{equation}}%

For every $t\in\cI$, we fix a choice function
\(
  \chc_t
  \colon
  (\cW\cap\TTT^1(\Sigma))/_{\sim_t}
  \longrightarrow
  \cW\cap\TTT^1(\Sigma)
\)
that selects a representative from each equivalence class of $\sim_t$.
For convenience, for a $1$-type $\pi\in\cW$, we write
\(
  \chc(\pi;t)
  \coloneq
  \chc_t([\pi]_{\sim_t})
\)
for the chosen $\sim_t$-representative of $\pi$.

More generally, let $\tau_\ro\in\cW$ be a $k$-outer-type and let
$t_1,\dots,t_k\in\cI$.
For every $i\in[k]$, let
$\pi_i\coloneq \tau_\ro\langle x_i \rangle$
and
$\pi'_i\coloneq\chc(\pi_i;t_i)$.
We write
\(
  \chc(\tau_\ro;t_1,\dots,t_k)
\)
for the $k$-outer-type obtained from $\tau_\ro$ by replacing each induced
$1$-type $\pi_i$ by $\pi'_i$,
and leaving all other literals unchanged;
formally: 
\tightdisplay{\[
  \chc(\tau_\ro;t_1,\dots,t_k) \;\coloneq\;
  \tau_\ro \setminus
  \bigl(\pi_1[x_1] \cup \cdots \cup \pi_k[x_k]\bigr)
  \cup \pi'_1[x_1] \cup \cdots \cup \pi'_k[x_k],
\]}
where $\pi[x_i]$ denotes
the result of substituting the variable $x_1$ by $x_i$ in
the $1$-type $\pi$.

Observe that replacing the induced $1$-types leaves unchanged every
literal involving at least two distinct variables.
Moreover, since all $1$-types under consideration belong to $\cW$,
they induce the same $0$-type, and hence the variable-free literals remain unchanged as well.
Therefore, the outer-types $\tau_\ro$ and $\chc(\tau_\ro;t_1,\dots,t_k)$ may differ only in literals involving precisely one variable.

We now define
\tightdisplay{\[
  \cW'
  \;\coloneq\;
  \bigl\{
    \chc(\tau_\ro;t_1,\dots,t_k)
    \mid
    k\in[0,m+1],\
    \tau_\ro\in\cW\cap\TTT^k_\ro(\Sigma),\
    t_1,\dots,t_k\in\cI
  \bigr\}.
\]}%

We claim that the set $\cW'$ defined above is again a satisfiability
witness for $\phi$.
Before formalising and proving this claim, we introduce one auxiliary
notion.

Given a consistent set of literals $\Gamma$ over variables
$x_1,\dots,x_k$, a \emph{$\cW$-completion} of $\Gamma$ is a $k$-type
$\tau$ such that
\(
  \Gamma\subseteq\tau
\)
and
\(
  \TTTo(\tau)\subseteq\cW
\).
Whenever $\Gamma$ is obtained by consistently fixing the $1$-types
and some outer-types from $\cW$, $\phi$-realisability of $\cW$ ensures
that a $\cW$-completion exists.
Indeed, for each remaining set of variables, $\phi$-realisability
provides an outer-type from $\cW$ consistent with the prescribed
$1$-types, while the independence property for outer-types allows these
choices to be made independently of the outer-types already fixed.

\begin{claim}\label{claim:witness-small-1-types}
  The set $\cW'$ is a satisfiability witness for $\phi$.
\end{claim}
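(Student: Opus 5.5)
We verify the two defining properties of a satisfiability witness, namely $\phi$-realisability and the $\phi$-extension property, separately for $\cW'$. Throughout we use two facts about the operator $\chc$. First, $\chc$ changes only literals with exactly one variable. Second, every $1$-type in $\cW'$ has the form $\chc(\pi;s)$, hence is a chosen representative lying in $\cW$ and is fixed by $\chc(\cdot\,;s)$ for the corresponding $s$.

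\emph{Realisability.}
\begin{itemize}[nosep]
  \item The unique $0$-type is inherited from $\cW$, because variable-free literals are untouched. Since all representatives lie in $\cW$, every element of $\cW'$ is consistent with this $0$-type.
  \item $\cW'$ contains a $1$-type: take $k=1$.
  \item Closure under permutations holds because $\chc(\tau_\ro;t_1,\dots,t_k)$ permuted by $\varsigma$ equals $\chc(\tau_\ro^\varsigma;t_{\varsigma(1)},\dots)$, with the indices permuted accordingly.
  \item Closure under induced $1$-types holds because $\chc(\tau_\ro;\bar t)\langle x_i\rangle=\chc(\pi_i;t_i)$, which lies in $\cW'$ by taking $k=1$.
  \item For the last condition, let $\pi'_1,\dots,\pi'_k\in\cW'$ with $\pi'_i=\chc(\pi_i;s_i)$. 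These are $1$-types of $\cW$, so realisability of $\cW$ gives $\tau_\ro\in\cW$ with $\tau_\ro\langle x_i\rangle=\pi'_i$. Then $\chc(\tau_\ro;s_1,\dots,s_k)=\tau_\ro\in\cW'$.
\end{itemize}

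\emph{Extension property.} Fix $t\in\cI$ and a $\cW'$-requirement $\tau'$, and write $\pi'_i\coloneq\tau'\langle x_i\rangle$. The plan has four steps.

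First, pull $\tau'$ back to a $\cW$-requirement $\tau$ that agrees with $\tau'$ on $\AAA_t\cup\BBB_t$.
\begin{itemize}[nosep]
  \item If $n_t\ge2$, the outer-type $\tau'\langle x_1,\dots,x_{n_t}\rangle$ equals $\chc(\rho;t_1,\dots,t_{n_t})$ for some $\rho\in\cW$. Put $\rho$ on $x_1,\dots,x_{n_t}$ and put $\pi'_i$ on $x_i$ for $i>n_t$.
  \item If $n_t=0$, put $\pi'_i$ on every $x_i$.
\end{itemize}
Take a $\cW$-completion $\tau$ of these data; it exists by the remark preceding the claim.

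The agreement on $\AAA_t$ is where the definition of $\sim_{t_i}$ matters. When $n_t\ge2$ we have $\AAA_t\subseteq\AAA^*$, and $\rho\langle x_i\rangle\sim_{t_i}\pi'_i$ with $\chc(\rho\langle x_i\rangle;t_i)=\pi'_i$. So the $1$-types on the uniform variables agree on every atom of $\AAA^*$. Agreement on $\BBB_t$ holds because $\chc$ preserves multi-variable literals.

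Second, apply the $\phi$-extension property of $\cW$ to obtain a $(\cW,t)$-extension $\sigma$ of $\tau$.

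Third, perform surgery on $\sigma$, in the spirit of Claim~\ref{claim:type-surgery}. Replace the literals on $x_1,\dots,x_m$ by $\tau'$, and replace the $1$-type $\pi_{m+1}\coloneq\sigma\langle x_{m+1}\rangle$ by $\chc(\pi_{m+1};t)$. Keep all literals of $\sigma$ that contain $x_{m+1}$ together with at least one other variable. Call the result $\sigma'$.

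Fourth, check that $\sigma'$ is a $(\cW',t)$-extension of $\tau'$.
\begin{itemize}[nosep]
  \item $\sigma'$ extends $\tau'$ by construction.
  \item For $\sigma'\models\psi_t$: atoms in $\AAA_t\cup\BBB_t$ have the same values as in $\tau$. Atoms in $\CCC_t$ with at least two variables are unchanged. Atoms in $\CCC'_t$ are preserved by $\sim_t$.
  \item Outer-types of $\sigma'$ on subsets of $x_1,\dots,x_m$ lie in $\TTTo(\tau')\subseteq\cW'$.
  \item An outer-type of $\sigma'$ on $x_{i_1},\dots,x_{i_k},x_{m+1}$ equals $\chc(\sigma\langle x_{i_1},\dots,x_{i_k},x_{m+1}\rangle;s_{i_1},\dots,s_{i_k},t)$, which lies in $\cW'$. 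Here $s_i\coloneq t_i$ for $i\le n_t$, while for $i>n_t$ we choose $s_i$ with $\chc(\pi'_i;s_i)=\pi'_i$. This identity needs $\chc(\sigma\langle x_i\rangle;s_i)=\pi'_i$ for every $i$, which holds by the choice of $\rho$ and of the $s_i$.
\end{itemize}

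The main obstacle is this bookkeeping. The first-step pullback must simultaneously respect $\AAA_t$, which is what forces the use of $\AAA^*$ in $\sim_t$, and must leave each $1$-type of $\sigma$ mapped back to the corresponding $\pi'_i$ by a single $\chc$ step. The case $n_t=0$ is easier, since no outer-type of arity at least two needs pulling back.
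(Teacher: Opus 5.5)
Your proposal is correct and follows essentially the same route as the paper. It uses the same pullback of $\tau'$ to a $\cW$-requirement $\tau$: a $\chc$-preimage of the outer-type on the uniform variables, together with $\chc$-fixed indices for $x_{n_t+1},\dots,x_m$. It then applies the same surgery $\sigma'=(\sigma\setminus\tau)\cup\tau'$, with the $1$-type of $x_{m+1}$ replaced by $\chc(\pi_{m+1};t)$, and identifies the new outer-types as $\chc(\cdot\,;t_{i_1},\dots,t_{i_k},t)$, exactly as the paper does.

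The differences are presentational only. You verify $\psi_t$ via the agreement of $\tau$ and $\tau'$ on $\AAA_t\cup\BBB_t$ instead of the paper's atom-by-atom case split. You also spell out the $\phi$-realisability check, which the paper dismisses as routine.
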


\begin{proof}
  A routine verification of the definition shows that $\cW'$ is
  $\phi$-realisable.

  To establish the $\phi$-extension property for $\cW'$, fix a
  $\cW'$-requirement $\tau'$ and $t\in\cI$.
  Write
  \tightdisplay{\[
    \text{$\pi'_i\coloneq \tau' \langle x_i \rangle$ for $i\in[m]$,}
    \qquad
    \tau'_\ro\coloneq \tau' \langle x_1,\dots,x_{n_t} \rangle.
  \]}%

  By construction of $\cW'$, we can choose an $n_t$-outer-type
  $\tau_\ro\in\cW$ and indices $t_1,\dots,t_{n_t} \in \cI$ such that
  \(
    \tau'_\ro=\chc(\tau_\ro; t_1,\dots,t_{n_t}).
  \)
  Moreover, for every $i\in[n_t+1,m]$, we can choose $t_i\in\cI$ such that
  $\pi'_i = \chc(\pi'_i; t_i)$.
  We now construct a $\cW$-requirement $\tau$ in three steps:
  \begin{enumerate}[nosep]
    \item
    Set
    \(
      \tau \langle x_1,\dots,x_{n_t} \rangle
      \coloneq
      \tau_\ro.
    \)

    \item
    For every $i\in[n_t+1,m]$, set
    \(
      \tau \langle x_i \rangle\coloneq\pi'_i.
    \)

    \item
    Using $\phi$-realisability of $\cW$, take a $\cW$-completion of the
    resulting set of literals.
  \end{enumerate}

  Since $\cW$ is a satisfiability witness, $\tau$ admits a
  $(\cW,t)$-extension $\sigma$.
  Write
  \tightdisplay{\[
    \text{$\pi_i\coloneq \sigma\langle x_i \rangle$ for $i\in[m+1]$,}
    \qquad
    \pi'_{m+1}
    \coloneq
    \chc(\pi_{m+1};t).
  \]}%

  We next define an $(m+1)$-type $\sigma'$ in two steps:
  \begin{enumerate}[nosep]
    \item
    Set
    \(
      \sigma'\coloneq(\sigma\setminus\tau)\cup\tau'.
    \)

    \item
    Replace the $1$-type induced on $x_{m+1}$ by setting
    \(
      \sigma' \langle x_{m+1} \rangle
      \coloneq
      \pi'_{m+1}.
    \)
  \end{enumerate}

  We claim that the resulting $(m+1)$-type $\sigma'$ is a
  $(\cW',t)$-extension of $\tau'$.
  Clearly, $\sigma'$ extends $\tau'$, so it remains to verify that
  $\sigma'\models\psi_t$ and that $\TTTo(\sigma')\subseteq\cW'$.

  \begin{figure}[h]
    \centering
    \begin{tikzpicture}[scale=0.95]
  \node[draw, circle, minimum height=1.1cm] (node1) at (0,3) {$x_1$};
  \node[draw, circle, minimum height=1.1cm] at (2,3) {$x_2$};
  \node[node font=\Large] at (3.5,3) {$\dots$};
  \node[draw, circle, minimum height=1.1cm] at (5,3) {$x_{n_t}$};
  \node[draw, circle, minimum height=1.1cm] at (7,3) {$x_{n_t+1}$};
  \node[node font=\Large] at (8.5,3) {$\dots$};
  \node[draw, circle, minimum height=1.1cm] at (10,3) {$x_m$};
  \node[draw, circle, minimum height=1.1cm] at (12,3) {$x_{m+1}$};
  \draw[rounded corners=28.4pt] (-1,2) rectangle (6,4);

  \node at (0,1.5) {$\pi_1$};
  \node at (2,1.5) {$\pi_2$};
  \node at (5,1.5) {$\pi_{n_t}$};
  \node at (7,1.5) {$\pi_{n_t+1}$};
  \node at (10,1.5) {$\pi_m$};
  \node at (12,1.5) {$\pi_{m+1}$};

  \node at (0,-1.5) {$\pi'_1$};
  \node at (2,-1.5) {$\pi'_2$};
  \node at (5,-1.5) {$\pi'_{n_t}$};
  \node at (7,-1.5) {$\pi'_{n_t+1} = \pi_{n_t+1}$};
  \node at (10,-1.5) {$\pi'_m = \pi_m$};
  \node at (12,-1.5) {$\pi'_{m+1}$};

  \node[draw, circle, minimum height=1.1cm] (node2) at (0,-3) {$x_1$};
  \node[draw, circle, minimum height=1.1cm] at (2,-3) {$x_2$};
  \node[node font=\Large] at (3.5,-3) {$\dots$};
  \node[draw, circle, minimum height=1.1cm] at (5,-3) {$x_{n_t}$};
  \node[draw, circle, minimum height=1.1cm] at (7,-3) {$x_{n_t+1}$};
  \node[node font=\Large] at (8.5,-3) {$\dots$};
  \node[draw, circle, minimum height=1.1cm] at (10,-3) {$x_m$};
  \node[draw, circle, minimum height=1.1cm] at (12,-3) {$x_{m+1}$};
  \draw[rounded corners=28.4pt] (-1,-4) rectangle (6,-2);
  
  \draw[arrows=|-Triangle] (0,1) -- (0,-1);
  \node at (-0.5,0) {$\chc_{t_1}$};
  \draw[arrows=|-Triangle] (2,1) -- (2,-1);
  \node at (1.5,0) {$\chc_{t_2}$};
  \draw[arrows=|-Triangle] (5,1) -- (5,-1);
  \node at (4.5,0) {$\chc_{t_{n_t}}$};
  \draw[arrows=|-Triangle] (7,1) -- (7,-1);
  \node at (6.3,0) {$\chc_{t_{n_t}+1}$};
  \draw[arrows=|-Triangle] (10,1) -- (10,-1);
  \node at (9.5,0) {$\chc_{t_m}$};
  \draw[arrows=|-Triangle] (12,1) -- (12,-1);
  \node at (11.5,0) {$\chc_{t}$};

  \draw [decorate, decoration={brace,amplitude=5pt,raise=4ex}, very thick]
  (-1,4) -- (11,4) node[midway,yshift=3em]{$\tau$};
  \draw [decorate, decoration={brace,amplitude=5pt,raise=4ex}, very thick]
  (-1,5) -- (12.5,5) node[midway,yshift=3em]{$\sigma$};

  \draw [decorate, decoration={brace,amplitude=5pt,mirror,raise=4ex}, very thick]
  (-1,-4) -- (11,-4) node[midway,yshift=-3em]{$\tau'$};
  \draw [decorate, decoration={brace,amplitude=5pt,mirror,raise=4ex}, very thick]
  (-1,-5) -- (12.5,-5) node[midway,yshift=-3em]{$\sigma'$};

  \node at (-1.5,3) {$\tau_\ro$};
  \node at (-1.5,-3) {$\tau'_\ro$};
  \draw[arrows=|-Triangle] (-1.5,2.5) -- (-1.5,-2.5);
  \node at (-2,0) {$\chc_\ts$};
\end{tikzpicture}
    \caption{In the figure $\chc_\ts$ abbreviates $\tau_\ro \mapsto \chc(\tau_\ro;t_1,\dots,t_{n_t})$.}
    \label{fig:rysunek}
  \end{figure}

  Before verifying these two properties, we record the key properties
  of the construction:
  \tightdisplay{\[
    \text{$\pi_i\sim_{t_i}\pi'_i$ for $i\in[n_t]$,}
    \qquad
    \text{$\pi_i=\pi'_i$ for $i\in[n_t+1,m]$,}
    \qquad
    \pi_{m+1}\sim_t\pi'_{m+1},
  \]}%
  and
  \tightdisplay{\[
    \tau'_\ro=\chc(\tau_\ro;t_1,\dots,t_{n_t}).
  \]}%

  Figure~\ref{fig:rysunek} illustrates the above relationships among the types
  in the construction.
  The upper row represents $\tau$ and its extension $\sigma$, while the
  lower row represents the corresponding types $\tau'$ and $\sigma'$.
  The individual circles represent the $1$-types
  $\pi_1,\dots,\pi_{m+1}$ and
  $\pi'_1,\dots,\pi'_{m+1}$, whereas the large rounded
  regions represent the outer-types $\tau_\ro$ and $\tau'_\ro$ on the
  uniform variables $x_1,\dots,x_{n_t}$.
  The vertical arrows indicate the applications of $\chc$
  with the displayed parameters.
  Finally, the braces indicate which variables are covered by
  $\tau,\sigma,\tau'$, and $\sigma'$.

  We first verify that $\sigma'\models\psi_t$.
  Since $\sigma\models\psi_t$, it suffices to show that every atom of
  $\psi_t$ has the same truth value in $\sigma'$ as in $\sigma$.
  Recall that, by convention,
  \(
    n_t\in\{0\}\cup[2,m].
  \)
  We first consider the case $n_t\ge2$ and examine the possible forms of
  an atom $\alpha\in\Atoms(\psi_t)$.

  \begin{enumerate}[nosep,label={(\alph*)}]
    \item\label{witness-small-1-types:item1}
    Suppose that $\alpha\in\AAA_t$ and
    $\Vars(\alpha)\subseteq\{x_i\}$ for some $i\in[n_t]$.
    Since $n_t\ge2$, we have $\AAA_t\subseteq\AAA^*$.
    Hence, by the definition of $\sim_{t_i}$ and
    $\pi_i\sim_{t_i}\pi'_i$,
    we have
    \(
      \sigma\models\alpha
      \Longleftrightarrow
      \sigma'\models\alpha.
    \)

    \item\label{witness-small-1-types:item2}
    Suppose that $\alpha\in\AAA_t$ and
    $\Vars(\alpha)\subseteq\{x_i\}$ for some
    $i\in[n_t+1,m]$.
    Since $\pi_i=\pi'_i$, we directly obtain 
    \(
      \sigma\models\alpha
      \Longleftrightarrow
      \sigma'\models\alpha.
    \)

    \item\label{witness-small-1-types:item3}
    Suppose that $\alpha\in\BBB_t$.
    Then $\alpha$ involves precisely the uniform variables
    $x_1,\dots,x_{n_t}$, and hence its truth value in $\sigma$ is
    determined by the induced outer-type
    \(
      \tau_\ro
      =
      \sigma \langle x_1,\dots,x_{n_t} \rangle,
    \)
    whereas its truth value in $\sigma'$ is determined by
    \(
      \tau'_\ro
      =
      \sigma' \langle x_1,\dots,x_{n_t} \rangle.
    \)
    By construction,
    \(
      \tau'_\ro
      =
      \chc(\tau_\ro;t_1,\dots,t_{n_t}).
    \)
    Since the mapping
    $\tau_\ro\mapsto \chc(\tau_\ro;t_1,\dots,t_{n_t})$
    changes only the induced $1$-types, it preserves all literals involving
    at least two distinct variables.
    As $\alpha$ involves precisely $n_t\ge2$ distinct variables, we obtain
    \(
      \sigma\models\alpha
      \Longleftrightarrow
      \sigma'\models\alpha.
    \)

    \item\label{witness-small-1-types:item4}
    Suppose that $\alpha\in\CCC'_t$.
    Then $\alpha$ is determined by the $1$-type induced on
    $x_{m+1}$.
    Since $\pi_{m+1}\sim_t\pi'_{m+1}$, the definition of $\sim_t$
    yields
    \(
      \sigma\models\alpha
      \Longleftrightarrow
      \sigma'\models\alpha.
    \)

    \item\label{witness-small-1-types:item5}
    Finally, suppose that
    $\alpha\in\CCC_t\setminus\CCC'_t$.
    Then $\alpha$ contains $x_{m+1}$ and at least one further
    variable.
    Its corresponding literal $\pm \alpha$ is 
    contained in $\sigma \setminus \tau$,
    but does not belong to the $1$-type induced on $x_{m+1}$.
    It is therefore
    retained in $\sigma'$.
    Hence,
    \(
      \sigma\models\alpha
      \Longleftrightarrow
      \sigma'\models\alpha.
    \)
  \end{enumerate}

  Since the cases above collectively cover all atoms of $\psi_t$,
  it follows that $\sigma'\models\psi_t$.

  We now consider the case $n_t=0$.
  The argument above applies with only minor simplifications.
  In this case, $\tau_\ro$ and $\tau'_\ro$ are simply the $0$-type of
  $\cW$, so no choice of $1$-types is involved in their construction.
  Moreover, cases~\ref{witness-small-1-types:item1}
  and~\ref{witness-small-1-types:item3} cannot occur, while all remaining
  cases apply verbatim.
  Thus, $\sigma'\models\psi_t$ also when $n_t=0$.

  We now verify that $\TTTo(\sigma')\subseteq\cW'$.
  Since $\sigma'$ extends $\tau'$ and $\tau'$ is a $\cW'$-requirement,
  every outer-type induced by $\sigma'$ on any sequence of the first $m$
  variables already belongs to $\cW'$.
  It remains therefore to verify outer-types induced on sequences involving $x_{m+1}$.
  Hence, consider
  such a sequence of pairwise distinct variables
  \(
    x_{i_1},\dots,x_{i_k},x_{m+1}.
  \)
  Let
  \(
    \tilde{\tau}_\ro
    \coloneq
    \sigma \langle x_{i_1},\dots,x_{i_k},x_{m+1} \rangle.
  \)
  Since $\sigma$ is a $(\cW,t)$-extension, we have
  $\tilde{\tau}_\ro\in\cW$.
  By construction of $\sigma'$,
  \tightdisplay{\[
    \sigma' \langle x_{i_1},\dots,x_{i_k},x_{m+1} \rangle
    =
    \chc(\tilde{\tau}_\ro;t_{i_1},\dots,t_{i_k},t).
  \]}%
  Hence, by the construction of $\cW'$,
  \(
    \sigma' \langle x_{i_1},\dots,x_{i_k},x_{m+1} \rangle
    \in\cW'.
  \)
  Thus,
  \(
    \TTTo(\sigma')\subseteq\cW'.
  \)

  We have shown that $\sigma'$ extends $\tau'$, satisfies $\psi_t$, and
  induces only outer-types from $\cW'$.
  Therefore, $\sigma'$ is a $(\cW',t)$-extension of $\tau'$, as claimed.
\end{proof}

\paragraph{Reduction of witness size.}
We now reduce the size of $\cW'$ by introducing another family of equivalence relations, this time on the $\cW'$-requirements.

For $t \in \cI$ and $\cW'$-requirements $\tau_1,\tau_2$, we define $\tau_1 \approx_t \tau_2$ if
\begin{itemize}[nosep]
  \item $\tau_1 \models \alpha \iff \tau_2 \models \alpha$ for every $\alpha \in \AAA_t \cup \BBB_t$;
  \item $\tau_1 \langle x_i \rangle = \tau_2 \langle x_i \rangle$ for every $i \in [m]$.
\end{itemize}

Let $N_{\approx_t}$ denote the number of equivalence classes of $\approx_t$.
There are $|\cW' \cap \TTT^1(\Sigma)|^m$ sequences of $1$-types of length $m$
and at most $2^{s_\alpha+s_\beta}$ truth assignments to the atoms in $\AAA_t \cup \BBB_t$. Hence,
\tightdisplay{\begin{equation}\label{eqn:size-estimate-2}
  \sum\nolimits_{t \in \cI} N_{\approx_t} \;\le\; |\cI| \cdot |\cW' \cap \TTT^1(\Sigma)|^m \cdot 2^{s_\alpha+s_\beta}.
\end{equation}}%

We construct the final set $\cW''$ of outer-types in two stages:
\begin{enumerate}[nosep]
  \item\label{construction-Wbis-item1}
  For each $t\in\cI$ and each equivalence class $\cC$ of $\approx_t$,
  choose a representative $\tau\in\cC$ and a $(\cW',t)$-extension
  $\sigma$ of $\tau$, and include all outer-types from $\TTTo(\sigma)$.

  \item\label{construction-Wbis-item2}
  For each $k\in[m+1]$ and each sequence of $1$-types
  $\pi_1,\dots,\pi_k\in\cW'\cap\TTT^1(\Sigma)$,
  choose a single $k$-outer-type $\tau_\ro\in\cW'$ such that
  \(
    \tau_\ro \langle x_i \rangle =\pi_i
  \)
  for every $i\in[k]$,
  and include all $k$-outer-types obtainable from $\tau_\ro$ by permuting
  variables.
\end{enumerate}

\begin{claim}
  The set $\cW''$ is a satisfiability witness for $\phi$.
\end{claim}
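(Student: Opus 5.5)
We must check two things: that $\cW''$ is $\phi$-realisable, and that it has the $\phi$-extension property. A preliminary observation is $\cW''\subseteq\cW'$. Stage~\ref{construction-Wbis-item1} adds only outer-types induced by $(\cW',t)$-extensions, which lie in $\cW'$ by definition. Stage~\ref{construction-Wbis-item2} adds only permutations of members of $\cW'$, and $\cW'$ is closed under permutations by its $\phi$-realisability (Claim~\ref{claim:witness-small-1-types}).

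\textbf{Realisability.} I verify the conditions of Definition~\ref{definition:phi-realisability} in turn.
\begin{itemize}[nosep]
  \item $0$-type: applying stage~\ref{construction-Wbis-item2} with $k=1$ puts every $1$-type of $\cW'$ into $\cW''$. The $1$-type set is therefore exactly $\cW'\cap\TTT^1(\Sigma)$, which is non-empty. All members of $\cW''$ share the unique $0$-type of $\cW'$, and that $0$-type is itself placed in $\cW''$ (e.g., as the outer-type induced on the empty sequence of variables in stage~\ref{construction-Wbis-item1}).
  \item Permutation closure: in stage~\ref{construction-Wbis-item1}, $\TTTo(\sigma)$ already contains all reorderings of each induced tuple. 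In stage~\ref{construction-Wbis-item2}, permutations are added explicitly.
  \item Closure under induced $1$-types: every $1$-type induced by a member of $\cW''$ lies in $\cW'$, hence in $\cW''$.
  \item Existence of outer-types over prescribed $1$-types: this is exactly what stage~\ref{construction-Wbis-item2} provides.
\end{itemize}

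\textbf{Extension property.} Fix a $\cW''$-requirement $\tau$ and $t\in\cI$. Since $\cW''\subseteq\cW'$, $\tau$ is also a $\cW'$-requirement, so it falls into some $\approx_t$-class $\cC$. Stage~\ref{construction-Wbis-item1} chose a representative $\tau_1\in\cC$ together with a $(\cW',t)$-extension $\sigma_1$, and placed $\TTTo(\sigma_1)$ into $\cW''$. Because $\tau\approx_t\tau_1$, the two hypotheses of Claim~\ref{claim:type-surgery} hold exactly. Applying that claim with $\tau_1,\tau$ yields $\sigma\coloneq(\sigma_1\setminus\tau_1)\cup\tau$, which extends $\tau$, satisfies $\psi_t$, and has
\[
  \TTTo(\sigma)\subseteq\TTTo(\tau)\cup\TTTo(\sigma_1)\subseteq\cW''.
\]
Hence $\sigma$ is a $(\cW'',t)$-extension of $\tau$.

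\textbf{Main obstacle.} The delicate step is the first one, the inclusion $\cW''\subseteq\cW'$. It is needed so that every $\cW''$-requirement is a $\cW'$-requirement and thus lies in some $\approx_t$-class that stage~\ref{construction-Wbis-item1} has handled. The inclusion itself is immediate from the construction, and once it is in place Claim~\ref{claim:type-surgery} carries the rest of the argument. It is precisely this use of the claim that lets the size bound depend only on the number of $\approx_t$-classes.
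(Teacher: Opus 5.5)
Your proof is correct and follows the paper's argument. The paper also obtains the extension property by noting $\cW''\subseteq\cW'$, passing from the $\cW''$-requirement $\tau$ to the stage-1 representative of its $\approx_t$-class, and applying Claim~\ref{claim:type-surgery}; your item-by-item check of $\phi$-realisability just spells out the paper's one-line remark that stage~1 supplies every condition except the existence of outer-types over prescribed $1$-types, which stage~2 supplies.
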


\begin{proof}
  We first verify the $\phi$-extension property.
  Let $\tau$ be any $\cW''$-requirement and let $t\in\cI$.
  Since $\cW''\subseteq\cW'$, the type $\tau$ is also a
  $\cW'$-requirement.
  Let $\tau'$ be the representative of the $\approx_t$-class of $\tau$
  selected in stage~\ref{construction-Wbis-item1}, and let $\sigma'$ be
  the selected $(\cW',t)$-extension of $\tau'$.
  By Claim~\ref{claim:type-surgery},
  \(
    \sigma \coloneq (\sigma'\setminus\tau')\cup\tau
  \)
  extends $\tau$ and satisfies $\psi_t$.
  Moreover, every outer-type induced by $\sigma$ is induced either by
  $\sigma'$ or by $\tau$.
  The former belong to $\cW''$ by
  stage~\ref{construction-Wbis-item1}, while the latter belong to
  $\cW''$ because $\tau$ is a $\cW''$-requirement.
  Hence, $\sigma$ is in fact a $(\cW'',t)$-extension of $\tau$.

  It remains to verify that $\cW''$ is $\phi$-realisable.
  The outer-types included in stage~\ref{construction-Wbis-item1} already
  satisfy all conditions of $\phi$-realisability except, potentially, the
  condition requiring every sequence of $1$-types of length at most $m+1$
  to be induced by some outer-type.
  The outer-types included in stage~\ref{construction-Wbis-item2} ensure
  that this remaining condition is satisfied.
\end{proof}

\paragraph{Size bound.}
An $(m{+}1)$-type can induce at most
\(
  \sum_{i=0}^{m+1}\binom{m+1}{i}\cdot i!
  \le
  (m+1)^{m+2}
\)
distinct outer-types.
For each equivalence class $\cC$ of $\approx_t$,
stage~\ref{construction-Wbis-item1} includes all outer-types induced by
one selected $(m+1)$-type extending the representative of $\cC$.
Moreover, stage~\ref{construction-Wbis-item2} includes at most
$(m+1)!$ additional outer-types for every sequence of $1$-types of
length at most $m+1$.
Therefore,
\tightdisplay{\begin{equation}\label{eqn:size-estimate-3}
  |\cW''|
  \;\le\;
  \sum\nolimits_{t\in\cI}
    N_{\approx_t}\cdot(m+1)^{m+2}
  \;+\;
  \sum\nolimits_{k\in[m+1]}
    |\cW'\cap\TTT^1(\Sigma)|^k\cdot(m+1)!.
\end{equation}}%

Combining
inequalities~\eqref{eqn:size-estimate-1},~\eqref{eqn:size-estimate-2},
and~\eqref{eqn:size-estimate-3} yields
\tightdisplay{\begin{equation}\label{eqn:size-estimate-4}
  |\cW''|
  \;\le\;
  |\cI|^{m+1}
  \cdot
  2^{m\cdot(s^*_\alpha+s'_\gamma)+s_\alpha+s_\beta}
  \cdot
  (m+1)^{m+2}
  \;+\;
  (m+1)\cdot
  |\cI|^{m+1}
  \cdot
  2^{(m+1)\cdot(s^*_\alpha+s'_\gamma)}
  \cdot
  (m+1)!.
\end{equation}}%

By Lemma~\ref{lemma:reduction-to-simple-form},
$s^*_\alpha=\cO(|\phi_{\rm in}|^2)$,
$|\cI|=2^{\cO(|\phi_{\rm in}|\log|\phi_{\rm in}|)}$, and
$m$, $s_\alpha$, $s_\beta$, and $s'_\gamma$ are all
$\cO(|\phi_{\rm in}|)$.
Substituting these bounds into~\eqref{eqn:size-estimate-4} gives
\(
  |\cW''|=2^{\cO(|\phi_{\rm in}|^3)}.
\)

We have therefore established an exponential bound on the size of minimal satisfiability witnesses.
By applying Lemma~\ref{lemma:random-models} to such a witness, we obtain that every satisfiable sentence in $\DMDK{}$ admits a model of exponential size.
Hence, $\DMDK{}$ has the exponential model property and its satisfiability problem is in $\NExpTime$.

For the lower bound, recall that $\DMDK{}$ subsumes the Scott normal form of equality-free $\FOt{}$, whose satisfiability
problem is $\NExpTime$-hard; see, e.g.,~\cite{PH23}.

\begin{theorem}\label{theorem:main-DMDK}
  The class \DMDK{} has the exponential model property.
  The satisfiability problem for \DMDK{} is \NExpTime-complete.
\end{theorem}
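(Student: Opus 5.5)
The theorem assembles results already established in this section, so the plan is to chain them in order and then account for the encoding overheads.

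\emph{Reduction to Skolem normal form.} Given $\phi_0\in\DMDK{}$, apply Proposition~\ref{prop:reduction} to obtain $\phi_{\rm in}=\cF(\phi_0)\in\SNF{}$. This step runs in polynomial time, so $|\phi_{\rm in}|$ is polynomial in $|\phi_0|$. It also preserves satisfiability over every non-empty domain. Hence a model of $\phi_{\rm in}$ of size $N$ yields a model of $\phi_0$ of the same size, and vice versa. Next, adopt the standard name assumption, identifying coinciding constants in $\phi_{\rm in}$; this does not change model sizes. Then apply Lemma~\ref{lemma:reduction-to-simple-form} to get the logically equivalent $\phi$, with the recorded bounds on $|\cI|$, $\max_t|\psi_t|$ and $|\AAA^*|$. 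Models that are too small, namely those with fewer than $m$ unnamed elements, are harmless. Every model produced below has $p$ unnamed elements with $p$ much larger than $m$, and the completeness argument works from the infinite unfolding, so the standing assumption of at least $m$ unnamed elements is always met.

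\emph{Satisfiability implies a small witness implies a small model.} If $\phi_0$ is satisfiable, so is $\phi$, and the completeness argument gives a satisfiability witness $\cW$. The two-stage reduction then produces witnesses $\cW'$ (Claim~\ref{claim:witness-small-1-types}) and $\cW''$, with $|\cW''|=2^{\cO(|\phi_{\rm in}|^3)}$. Plugging $\cW''$ into Lemma~\ref{lemma:random-models} gives a model of size
\[
  2^{\cO(|\phi_{\rm in}|\log|\phi_{\rm in}|)}\cdot 2^{\cO(|\phi_{\rm in}|^4)},
\]
which is $2^{\mathrm{poly}(|\phi_0|)}$. Since the domain is unchanged by the Skolemisation step, this is also a model of $\phi_0$, and the exponential model property follows.

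\emph{Complexity.} For the upper bound, the \NExpTime{} algorithm guesses a structure of exponential size and checks $\phi_0$ on it. Model checking a sentence with at most $|\phi_0|$ quantifiers over a domain of size $2^{\mathrm{poly}}$ takes $(2^{\mathrm{poly}})^{|\phi_0|}$ steps, which is still singly exponential. For hardness, equality-free $\FOt{}$ in Scott normal form lies in $\DMK{}\subseteq\DMDK{}$ and is \NExpTime-hard.

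The only real obstacle is bookkeeping. Every blow-up has to be measured in $|\phi_{\rm in}|$ rather than in $|\phi|$, because $\phi$ is exponentially large. This is handled by the parameter bounds in Lemma~\ref{lemma:reduction-to-simple-form} together with the explicit dependence on $|\cW|$ in Lemma~\ref{lemma:random-models}. The key check is that the exponent $\cO(|\phi_{\rm in}|)$ on $|\cW''|$ composes with the bound $2^{\cO(|\phi_{\rm in}|^3)}$ to give a polynomial exponent.
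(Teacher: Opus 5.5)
Your proposal is correct and follows the paper's argument essentially step for step. The chain is: Skolemisation into \SNF{}, then the standard name assumption and Lemma~\ref{lemma:reduction-to-simple-form}, then completeness via the infinite unfolding, then the two-stage shrinking to a witness of size $2^{\cO(|\phi_{\rm in}|^3)}$, and finally Lemma~\ref{lemma:random-models}, which yields models of size $2^{\cO(|\phi_{\rm in}|^4)}$; hardness comes from equality-free Scott normal form, and your remarks on the at-least-$m$-unnamed-elements assumption and on exponential-time model checking correctly fill in details the paper leaves implicit.
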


\section{The Decidability of \DMDK{} with Equivalence}
\label{sec:equivalence}

In this section, we study extensions of $\DMDK{}$ with distinguished
equivalence and transitive relations.
Our main result is that satisfiability for $\DMDK{}$ extended with a
single equivalence relation is $\TwoNExpTime$-complete.
We also show that allowing two equivalence relations, or a single
transitive relation, leads to undecidability.
We begin by introducing the relevant notation.

\paragraph{Notation.}
For a fragment $\cF$, we write $\cF + \EEE$ for its extension with a
single distinguished binary relation symbol $\EEE$, called the
\emph{equivalence symbol}.
Models are restricted to structures in which $\EEE$ is interpreted as
an equivalence relation.
More generally, for $k \in \N$, we write $\cF + k\EEE$ for the
extension with $k$ distinguished equivalence symbols
$\EEE_1,\dots,\EEE_k$.

We write $\cF+\TR$ for the extension with a single distinguished
binary relation symbol $\TR$, called the \emph{transitive symbol},
whose interpretation is required to be transitive.
Analogously, we write $\cF+\TRrefl$ for the extension with a single
distinguished binary relation symbol $\TRrefl$, called the
\emph{reflexive-transitive symbol}, whose interpretation is required
to be both reflexive and transitive.

\paragraph{Free use.}
In $\DMK{} + \EEE$, the equivalence symbol $\EEE$ may, without loss of
generality, be used \emph{freely}, meaning that atoms involving $\EEE$
need not satisfy the restrictions of $\DMK{}$ on $\phi$-prefixes.
Indeed, assuming that the sentence is in negation normal form, which we
may always do, every literal involving $\EEE$ can be rewritten using
the equivalence
\tightdisplay{\begin{equation}\label{eqn:EEE-sugar}
  \pm\, x \EEE y
  \quad\Longleftrightarrow\quad
  \exists z\,
  (x \EEE z \wedge \pm\, z \EEE y),
\end{equation}}%
where $z$ is fresh.
Since $\EEE$ is interpreted as an equivalence relation, the rewriting
is sound, and both introduced occurrences of $\EEE$ have
$\phi$-prefixes ending with an existential quantifier.

The same observation applies to $\DMK{}+\TRrefl$: the distinguished
symbol $\TRrefl$ may, without loss of generality, be used freely.
Indeed, the above rewriting applies to any reflexive and transitive relation.
This is the reason why we consider a distinguished reflexive-transitive relation rather than merely a transitive one.
For the undecidability results established below, however, this entails
no loss of generality: reflexivity can itself be expressed in $\DMK{}$
by the sentence
\(
  \forall x\,\TR(x,x).
\)

\subsection{Lower Bounds and Undecidability}
\label{subsec:EEE-lower}

We now establish the $\TwoNExpTime{}$ lower bound for $\DMK{}+\EEE$
and show that the closely related extensions $\DMK{}+2\EEE$ and
$\DMK{}+\TRrefl$ are already undecidable.

\paragraph{Lower bounds.}
Kiero\'nski and Kuusisto~\cite{KK15} studied the uniform
one-dimensional fragment $\UF$ with free use of the equivalence symbol
$\EEE$; under our convention, this logic is denoted by $\UF+\EEE$.
They proved that satisfiability for $\UF+\EEE$ is
$\TwoNExpTime{}$-complete, with the lower bound already holding for the
equality-free fragment.
A direct inspection of their construction
in~\cite[Section~5.1]{KK15} shows that, up to straightforward logical
manipulations, all formulas used there belong to $\DMK{}+\EEE$.
In particular,
free occurrences of $\EEE$ are permitted by~\eqref{eqn:EEE-sugar}.
Moreover, since all relation symbols involved in the construction are unary or binary, the resulting conjuncts can all be
taken to have uniformity~$2$ and hence merged into a single
$\DMK{}+\EEE$ sentence.
Thus, satisfiability for $\DMK{}+\EEE$ is
$\TwoNExpTime{}$-hard.

We strengthen this lower bound by restricting the non-distinguished part of the signature
to unary predicates.
Let $\MON{}$ denote the monadic fragment of equality-free, constant-free $\FO{}$,
that is, the class of all equality-free sentences over signatures containing only
unary predicate symbols.
We consider its extension $\MON{}+\EEE$, in which the distinguished
equivalence symbol $\EEE$ is the only binary symbol available and can be used freely.
By reducing from the doubly exponential grid tiling problem, we show that the
$\TwoNExpTime{}$ lower bound holds already for $\MON{}+\EEE$.
The details of this reduction are deferred to Appendix~\ref{appendix:equivalences}.

\begin{proposition}\label{prop:EEE-lower}
  The satisfiability problem for $\MON{}+\EEE$ is
  $\TwoNExpTime{}$-hard.
\end{proposition}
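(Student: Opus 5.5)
We reduce from the doubly exponential grid tiling problem. An instance consists of a finite set of tiles $\cD$ with horizontal and vertical compatibility relations, an initial tile, and a number $n$ in unary. The question is whether the grid $2^{2^n}\times 2^{2^n}$ admits a compatible tiling starting with the initial tile. This problem is $\TwoNExpTime$-complete. Given an instance, we construct in polynomial time a sentence of $\MON{}+\EEE$ that is satisfiable if and only if the tiling exists.

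The plan is to let \emph{equivalence classes} of $\EEE$ play the role of grid cells, and to encode a doubly exponential coordinate inside each class.

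\textbf{Encoding coordinates.} Inside a class, elements carry unary predicates $B_1,\dots,B_n$, which encode a bit position $i\in[0,2^n-1]$, together with bit predicates $X$ and $Y$. We require the following.
\begin{itemize}[nosep]
  \item Every class contains, for each position $i$, an element with position $i$. This is expressed by $\exists$-statements nested over the $2^n$ positions, written compactly with universal quantification over position bits: for every element $x$ and every bit pattern obtainable by flipping, there is $y$ with $x\EEE y$ realising the neighbouring position. Alternatively, we use a counter predicate so that each position is reached from position $0$ by successor.
  \item Any two elements of the same class with the same position agree on $X$ and $Y$. This is the monadic formula $\forall x\forall y\,(x\EEE y\wedge \bigwedge_j(B_j(x)\leftrightarrow B_j(y)))\rightarrow (X(x)\leftrightarrow X(y))$, using free $\EEE$.
\end{itemize}
In this way each class determines two numbers $h,v<2^{2^n}$.

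\textbf{Encoding grid structure.} Each class is labelled by a tile, through unary predicates that are constant on the class. We then require three things.
\begin{itemize}[nosep]
  \item For every class there exist a horizontal successor class and a vertical successor class, except at the borders. The successor relation is expressed bitwise with a carry, comparing an element $x$ of one class with elements $y$ of another class at the same position $i$. The carry information is stored in auxiliary unary predicates on elements, and the carry's value at position $i$ is propagated through existentials inside the class.
  \item Within the pair $(x,y)$ of successor classes, the tiles must be compatible.
  \item The initial class has coordinates $(0,0)$ and carries the initial tile.
\end{itemize}

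\textbf{Main obstacle.} The key difficulty is that $\MON{}+\EEE$ has no binary relation between classes, so the statement ``class $D$ is the horizontal successor of class $C$'' cannot be expressed directly. We therefore invert the requirement, as in the Kieroński--Kuusisto construction. For every pair of classes $C,D$ whose coordinates are successive, we impose tile compatibility universally: $\forall x\forall y$, if the classes of $x$ and $y$ are successive in the encoded coordinates, then their tiles are compatible. Successiveness must be checked by looking at all positions. To do this with only two universally quantified elements, we mark in each class, by unary predicates on elements, the ``lowest $0$ bit'' structure of its number. We then compare positionwise via $\forall x\forall y$ with $\neg x\EEE y$ and equal position predicates.

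Finally, we ensure that all $2^{2^n}\times2^{2^n}$ coordinate pairs occur, and that coordinates determine tiles uniquely. This is done by requiring existence of successors from the initial class, and by forbidding two classes with identical coordinates from carrying different tiles, again via a universal positionwise comparison. Since all atoms are unary apart from $\EEE$, the resulting sentence lies in $\MON{}+\EEE$ and has polynomial size.
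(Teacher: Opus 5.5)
Your reduction is essentially the paper's. $\EEE$-classes act as grid cells, and $n$ unary bits index $2^n$ positions per class, which carry the $X$/$Y$ bits of the doubly exponential coordinates. Successor classes are generated from the origin via a bitwise increment, and classes with equal coordinates are forced to carry the same tile. Storing the carry in auxiliary unary predicates, instead of the paper's nested $\exists x''$ in $\psi^X_{+1,\mathrm{bit}}$, is an inessential variation.

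The ``main obstacle'' you describe is not actually an obstacle, because $\MON{}+\EEE$ allows arbitrary quantifier nesting. The paper writes ``$D$ is the horizontal successor of $C$'' directly as $\psi^X_{+1}(x,y)\wedge\psi^Y_=(x,y)$, quantifying over $x',y'$ in the two classes, and places tile compatibility inside the existential successor requirement. Your inverted universal constraint is equivalent to this given tile consistency, but only if its premise is that nested universal formula. A flat $\forall x\forall y$ positionwise check distributes the implication over positions, so it cannot express ``successive at every position implies compatible''.
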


Since $\MON{}\subseteq\DMK{}$ and free use of $\EEE$ is permitted
by~\eqref{eqn:EEE-sugar}, we have
$\MON{}+\EEE\subseteq\DMK{}+\EEE$.
Thus, Proposition~\ref{prop:EEE-lower} in particular yields the
$\TwoNExpTime{}$ lower bound for $\DMK{}+\EEE$.

\paragraph{Undecidability.}
In the same work, Kiero\'nski and Kuusisto~\cite[Section~5.5]{KK15}
proved that satisfiability for $\UF$ with either
free use of two equivalence
symbols or free use of a single transitive symbol is undecidable.
Under our convention, these logics are denoted by $\UF+2\EEE$ and
$\UF+\TR$, respectively.
Inspection of their constructions shows that all non-distinguished
relation symbols used in the reductions are in fact unary.
Consequently, their results already yield undecidability of
$\MON{}+2\EEE$ and $\MON{}+\TR$.

Since free use of the equivalence symbols is permitted
by~\eqref{eqn:EEE-sugar}, it follows directly that satisfiability for
$\DMK{}+2\EEE$ is undecidable.
In contrast, the rewriting in~\eqref{eqn:EEE-sugar} cannot be applied
to a merely transitive relation, since both reflexivity and transitivity
are required.
We therefore modify the construction for $\MON{}+\TR$ by adding all
self-loops to the distinguished relation.
This makes the relation reflexive while preserving transitivity and,
by inspection of the construction, does not affect the correctness of
the reduction.
Hence, satisfiability is undecidable already for
$\MON{}+\TRrefl$ and, by~\eqref{eqn:EEE-sugar}, also for
$\DMK{}+\TRrefl$.
We provide further details in Appendix~\ref{appendix:equivalences}.

\begin{proposition}\label{prop:EEE-undec}
  The satisfiability problems for $\MON{}+2\EEE$ and
  $\MON{}+\TRrefl$ are co-$RE$-complete.
\end{proposition}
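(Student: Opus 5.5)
We prove two things for each of $\MON{}+2\EEE$ and $\MON{}+\TRrefl$: membership in co-RE and co-RE-hardness, i.e.\ RE-hardness of unsatisfiability.

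\emph{Upper bound.} Both classes are first-order: the distinguished symbols are axiomatised by finitely many first-order sentences. For $\EEE_i$ we add reflexivity, symmetry and transitivity; for $\TRrefl$ we add reflexivity and transitivity. These axioms need not lie in $\MON{}$, since membership only concerns ordinary first-order logic. A sentence $\phi$ of $\MON{}+2\EEE$ is therefore unsatisfiable iff $\phi\wedge\mathrm{Ax}$ is unsatisfiable in plain $\FO{}$. By Gödel's completeness theorem, the latter is semi-decidable by enumerating refutations. Hence satisfiability lies in co-RE.

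\emph{Lower bound.} We follow the route sketched before the statement and reduce from a co-RE-hard problem. The natural choice is the complement of the halting problem; equivalently, we may use the infinite (non-periodic) tiling problem of $\N\times\N$, which is co-RE-complete. We construct, in polynomial time, a $\MON{}+2\EEE$ sentence that is satisfiable iff the tiling system tiles the quadrant. Unary predicates encode tiles, and the two equivalences encode rows and columns, following the construction of Kieroński and Kuusisto~\cite[Section~5.5]{KK15}. Concretely, the plan has four steps.
\begin{itemize}[nosep]
  \item Unary predicates $H_0,H_1,H_2$ and $V_0,V_1,V_2$ encode row and column indices modulo~$3$.
  \item $\EEE_1$ groups elements into "horizontal successor" pairs and $\EEE_2$ into "vertical successor" pairs. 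The mod-3 labels orient each pair and distinguish it from a pair going the other way.
  \item $\forall\exists$ statements demand the existence of right and upper neighbours inside the appropriate classes, and universal statements enforce the tiling constraints.
  \item The grid must close up: the upper neighbour of the right neighbour must coincide with the right neighbour of the upper neighbour. This uses the standard trick that each element's $\EEE_1$- and $\EEE_2$-classes interlock, as in \cite{KK15}.
\end{itemize}
Every non-distinguished symbol is unary, and all occurrences of $\EEE_i$ are free, which is permitted in $\MON{}+2\EEE$. For $\TRrefl$ we adapt the single-transitive-relation reduction of \cite{KK15} in the same way. We take its reduction for $\MON{}+\TR$ and interpret $\TRrefl$ as the reflexive closure of the intended transitive relation. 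Every formula there that refers to $\TR(x,y)$ for the intended distinct $x,y$ is guarded by unary labels that differ between $x$ and $y$, so adding the self-loops does not change the truth value of any constraint.

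\emph{Main obstacle.} The hard step is the correctness check for the lower bound in both directions, especially the $\TRrefl$ case. A model of the sentence must contain a homomorphic image of the grid; conversely, the intended grid model must be a model. In the reflexive case we must verify that no formula of the original reduction uses irreflexivity, for instance via $\neg\TR(x,x)$ or a "strictly above" relation. If one does, we replace $\TR(x,y)$ by $\TR(x,y)\wedge\neg(\lambda(x)\leftrightarrow\lambda(y))$ for suitable distinguishing unary labels $\lambda$. This restores strictness whenever it is needed.
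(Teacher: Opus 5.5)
Your co-RE membership argument is the paper's, and reducing from the infinite tiling problem is also the paper's route. The gap is in the lower bound for $\MON{}+2\EEE$, which is the real content of the proposition. You never give the construction, and the outline you do give cannot work.

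Suppose $E_1$ links only horizontal neighbours and $E_2$ only vertical ones. One option is that the $E_1$-classes are pairs. Then once $\{a,b\}$ is a class, the right neighbour $c$ of $b$ would have to lie in $\{a,b\}$, so rows cannot be continued. The other option, suggested by ``encode rows and columns'', is that the classes are whole rows and columns. Then a mod-$3$ label does not single out the next column, and nothing in your plan supplies the functional successor that uniqueness and square-closure require.

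The missing idea is that \emph{both} equivalences are used along \emph{both} axes, alternating with coordinate parity. In the intended model the $E_1$- and $E_2$-classes are offset $2\times 2$ blocks. Every pair of adjacent grid points then shares exactly one class, chosen by the parity of the coordinate being advanced. An $E_1$-class and an $E_2$-class share at most one grid point.

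The labels must therefore encode parity as well as direction, which is why the paper uses indices modulo $4$. It takes $\psi_H(x,y)$ to be the disjunction over $i,j$ of $X_i(x)\wedge X_{i+1}(y)\wedge Y_j(x)\wedge Y_j(y)\wedge E_i(x,y)\wedge\neg E_{i+1}(x,y)$, where $E_k$ means $E_1$ for odd $k$ and $E_2$ for even $k$. It defines $\psi_V$ symmetrically and uses $\psi_E(x,y)=E_1(x,y)\wedge E_2(x,y)$ as ``same grid point''.

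Three axioms then glue the squares into an $\N\times\N$ grid:
uniqueness of $\psi_H$- and $\psi_V$-successors up to $\psi_E$;
agreement of $\psi_E$-equivalent elements on all unary predicates;
the square axiom $\forall x\,\exists y,z,t\,(\psi_H(x,y)\wedge\psi_V(y,z)\wedge\psi_H(t,z)\wedge\psi_V(x,t))$.
Deferring the ``interlocking'' to \cite{KK15} does not fill this in: the paper points out that the $2\EEE$ construction is not spelled out there.

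For $\MON{}+\TRrefl$ your main idea is sound in principle and matches the paper's main-text remark. A reflexive-transitive relation is in particular transitive, so the direction from models to tilings is inherited from the $\TR$ reduction. Only the intended model has to survive the addition of self-loops. This still rests on an inspection of \cite{KK15} that you do not carry out.

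Your fallback is also not safe. After replacing $\TR(x,y)$ by $\TR(x,y)\wedge\neg(\lambda(x)\leftrightarrow\lambda(y))$ you have a different sentence, and the defined relation need no longer be transitive. The models-to-tilings direction would then have to be reproved, not inherited.

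The paper avoids this with an explicit construction. Using mod-$4$ labels, edges are oriented according to the parity of $i+j$. The intended relation then has no directed path of two non-loop edges, so adding all self-loops keeps it transitive. Strictness is expressed by $\TR^*_{i+j}(x,y)\wedge\neg\TR^*_{i+j}(y,x)$, and $\TRrefl(x,y)\wedge\TRrefl(y,x)$ serves as $\psi_E$ in the same gluing argument as above.
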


\subsection{An Upper Bound}
\label{subsec:EEE-upper}

We now turn to the main result of this section: the decidability of
$\DMDK{} + \EEE$.
More precisely, we establish a $\TwoNExpTime{}$ upper bound, matching
the lower bound from Subsection~\ref{subsec:EEE-lower}.

The Skolemisation procedure of Proposition~\ref{prop:reduction} remains
sound for $\DMDK{}+\EEE$.
Indeed, the input sentence and its Skolem normal form are satisfied in
precisely the same structures, up to the interpretation of the fresh
predicates introduced during the reduction.
In particular, the intended interpretation of the equivalence symbol
$\EEE$ is preserved.
Hence, we work with $\SNF{}+\EEE$ rather than the full class
$\DMDK{}+\EEE$.

The two preliminary reductions from
Subsection~\ref{subsec:prelim-reductions} extend to $\SNF{}+\EEE$ as
well: we may adopt the standard name assumption and apply
Lemma~\ref{lemma:reduction-to-simple-form} to reduce to simple
assignments.

Accordingly, from this point on we adopt the standard name assumption,
fix an input sentence $\phi_{\rm in}\in\SNF{}+\EEE$, and let $\phi$ be
the sentence obtained from $\phi_{\rm in}$ by
Lemma~\ref{lemma:reduction-to-simple-form}.
As before, $\phi_{\rm in}$ is retained only for measuring the input
size, while all subsequent arguments concern $\phi$.
We use for $\phi$ the notation introduced in
Definition~\ref{def:skolem-class}.

Let $\Sigma_0 = \langle \Rels, \Cons \rangle$ be the \emph{base
signature} of $\phi$, consisting of the free relation symbols and
constants, and let $\Sigma_\EEE = \langle \Rels \cup \{\EEE\}, \Cons \rangle$
be the full signature of $\phi$, obtained by extending $\Sigma_0$ with
the equivalence symbol $\EEE$.

\paragraph{Class labels.}
Our first step is to encode the equivalence classes of $\EEE$ by unary
predicates.
For each $j \in \N$, introduce a fresh unary predicate $\LLL_j$, called
a \emph{class label}, and let
\(
  \LLL \coloneq \{\LLL_j \mid j \in \N\}.
\)
We use the auxiliary signature
\( \Sigma_\LLL \coloneq \langle \Rels \cup \LLL, \Cons \rangle, \)
obtained from the base signature $\Sigma_0$ by adding the class-label predicates.
Note that, by construction, the signature $\Sigma_\LLL$ is countably infinite.

Given a $\Sigma_\EEE$-structure $\fA$ in which $\EEE^\fA$ is an equivalence
relation, we associate with it a $\Sigma_\LLL$-structure by assigning a
distinct class label to each equivalence class.
More precisely, enumerate the classes in $A/\EEE^\fA$ and assign to
each $\cC \in A/\EEE^\fA$ one previously unused class label $\LLL_j$ by interpreting \( \LLL_j^\fA = \cC. \)
If $A/\EEE^\fA$ is finite, all unused class labels $\LLL_j$ are
interpreted as empty.

Thus, now every element carries exactly one class label, and two elements
are $\EEE$-equivalent precisely when they carry the same label, that is,
for every elements $a,b \in A$,
\tightdisplay{\[
  \fA \models
  a \EEE b
  \quad\Longleftrightarrow\quad
  \fA \models \LLL_j(a) \wedge \LLL_j(b)
  \text{ for some $j \in \N$.}
\]}%

We then forget the interpretation $\EEE^{\fA}$ and view $\fA$
as a $\Sigma_\LLL$-structure.
No information is lost in doing so, since $\EEE^{\fA}$ can be recovered entirely from the
class labels using the above equivalence.
Conversely, any $\Sigma_\LLL$-structure in which every element carries exactly one
class label induces in this way an equivalence relation, which we take as the interpretation of $\EEE$.
Accordingly, from this point on we work exclusively with the signature $\Sigma_\LLL$ and regard $\EEE$ as implicitly encoded.
Under this convention, $\phi$ can still be evaluated in
$\Sigma_\LLL$-structures.
Importantly, $\phi$ has no direct access to the predicates $\LLL_j$ and
can only use the derived relation $\EEE$.

\paragraph{Types and outer-types.}
We henceforth work with types and outer-types over the auxiliary
signature $\Sigma_\LLL$.
They are defined as before, with the additional requirement that every
variable or constant occurring in a type or outer-type satisfies exactly one class-label predicate.
Thus, despite the infinite signature $\Sigma_\LLL$, individual types
and outer-types admit finite descriptions:
each is represented by its $\Sigma_0$-reduct together with the class labels of its variables and constants.

As with $\Sigma_\LLL$-structures, we adopt the convention that in a
type or outer-type the equivalence relation $\EEE$ is induced
by the class-label predicates $\LLL_j$.

The following terminology will be convenient.
A class label $\LLL_j$ \emph{occurs} in a type or outer-type $\tau$ if
$\tau$ contains a positive $\LLL_j$-literal involving a variable or a
constant; the corresponding variable or constant then \emph{carries}
the class label $\LLL_j$.
A $1$-type $\pi$ \emph{carries} a class label $\LLL_j$ if
\(
  \pi\models\LLL_j(x_1).
\)
More generally, $\LLL_j$ \emph{occurs} in a set of outer-types if it
occurs in one of its members.

\paragraph{Satisfiability witnesses.}
We adapt Definitions~\ref{definition:phi-realisability}
and~\ref{definition:phi-extension}, and consequently
Definition~\ref{def:satisfiability-witness}, to the present setting by taking all types and
outer-types occurring in those definitions over $\Sigma_\LLL$.
Apart from this change of signature, the definitions remain unchanged.
In particular, satisfiability witnesses may now be infinite.

The notion of a satisfiability witness remains complete,
that is, whenever $\phi$ is satisfiable, it admits a satisfiability witness.
This is established by the argument from Subsection~\ref{subsec:criterion}, which applies
unchanged to the class-label encoding.

It remains to establish the converse direction with suitable size
bounds.
We show that whenever $\phi$ admits a satisfiability witness, it admits
a finite satisfiability witness of size doubly exponential in
$|\phi_{\rm in}|$.
We then apply the random model construction of
Subsection~\ref{subsec:criterion} to obtain a finite model of doubly
exponential size.
This yields the desired $\TwoNExpTime{}$ upper bound.

\paragraph{Local reduction of $1$-types.}
Fix a satisfiability witness $\cW$ for $\phi$.
In $\cW$, the number of $1$-types carrying any fixed class label $\LLL_j$ may
be doubly exponential.
Our first step is to reduce this number to singly exponential.
This reduction is crucial for obtaining the optimal $\TwoNExpTime$
upper bound; without it, the resulting complexity bound would be only $\ThreeNExpTime$.

We adapt the strategy from
Subsection~\ref{sec:exponential-certificates}.
Recall that there, for every $t \in \cI$, we introduced an equivalence
relation $\sim_t$ on $1$-types.
In the present setting, we refine $\sim_t$ by additionally requiring
equivalent $1$-types to carry the same class label.

Let the sets $\AAA^*$ and $\CCC'_t$ be as in~\eqref{eqn:AAA-star},
and let $[x_1/\,\cdot\,]$ denote the substitution identifying every variable with $x_1$.
For $t \in \cI$ and $1$-types $\pi_1,\pi_2 \in \cW \cap \TTT^1(\Sigma_\LLL)$,
we set $\pi_1 \sim_t \pi_2$ if
\begin{enumerate}[nosep]
  \item $\pi_1 \models \alpha[x_1/\,\cdot\,] \Longleftrightarrow \pi_2 \models \alpha[x_1/\,\cdot\,]$ for every $\alpha \in \AAA^* \cup \CCC'_t$;
  \item $\pi_1 \models \LLL_j(x_1)$ and $\pi_2 \models \LLL_j(x_1)$ for some $j \in \N$.
\end{enumerate}

For every $t \in \cI$, fix a choice function
\(
  \chc_t \colon
  (\cW \cap \TTT^1(\Sigma_\LLL))/_{\sim_t}
  \longrightarrow
  \cW \cap \TTT^1(\Sigma_\LLL)
\)
selecting representatives of $\sim_t$-classes.
Recall from Subsection~\ref{sec:exponential-certificates} that,
for a $k$-outer-type $\tau_\ro \in \cW$ and $t_1,\dots,t_k \in \cI$,
we write $\chc(\tau_\ro;t_1,\dots,t_k)$ for the $k$-outer-type obtained
from $\tau_\ro$ by replacing each induced $1$-type $\pi_i \coloneq \tau_\ro \langle x_i \rangle$
with $\chc_{t_i}([\pi_i]_{\sim_{t_i}})$,
while leaving all remaining literals unchanged.

We now define
\tightdisplay{\[
  \cW' \;\coloneq\;
  \bigl\{ \chc(\tau_\ro;t_1,\dots,t_k) \mid k \in [0,m+1],\, \tau_\ro \in \cW \cap \TTT^k_\ro(\Sigma_\LLL),\, t_1,\dots,t_k \in \cI \bigr\}. 
\]}%

The definition of $\cW'$ differs from that used in
Subsection~\ref{sec:exponential-certificates} only in that $1$-type
replacements are additionally required to preserve class labels.
The argument of Claim~\ref{claim:witness-small-1-types} extends
naturally to account for the additional predicates $\LLL_j$.
Thus, $\cW'$ is a satisfiability witness for $\phi$.

It remains to bound the number of $1$-types carrying a fixed label.
Let $s^*_\alpha \coloneq |\AAA^*|$ and $s'_\gamma \coloneq \max_{t \in \cI} |\CCC'_t|$.
Fix $j \in \N$.
For a given $t \in \cI$, there are at most
$2^{s^*_\alpha+s'_\gamma}$
distinct $\sim_t$-classes carrying the label $\LLL_j$.
As $\cW'$ retains one representative for each $\sim_t$-class for
each $t$, we obtain
\tightdisplay{\begin{equation}\label{eqn:E-first-witness-bound}
  \big|
    \big\{
      \pi \in \cW' \cap \TTT^1(\Sigma_\LLL)
      \mid
      \pi \models \LLL_j(x_1)
    \big\}
  \big|
  \le
  |\cI| \cdot 2^{s^*_\alpha+s'_\gamma}.
\end{equation}}%

By Lemma~\ref{lemma:reduction-to-simple-form},
the right-hand side of~\eqref{eqn:E-first-witness-bound} is
$2^{\cO(|\phi_{\rm in}|^2)}$.
Therefore, the number of $1$-types in $\cW'$ carrying any fixed class
label is singly exponential in $|\phi_{\rm in}|$, as desired.
At this stage, however, $\cW'$ may still be infinite, since infinitely
many distinct class labels may occur.

\paragraph{Profiles of class labels.}
Our next goal is therefore to transform the satisfiability witness
$\cW'$ into one with only finitely many class labels.
The key observation is that class labels carrying the same collection
of $1$-types, up to the choice of the class label itself, should be
interchangeable.
To make this idea precise, we associate with each class label a
\emph{profile} recording this collection of $1$-types and, additionally,
the constants carrying that label.
We then close $\cW'$ under interchanges of labels having the same profile.
This will allow us to discard all but a bounded number of
labels of each profile: whenever a discarded class label is needed to provide
an existential witness, it can be replaced by one of the retained labels
of the same profile.

For every $j \in \N$,
let
\tightdisplay{\[
  R_j
  \;\coloneq\;
  \big\{
    \pi \in \TTT^1(\Sigma_0)
    \mid
    \text{$\pi$ is the $\Sigma_0$-reduct of a $1$-type in $\cW'$ that contains $\LLL_j(x_1)$}
  \big\}
\]}%
and
\tightdisplay{\[
  C_j
  \;\coloneq\;
  \big\{c \in \Cons \mid \text{the unique $0$-type of $\cW'$ contains $\LLL_j(c)$} \big\}.
\]}%
We call the pair $P_j \coloneq (R_j,C_j)$ the \emph{profile} of the class label $\LLL_j$.

The set $R_j$ records the underlying $1$-types over the base signature
$\Sigma_0$ that carry the class label $\LLL_j$, while $C_j$ records the
constants carrying that label.
Hence, the number of distinct profiles $P_j$ is clearly finite.
More precisely, it is at most doubly exponential in
$|\phi_{\rm in}|$.

Indeed, by~\eqref{eqn:E-first-witness-bound}, the number of distinct
sets $R_j$ is bounded by
\tightdisplay{\begin{equation}\label{eqn:E-second-witness-bound}
  \bigl|\{R_j\mid j\in\N\}\bigr|
  \;\le\;
  \bigl|\TTT^1(\Sigma_0)\bigr|^{2^{\cO(|\phi_{\rm in}|^2)}}
  \;\le\;
  \Bigl(
    2^{2^{\cO(|\phi_{\rm in}|\log|\phi_{\rm in}|)}}
  \Bigr)^{2^{\cO(|\phi_{\rm in}|^2)}}
  \;\le\;
  2^{2^{\cO(|\phi_{\rm in}|^2)}}.
\end{equation}}%
The second component $C_j$ contributes only linearly many additional
possibilities: if $C_j\neq\emptyset$, then no other class label has the
same profile as $\LLL_j$.

\paragraph{A highly symmetric witness.}
We now close $\cW'$ under permutations of class labels having the same
profile.
We say that a bijection $\rho\colon\N\to\N$ is
\emph{profile-preserving} if
\tightdisplay{\[
  P_j=P_{\rho(j)}
  \qquad\text{for every }j\in\N.
\]}%
A profile-preserving bijection $\rho$ acts on types and outer-types by
renaming class labels: each literal $\pm\LLL_j(a)$ is replaced by
$\pm\LLL_{\rho(j)}(a)$, while all other literals remain unchanged.
We then set
\tightdisplay{\[
  \cW''
  \;\coloneq\;
  \bigl\{
    \rho(\tau)
    \mid
    \tau\in\cW'
    \text{ and $\rho$ is a profile-preserving bijection}
  \bigr\}.
\]}%

We claim that $\cW''$ is a satisfiability witness.
First, observe that for every $1$-type
$\pi\in\TTT^1(\Sigma_\LLL)$ and every profile-preserving bijection
$\rho$,
we have
\(
  \pi\in\cW'
  \iff
  \pi\in\rho(\cW')
\).
It follows that the construction of $\cW''$ preserves the $1$-types:
\(
  \pi \in \cW''
  \iff
  \pi \in \cW'.
\)
Consequently,
\tightdisplay{\begin{equation}\label{eqn:rho-equivalence}
  \pi\in\cW''
  \iff
  \pi\in\cW'
  \iff
  \pi\in\rho(\cW').
\end{equation}}%

It follows directly that $\cW''$ is $\phi$-realisable.
Indeed, let $\pi_1,\dots,\pi_k\in\cW''$ be $1$-types, where
$k\le m+1$.
By~\eqref{eqn:rho-equivalence},
$\pi_1,\dots,\pi_k\in\cW'$.
Since $\cW'$ is $\phi$-realisable, there exists a $k$-outer-type
$\tau_\ro\in\cW'$ inducing $\pi_1,\dots,\pi_k$.
By the identity bijection, we have $\cW'\subseteq\cW''$, and hence $\tau_\ro\in\cW''$.
The remaining conditions of $\phi$-realisability are straightforward.

We next verify the $\phi$-extension property.
Fix $t\in\cI$ and a
$\cW''$-requirement $\tau$.
Write
\tightdisplay{\[
  \tau_\ro\coloneq \tau \langle x_1,\dots,x_{n_t} \rangle,\qquad
  \pi_i \coloneq \tau\langle x_i\rangle\text{ for }i \in [n_t+1,m].
\]}%

Thus, $\tau_\ro$ is the outer-type induced on the uniform variables $x_1,\dots,x_{n_t}$,
while $\pi_{n_t+1},\dots,\pi_m$ are the $1$-types induced on the remaining universal variables.

Since $\tau$ is a $\cW''$-requirement, we have
$\tau_\ro\in\cW''$.
Choose a profile-preserving bijection $\rho$ witnessing this membership,
that is, such that $\tau_\ro\in\rho(\cW')$.
By~\eqref{eqn:rho-equivalence}, each $1$-type
$\pi_i$ is in $\rho(\cW')$.

We now construct a $\cW'$-requirement $\tau'$ in three steps:
\begin{enumerate}[nosep]
  \item\label{witness-bijections-step1}
  Set
  \(
    \tau'\langle x_1,\dots,x_{n_t} \rangle
    \coloneq
    \rho^{-1}(\tau_\ro).
  \)

  \item\label{witness-bijections-step2}
  For every $i\in[n_t+1,m]$, set
  \(
    \tau'\langle x_i \rangle
    \coloneq
    \rho^{-1}\bigl(\pi_i\bigr).
  \) 

  \item\label{witness-bijections-step3}
  Using $\phi$-realisability of $\cW'$, take a $\cW'$-completion of the resulting set of literals.
\end{enumerate}

Recall from Subsection~\ref{sec:exponential-certificates} that
a $\cW'$-completion of a set of literals $\Gamma$ is any $m$-type $\tau'$ such that
\(\Gamma\subseteq\tau'\)
and
\(\TTTo(\tau')\subseteq\cW'.\)
Therefore, the constructed $m$-type $\tau'$ is indeed a $\cW'$-requirement.

Let $\sigma'$ be a $(\cW',t)$-extension of $\tau'$, and set
\(
  \sigma
  \coloneq
  \bigl(\rho(\sigma')\setminus\rho(\tau')\bigr)\cup\tau.
\)
By steps~\ref{witness-bijections-step1} and~\ref{witness-bijections-step2},
the $m$-types $\tau$ and $\rho(\tau')$
agree on all atoms in $\AAA_t\cup\BBB_t$
and induce the same sequences of $1$-types.
In particular, they agree on the class-label literals.
Hence, Claim~\ref{claim:type-surgery}, which extends unchanged to the
class-label encoding, yields that $\sigma$ extends $\tau$, satisfies
$\psi_t$, and
\(
  \TTTo(\sigma)
  \subseteq
  \TTTo(\tau)\cup\TTTo(\rho(\sigma')).
\)
Since $\tau$ is a $\cW''$-requirement,
$\TTTo(\tau)\subseteq\cW''$.
Since $\sigma'$ is a $(\cW',t)$-extension, $\TTTo(\sigma')\subseteq\cW'$, and therefore
\(
  \TTTo(\rho(\sigma'))\subseteq\rho(\cW')\subseteq\cW''.
\)
Thus $\TTTo(\sigma)\subseteq\cW''$, so $\sigma$ is a
$(\cW'',t)$-extension of $\tau$.

\paragraph{A finite witness.}
We now retain only finitely many class labels of each profile.
For every profile $P$, let
\( \cJ_P \coloneq \{j \in \N \mid P_j=P\}, \)
and choose $\cK_P \subseteq \cJ_P$ such that $|\cK_P|=\min\{|\cJ_P|,m+1\}$.

Write
\(
  \cK \coloneq \bigcup_P \cK_P.
\)
Let $\cW'''$ consist of the outer-types in $\cW''$ in which every
variable carries a class label $\LLL_j$ with $j\in\cK$.
Formally, for every $k\in[0,m+1]$ and every $k$-outer-type $\tau_\ro$,
\tightdisplay{\[
  \tau_\ro\in\cW'''
  \quad\Longleftrightarrow\quad
  \tau_\ro\in\cW''
  \text{ and }
  \tau_\ro\models
  \LLL_{j_1}(x_1)\wedge\dots\wedge\LLL_{j_k}(x_k)
  \text{ for some }j_1,\dots,j_k\in\cK.
\]}%
Note that class labels carried by constants are retained: if
$C_j\neq\emptyset$, then $\cJ_{P_j}=\{j\}$.

\begin{claim}\label{claim:EEE-third-witness}
  The set $\cW'''$ is a satisfiability witness for $\phi$.
\end{claim}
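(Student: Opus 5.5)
We verify the two defining properties of a satisfiability witness for $\cW'''$ separately, using that $\cW''$ is already a satisfiability witness and is closed under profile-preserving relabellings.

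\emph{$\phi$-realisability.} The $0$-type of $\cW'''$ is that of $\cW''$, since labels carried by constants lie in $\cK$: if $C_j\neq\emptyset$ then $\cJ_{P_j}=\{j\}$ and $|\cK_{P_j}|=1$. Closure under permutations of variables and under taking induced $1$-types is inherited from $\cW''$, because the defining condition of $\cW'''$ only concerns which labels the variables carry. At least one $1$-type survives: every nonempty $\cJ_P$ contributes a label to $\cK$, and any $1$-type of $\cW''$ with label $\LLL_j$ can be relabelled by a profile-preserving bijection into one with a label in $\cK_{P_j}$. For the last condition, given $1$-types $\pi_1,\dots,\pi_k\in\cW'''\subseteq\cW''$, realisability of $\cW''$ yields an outer-type in $\cW''$ inducing them. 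All its variables carry labels of the $\pi_i$, which lie in $\cK$, so it belongs to $\cW'''$.

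\emph{$\phi$-extension property.} This is the main step. Fix $t$ and a $\cW'''$-requirement $\tau$. It is also a $\cW''$-requirement, so it has a $(\cW'',t)$-extension $\sigma$. The only defect is that the $1$-type $\sigma\langle x_{m+1}\rangle$ may carry a label $\LLL_j$ with $j\notin\cK$. We repair this by relabelling.

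Let $P=P_j$. Since $j\notin\cK$, we have $|\cJ_P|>m+1$, so $|\cK_P|=m+1$. The variables $x_1,\dots,x_m$ carry at most $m$ labels. Hence some $j'\in\cK_P$ is not carried by any of $x_1,\dots,x_m$ in $\tau$.

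Take the transposition $\rho$ of $j$ and $j'$. It is profile-preserving, because $P_j=P_{j'}$. Now set $\sigma^*\coloneq\rho(\sigma)$ and verify three things.

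First, $\sigma^*$ extends $\tau$. Neither $j$ nor $j'$ is carried by $x_1,\dots,x_m$ or by any constant (constant labels have singleton $\cJ$, while $|\cJ_P|>1$). The literals of $\tau$ are literals over these variables and constants, possibly mentioning other labels only negatively, and such literals are preserved by $\rho$. So $\rho$ acts trivially on $\tau$.

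Second, $\sigma^*\models\psi_t$. Formulas of $\phi$ access labels only through $\EEE$, and $\rho$ preserves equality of labels. So every atom of $\psi_t$, including the $\EEE$-atoms, keeps its truth value.

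Third, every outer-type induced by $\sigma^*$ lies in $\cW'''$. Each such outer-type equals $\rho$ applied to an outer-type induced by $\sigma$. That outer-type lies in $\cW''$, and $\cW''$ is closed under profile-preserving bijections (compositions of them are again profile-preserving), so the image lies in $\cW''$. Moreover, every variable of $\sigma^*$ carries a label in $\cK$.

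The delicate point I expect is confirming that labels carried only by $x_{m+1}$ are the only ones outside $\cK$. This holds because $\tau$ is a $\cW'''$-requirement, so its $1$-types already carry labels in $\cK$, and $\rho$ fixes all labels other than $j$ and $j'$. The bound $|\cK_P|=m+1$ is exactly what guarantees a free label $j'$.
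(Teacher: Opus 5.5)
Your proof is correct and follows the paper's argument. You pass to a $(\cW'',t)$-extension $\sigma$; if $x_{m+1}$ carries a label $\LLL_j$ with $j\notin\cK$, you use $|\cK_{P_j}|=m+1$ to find a retained label of the same profile not carried by $x_1,\dots,x_m$, and apply the corresponding profile-preserving transposition, relying on closure of $\cW''$ under such bijections. Beyond the paper, you also spell out the $\phi$-realisability check and verify that the transposition fixes $\tau$ and preserves $\psi_t$ (labels being visible only through $\EEE$), which the paper leaves implicit.
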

\begin{proof}
  It is straightforward to verify that $\cW'''$ is $\phi$-realisable.
  To verify the $\phi$-extension property, fix $t \in \cI$ and a $\cW'''$-requirement $\tau$.
  Since $\cW''' \subseteq \cW''$, the $m$-type $\tau$ is also a
  $\cW''$-requirement, and hence admits a $(\cW'',t)$-extension
  $\sigma$.
  Let $\LLL_j$ be the class label carried by the variable $x_{m+1}$ in $\sigma$, and let $P=P_j$.
  If $j \in \cK_P$, then $\sigma$ is already a $(\cW''',t)$-extension of $\tau$.

  Suppose otherwise.
  Then $|\cJ_P|>m+1$ and $|\cK_P|=m+1$.
  Since the variables occurring in $\tau$ carry at most
  $m$ distinct class labels, there exists
  \( k \in \cK_P \)
  that is not carried by any of them.
  As $P_j=P_k$, the transposition exchanging $j$ and $k$ is
  profile-preserving.
  Moreover, it fixes $\tau$, while replacing the label of $x_{m+1}$
  by the retained label $\LLL_k$.
  Since $\cW''$ is closed under profile-preserving bijections, the image
  of $\sigma$ under this transposition is a $(\cW'',t)$-extension of
  $\tau$.
  Since all labels occurring in $\tau$ are retained in $\cW'''$, it is in fact a $(\cW''',t)$-extension of $\tau$.
\end{proof}

\paragraph{Size bound.}
We now bound the size of $\cW'''$.
Each outer-type in $\cW'''$ is described by its $\Sigma_0$-reduct
and the class labels carried by its variables and constants.
For every $k\le m+1$, the number of all $k$-outer-types over
$\Sigma_0$ is
\(
  2^{2^{\cO(|\phi_{\rm in}|\log|\phi_{\rm in}|)}}.
\)
Since at most $m+1$ class labels are retained for each profile $P_j$,
the number of class labels occurring in $\cW'''$ is
at most $m+1$ times the bound in~\eqref{eqn:E-second-witness-bound},
and hence at most
\(
2^{2^{\cO(|\phi_{\rm in}|^2)}}.
\)
Therefore, $|\cW'''|$ is doubly exponential in $|\phi_{\rm in}|$.

We now take the reduct of $\cW'''$ obtained by forgetting the
class-label predicates, which are no longer needed.
The resulting collection can be viewed as a satisfiability witness over
the ordinary finite signature.
The random model construction of Subsection~\ref{subsec:criterion}
therefore applies without change.
Hence, Lemma~\ref{lemma:random-models} yields a doubly exponential
bound on model size.

\begin{theorem}\label{theorem:main-DMDK-EQ}
  The class $\DMDK + \EEE$ has the doubly exponential model property.
  The satisfiability problem for $\DMDK + \EEE$ is \TwoNExpTime-complete.
\end{theorem}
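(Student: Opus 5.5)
The theorem assembles results already established in this section, so the plan is to combine them in order and check that each step keeps the right size bounds.

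For the \textbf{lower bound}, we invoke Proposition~\ref{prop:EEE-lower}. Since $\MON{}\subseteq\DMK{}$ and free use of $\EEE$ can be removed via~\eqref{eqn:EEE-sugar} in polynomial time, we get $\MON{}+\EEE\subseteq\DMK{}+\EEE\subseteq\DMDK{}+\EEE$. Hence $\TwoNExpTime$-hardness transfers.

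For the \textbf{upper bound}, we chain the following reductions, all polynomial or at most exponential in a controlled way:
\begin{enumerate}[nosep]
\item Proposition~\ref{prop:reduction} maps $\DMDK{}+\EEE$ to $\SNF{}+\EEE$, preserving models over each domain and the interpretation of $\EEE$.
\item We adopt the standard name assumption.
\item Lemma~\ref{lemma:reduction-to-simple-form} yields $\phi$ with the recorded parameter bounds.
\end{enumerate}

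If $\phi$ is satisfiable, the completeness argument gives a witness $\cW$ over $\Sigma_\LLL$. The infinite unfolding preserves $\EEE$ being an equivalence relation, because copies inherit the class labels.

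We then pass through $\cW'$, $\cW''$ and $\cW'''$, using the local reduction of $1$-types, the closure under profile-preserving bijections, and Claim~\ref{claim:EEE-third-witness}. This gives a witness with $|\cW'''|\le 2^{2^{\cO(|\phi_{\rm in}|^2)}}$.

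The step needing genuine care is the \textbf{final step}: forgetting class labels and running the random construction. I would \emph{not} forget the labels blindly. The dropped $\LLL_j$ predicates must still determine $\EEE$, so $\EEE$ facts among elements have to be consistent. The plan is to keep the labels during the construction:
\begin{enumerate}[nosep]
\item In stage~\ref{item:random-model-stage1}, each element receives a $1$-type from $\cW'''$, hence exactly one class label from the finite set $\cK$.
\item Outer-types are then chosen consistently with these $1$-types.
\item $\EEE$ between two elements is read off from whether their labels coincide.
\end{enumerate}
This makes $\EEE^\fA$ an equivalence relation by construction. It also keeps the independence property, since $\EEE$ literals inside an outer-type on $\geq 2$ variables are determined by the $1$-types and so impose no cross-tuple constraint.

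Claim~\ref{claim:extension-probability} then goes through verbatim: the required $(\cW''',t)$-extension fixes the label of $x_{m+1}$ via its $1$-type, which is one of the uniform choices. Lemma~\ref{lemma:random-models} gives a model of size $2^{\cO(|\phi_{\rm in}|\log|\phi_{\rm in}|)}\cdot|\cW'''|^{\cO(|\phi_{\rm in}|)}=2^{2^{\cO(|\phi_{\rm in}|^2)}}$. This is a model of $\phi$, hence of $\phi_{\rm in}$, and via the Skolemisation of the original sentence.

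Two small checks remain:
\begin{itemize}[nosep]
\item Constants' labels are retained, as noted before Claim~\ref{claim:EEE-third-witness}, so the $0$-type is unaffected.
\item Structures with fewer than $m$ unnamed elements can be handled by padding, since the construction uses $p\ge m$.
\end{itemize}

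This establishes the doubly exponential model property. A nondeterministic algorithm can then guess a structure of that size and verify it in time exponential in its size, which is doubly exponential overall. This yields membership in $\TwoNExpTime$.
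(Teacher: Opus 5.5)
Your proposal is correct and follows the paper's proof step for step: the lower bound via Proposition~\ref{prop:EEE-lower}, and the upper bound via Skolemisation, the simple-assignment reduction, the class-label witness refined through $\cW'$, $\cW''$, $\cW'''$, and the random model construction. Your decision to keep the finitely many retained labels $\LLL_j$, $j\in\cK$, as unary predicates during the random construction, so that $\EEE$ is read off from them and is an equivalence relation by construction, is exactly the careful way to make precise the paper's terse instruction to forget the class labels before applying Lemma~\ref{lemma:random-models}. One small wording fix: checking a guessed structure of size $N$ takes time $N^{\cO(|\phi|)}$, which is polynomial in $N$ rather than exponential in it, and this is what keeps the verification doubly exponential overall.
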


\section{Remarks on Equality}
\label{sec:remarks}

Extending the class $\DMK{}$ with equality makes satisfiability
undecidable, under the convention that equality atoms are treated
syntactically as ordinary atoms.
This follows directly from the undecidability of the prefix class
$\forall\forall\exists$ with equality~\cite{Gol84}.
In this section, we investigate syntactic restrictions on the use of
equality in $\DMK{}$ that preserve decidability of satisfiability.

\paragraph{Nominals.}
Let $\DMK+\cO$ denote the extension of $\DMK{}$ in which equality atoms
may contain at most one variable.
Thus, equality atoms may have the forms
$x=x$, $x=c$, and $c=c'$, where $x$ is a variable and $c,c'$ are
constants.
Such atoms are treated syntactically as ordinary atoms with at most one
variable.
For example, $\DMK+\cO$ contains sentences such as
\tightdisplay{\[
  \theta_{\forall,c} \;\coloneq\; \forall x\; \bigl(P(x)\rightarrow x=c\bigr),
\]}%
which requires $P$ to have at most one realisation.

Nevertheless, satisfiability for $\DMK+\cO$ is undecidable.
We show this by a reduction from the equality-free and constant-free prefix class
$\forall\exists\forall$, whose satisfiability is
undecidable~\cite{KMW62,BGG}.

Fix such a sentence
\(
  \phi
  =
  \forall x_1\; \exists x_2\; \forall x_3\; \psi,
\)
where $\psi$ is equality-free and constant-free.
For every relation symbol $R$ occurring in $\psi$, introduce a fresh
relation symbol $R^+$ whose arity is one greater than that of $R$.
Let $\psi^+$ be obtained from $\psi$ by replacing every atom
$R(\xs)$ by $R^+(\xs,x_4)$, where $x_4$ is fresh.
Letting $P$ be the same predicate as in the sentence $\theta_{\forall,c}$,
define
\tightdisplay{\[
  \chi
  \;\coloneq\;
  \forall x_1\; \exists x_2\; \forall x_3\; \exists x_4\;
  \bigl(P(x_4)\wedge\psi^+\bigr).
\]}%

The sentence $\chi$ belongs to $\DMK$ with no uniform variables.
Indeed, every atom of $\chi$ contains $x_4$ and hence has a $\chi$-prefix
ending with the existential quantifier $\exists x_4$.
Moreover, $\chi \wedge \theta_{\forall,c}$ is satisfiable if and only if $\phi$ is satisfiable.
Consequently, satisfiability for $\DMK+\cO$ is undecidable.

The preceding reduction crucially relies on introducing relation symbols
of arity at least $3$.
Indeed,
for a binary relation symbol $R$, the formula
\(
  \exists y\; \bigl(y=c\wedge R(x,y)\bigr)
\)
is equivalent to $R(x,c)$.
The latter contains only one variable, so its $\phi$-prefix has length
at most~$1$ and therefore satisfies the restrictions of $\DMK{}$.
This is not a coincidence, as shown by the following proposition.

\begin{proposition}\label{prop:DMDK-nominals-binary}
  The satisfiability problem for $\DMK+\cO$ is decidable over signatures containing
  only unary and binary relation symbols.
\end{proposition}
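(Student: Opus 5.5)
We reduce $\DMK+\cO$ over unary and binary signatures to ordinary $\DMDK{}$ (with constants), preserving satisfiability. Then Theorem~\ref{theorem:main-DMDK} gives decidability.

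The plan is to eliminate equality atoms one by one. Work with a sentence $\phi$ in negation normal form. Its equality atoms have the forms $x=x$, $c=c'$ and $x=c$.

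The first two forms are harmless. The atom $x=x$ is replaced by $\top$. For $c=c'$, we guess the equality pattern among the constants (finitely many choices), identify equal constants, and replace each $c=c'$ by $\top$ or $\bot$. This is the standard name assumption of Subsection~\ref{subsec:prelim-reductions}.

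The real work concerns atoms $x=c$. For each constant $c$ introduce a fresh unary predicate $N_c$, intended to hold exactly at the element $c$, and replace $x=c$ by $N_c(x)$. This atom has one variable, so the $\DMK{}$ prefix conditions are preserved. We must then axiomatise that $N_c$ is the singleton $\{c\}$:
\begin{itemize}[nosep]
\item $N_c(c)$;
\item congruence of $N_c$ with respect to the binary and unary symbols: whenever $N_c(x)$ holds, $x$ behaves exactly like $c$.
\end{itemize}
Binarity is what makes the second item expressible. The congruence axioms are
\[
\forall x\,\forall y\,\bigl(N_c(x)\rightarrow (R(x,y)\leftrightarrow R(c,y))\bigr)
\]
and its symmetric variant, together with $\forall x\,(N_c(x)\rightarrow (U(x)\leftrightarrow U(c)))$, $\forall x\,(N_c(x)\rightarrow(R(x,x)\leftrightarrow R(c,c)))$, and $\forall x\,(N_c(x)\rightarrow N_{c'}(x)\leftrightarrow N_{c'}(c))$. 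Each atom in these axioms has at most one variable, except $R(x,y)$, which has exactly the prefix $\forall x\forall y$. So each axiom lies in $\DMK{}$ with uniform variables $x,y$, and the conjunction lies in $\DMDK{}$. For arity $3$ this step would fail, since $R(x,y,c)$ would need a three-variable congruence; this matches the preceding counterexample.

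Correctness has two directions. Given a model of $\phi$, interpret $N_c=\{c\}$; all axioms then hold. Conversely, given a model $\fA$ of the translation, quotient $N_c^\fA$ onto $c$: let $\fB$ be the substructure of $\fA$ on the domain $(A\setminus\bigcup_c N_c^\fA)\cup\Cons$.

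The main obstacle is showing that $\fB\models\phi$. Define $h\colon A\to B$ by sending each element of $N_c^\fA$ to $c$ and fixing everything else. The congruence axioms make $h$ a strong homomorphism onto $\fB$ with respect to all unary and binary atoms. The check splits into cases:
\begin{itemize}[nosep]
\item both arguments lie in the $N$-sets (possibly for different constants), which needs applying the axioms twice, once for each argument;
\item the diagonal case $R(x,x)$, handled by its dedicated axiom;
\item consistency, which requires that $N_c$ and $N_{c'}$ be disjoint for $c\neq c'$. This follows from the standard names together with $\neg N_{c'}(c)$, which we add as an axiom.
\end{itemize}
Given this, $h$ maps $N_c^\fA$ exactly onto $\{c\}$, so $N_c(x)$ in $\fA$ corresponds to $x=c$ in $\fB$. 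Every element of $\fA$ is then indistinguishable from its $h$-image for all atoms of the translated sentence. A routine induction over the formula, choosing $h$-images of existential witnesses, then yields $\fB\models\phi$.

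The reduction is polynomial, so $\DMK+\cO$ restricted to unary and binary signatures is decidable, indeed in \NExpTime{} by Theorem~\ref{theorem:main-DMDK}.
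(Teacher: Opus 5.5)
Your proof is correct, but it takes a different route from the paper's.

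\textbf{The paper's route.} It introduces one unary predicate $\CC$ marking the constants. It relativises every quantifier to $\neg\CC$ and adds the explicit constant instances $\psi[c/x]$. After this grounding, every equality atom becomes $\top$ or $\bot$. Binarity is used only to guarantee that an atom with a constant substituted in has at most one variable left.

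\textbf{Your route.} You keep the quantifier structure of $\phi$ intact and replace $x=c$ by a fresh unary atom $N_c(x)$, so membership of the main conjunct in \DMK{} is immediate. You use binarity instead to make the congruence axioms for $N_c$ expressible as \DMK{} sentences of uniformity $2$. A model is then recovered by collapsing each $N_c$ onto $c$. This is a surjective strong homomorphism, indeed a retraction of $\fA$ onto its substructure $\fB$. That retraction property, not merely taking images of existential witnesses, is what handles the universal-quantifier case of your induction.

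\textbf{What each buys.} The paper's expansion copies the scope of each quantifier $|\Cons|+1$ times, so it can blow up exponentially in the quantifier depth. As written it yields decidability but no tight bound. Your translation is polynomial once the equality pattern of the constants has been guessed. Combined with Theorem~\ref{theorem:main-DMDK} and the \FOt{} lower bound, it gives \NExpTime-completeness, which is stronger than the proposition. The price is the somewhat longer quotient argument, and a target in \DMDK{} rather than \DMK{}; the latter is harmless, since Theorem~\ref{theorem:main-DMDK} covers \DMDK{}.

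A small remark: your diagonal axiom is redundant. It follows by instantiating the first congruence axiom with $y=x$ and then the symmetric one with $y=c$.
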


\begin{proof}
  Fix a sentence $\phi\in\DMK+\cO$ over a signature
  $\Sigma=\langle\Rels,\Cons\rangle$ such that every relation symbol in
  $\Rels$ is unary or binary.
  Introduce a fresh unary relation symbol $\CC$, intended to distinguish
  the elements named by constants from the unnamed elements.

  We first conjoin to $\phi$ the sentence
  \(
    \bigwedge_{c\in\Cons}\CC(c),
  \)
  asserting that every constant satisfies $\CC$, and then systematically
  replace each subformula of the form
  \(
    \exists x\;\psi
  \)
  by
  \tightdisplay{\[
    \exists x\;(\neg\CC(x)\wedge\psi)
    \;\vee\;
    \bigvee\nolimits_{c\in\Cons}\psi[c/x],
  \]}%
  and each subformula of the form
  \(
    \forall x\;\psi
  \)
  by
  \tightdisplay{\[
    \forall x\;(\neg\CC(x)\rightarrow\psi)
    \;\wedge\;
    \bigwedge\nolimits_{c\in\Cons}\psi[c/x].
  \]}%
  Thus, every quantified variable is explicitly split according to
  whether it is interpreted by an unnamed element or by a constant.

  After this expansion, all equality atoms can be eliminated.
  Atoms of the forms $x=c$ and $c=c'$ with $c\neq c'$ are replaced by
  $\bot$, while atoms of the forms $x=x$ and $c=c$ are replaced by
  $\top$.
  Indeed, an atom $x=c$ occurs only in a part of the formula in which $x$
  ranges over unnamed elements, while distinct constants denote distinct
  elements by the standard name assumption.

  Let $\phi'$ be the resulting equality-free sentence.
  Then $\phi'$ belongs to $\DMK{}$.
  Indeed, substituting a constant for a variable in an atom of arity at
  most $2$ leaves at most one variable in that atom.
  Hence, every new $\phi'$-prefix arising from such a substitution has
  length at most~$1$.

  By construction, $\phi$ is satisfiable if and only if
  \(
    \phi'
  \)
  admits a model in which no unnamed element satisfies $\CC$.
  This additional requirement can be dropped.
  Indeed, all quantifiers in $\phi'$ are relativised to $\neg\CC$, so
  unnamed elements satisfying $\CC$ can simply be discarded without
  affecting satisfaction of $\phi'$.
\end{proof}

Proposition~\ref{prop:DMDK-nominals-binary} is particularly relevant
from the perspective of description logics;
see~\cite{Baader_Horrocks_Lutz_Sattler_2017} for a comprehensive
introduction.
Their signatures typically consist of unary relation symbols,
representing concepts, and binary relation symbols, representing roles.
Over such signatures, $\DMK{}$ remains quite expressive.
For example, it contains sentences such as
\tightdisplay{\[
  \forall x_1\; \forall x_2\; \forall x_3\; \exists x_4\;
  \bigl(
    E(x_1,x_4)
    \wedge E(x_2,x_4)
    \wedge E(x_3,x_4)
  \bigr),
\]}%
which requires every three elements to have a common $E$-successor.
The extension $\DMK+\cO$ additionally allows \emph{nominals}, that is,
singleton concepts, to be expressed.

\paragraph{Equality in prenex conjuncts.}
We now consider a different approach to incorporating equality into
$\DMK{}$.
Rather than allowing equality atoms directly in $\DMK{}$ sentences, we
ask whether decidability can be preserved by conjoining a sentence of
$\DMK{}$ with a sentence from some fragment $\cF$ with equality.
Thus, we consider sentences of the form
\(\mu\wedge\nu\) with $\mu \in \DMK{}$ and $\nu\in\cF$. 

To obtain a meaningful classification, we restrict attention to prenex
classes.
The reduction establishing undecidability of $\DMK+\cO$ already shows
that undecidability arises when $\DMK{}$ is extended by a single fixed
sentence $\theta_{\forall,c}$ with prefix $\forall$ and one constant.
The constant can be avoided at the cost of a slightly longer prefix:
undecidability persists for a single fixed sentence with prefix
$\forall\forall$ or $\exists\forall$.
Indeed, in the reduction for $\DMK+\cO$, it suffices to replace
$\theta_{\forall,c}$ by either of the sentences enforcing that $P$ has at most one realisation:
\tightdisplay{\[
  \theta_{\forall\forall}
  \coloneq
  \forall x_1\; \forall x_2\;
  \bigl(
    P(x_1)\wedge P(x_2)
    \rightarrow x_1=x_2
  \bigr),
  \qquad
  \theta_{\exists\forall}
  \coloneq
  \exists x_1\; \forall x_2\;
  \bigl(
    P(x_2)\rightarrow x_1=x_2
  \bigr).
\]}%

For a sentence $\theta$, let $\DMK \wedge \theta$ denote the class of sentences
of the form
$\mu\wedge\theta$ with $\mu$ ranging $\DMK{}$.
By the argument above, satisfiability for each of
$\DMK \wedge \theta_{\forall,c}$,
$\DMK \wedge \theta_{\forall\forall}$, and
$\DMK \wedge \theta_{\exists\forall}$ is undecidable, with the latter two
results holding even over constant-free signatures.

By the considerations above, the only remaining choices for $\cF$ among
prenex classes are $\exists^*$ over signatures with constants and
$\forall\exists^*$ over constant-free signatures.
The former case is immediate:
we can replace every variable by a fresh constant,
and subsequently eliminate equality using the standard name assumption.
We therefore focus on the latter.
As we show next, this case yields a decidable formalism as well.

Let $[\forall\exists^*]$ denote the class of sentences of the form
\(
  \forall x_1\; \exists x_2\; \cdots\; \exists x_m\; \psi,
\)
where $m\ge1$ and $\psi$ is quantifier-free, equality-free, and
constant-free.
Let $[\forall\exists^*]_=$ denote the analogous class in which equality
is allowed in $\psi$.
We first observe that equality can be eliminated from
$[\forall\exists^*]_=$.

\begin{lemma}\label{lem:forall-exists-equality-elimination}
  Let $\Sigma=\langle\Rels,\emptyset\rangle$ be a signature, and let
  $E$ be a fresh binary relation symbol.
  For every sentence $\phi\in[\forall\exists^*]_=$ over $\Sigma$, one
  can construct a sentence $\phi_E\in[\forall\exists^*]$ over
  $\Sigma\cup\{E\}$ such that $\phi_E \models \phi$ and 
  every model of $\phi$ admits a
  $(\Sigma\cup\{E\})$-expansion that satisfies $\phi_E$.
\end{lemma}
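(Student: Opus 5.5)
The plan is to eliminate equality by guessing, inside the matrix, the equality pattern among the witnesses. Positive equalities are then realised by substitution, and only disequalities are expressed through $E$. Reflexivity of $E$ makes $\neg E(x,y)$ a sound stand-in for $x\neq y$, and reflexivity itself is expressible with the single universal variable as $E(x_1,x_1)$.

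Write $\phi=\forall x_1\,\exists x_2\cdots\exists x_m\,\psi$. Following the proof of Lemma~\ref{lemma:reduction-to-simple-form}, consider all equivalence relations $e$ on $\{x_1,\dots,x_m\}$, which are the equality patterns without constants. For each $e$, take the $e$-representative of $x_i$ to be the variable of least index in $[x_i]_e$, and let $\rho_e$ be the corresponding substitution. Note that $x_1$ is always its own representative, so the universal variable is never replaced by an existential one. Let $\psi_e$ be obtained from $\psi$ by replacing every equality atom $x_i=x_j$ with $\top$ if $x_i\,e\,x_j$ and with $\bot$ otherwise, and then applying $\rho_e$ to the remaining atoms. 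Let $\mathrm{Rep}_e$ be the set of $e$-representatives. Define
\[
  \phi_E\;\coloneq\;\forall x_1\,\exists x_2\cdots\exists x_m\,
  \Bigl(E(x_1,x_1)\wedge\bigvee\nolimits_{e}\bigl(\psi_e\wedge\bigwedge\nolimits_{x_i\neq x_j\in\mathrm{Rep}_e}\neg E(x_i,x_j)\bigr)\Bigr).
\]
This sentence is equality-free and constant-free and has prefix $\forall\exists^*$, so it lies in $[\forall\exists^*]$. Its size is exponential, which the statement does not forbid.

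To show $\phi_E\models\phi$, take a model $\fA$ of $\phi_E$. Because $x_1$ ranges over all of $A$, the conjunct $E(x_1,x_1)$ makes $E^\fA$ reflexive, so $\neg E(a,b)$ implies $a\neq b$. Fix $a_1$, and let $a_2,\dots,a_m$ and a pattern $e$ be the witnesses and the satisfied disjunct. Define $a'_i$ as the value of the $e$-representative of $x_i$; in particular $a'_1=a_1$. The values of distinct representatives are pairwise distinct, so $a'_i=a'_j$ holds exactly when $x_i\,e\,x_j$. Hence every equality atom of $\psi$ evaluates under $\as'$ as in $\psi_e$, and every other atom evaluates under $\as'$ as its $\rho_e$-image does under $\as$. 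Therefore $\fA\models\psi[\as']$, and $a'_2,\dots,a'_m$ witness $\phi$ for $a_1$.

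For the converse, expand a model of $\phi$ by interpreting $E$ as the identity relation. Given $a_1$ and witnesses $\as$ for $\psi$, take $e$ to be the actual equality pattern of $\as$. Then $\psi_e$ agrees with $\psi$ on $\as$, and distinct representatives denote distinct elements, so they are not $E$-related. Hence the disjunct for $e$ holds, and $E(a_1,a_1)$ holds trivially.

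The main point requiring care is that the universal variable must be fixed by every $\rho_e$; this is why representatives are chosen as least-index variables. Everything else is routine bookkeeping.
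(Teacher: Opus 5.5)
Your proof is correct. It takes a different route from the paper, which gives no argument of its own: it invokes \cite[Proposition~6.3.31]{BGG} with $p=0$ and remarks that, although that proposition states a weaker conclusion, its proof actually yields $\phi_E\models\phi$.

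You instead give a self-contained construction that reuses the equality-pattern technique of Lemma~\ref{lemma:reduction-to-simple-form}. Positive identifications are realised syntactically by the least-index substitution $\rho_e$, which never moves the universal variable $x_1$. Distinctness of the remaining representatives is certified by $\neg E$, and this is sound because the conjunct $E(x_1,x_1)$ forces $E^{\fA}$ to be reflexive. Placing reflexivity inside $\phi_E$ is exactly what turns mere equisatisfiability into the entailment $\phi_E\models\phi$. Your argument checks both required properties (the entailment, and the expansion with $E$ interpreted as the identity) directly, rather than asking the reader to extract them from the proof of a cited result.

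The price is that $\phi_E$ has one disjunct for each equivalence relation on $\{x_1,\dots,x_m\}$, so it is exponentially large. This is harmless here, because the lemma is only used for the decidability claim of Proposition~\ref{prop:DMDK-plus-forall-exists}. It would matter if one wanted complexity bounds for $\DMDK+[\forall\exists^*]_=$.
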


The lemma follows by setting $p=0$ in
\cite[Proposition~6.3.31]{BGG}, where $p$ denotes the number of leading
existential quantifiers in the prefix class
$\exists^p\forall\exists^*$.
Although the proposition itself states a weaker condition
than $\phi_E\models\phi$,
in the special case $p=0$, its proof in fact yields this
entailment.

We now define the class of interest
$\DMDK+[\forall\exists^*]_=$ to consist of sentences of the form
\tightdisplay{\[
  \phi_1\wedge\dots\wedge\phi_k,
  \qquad
  \text{where}
  \qquad
  k\ge1
  \text{ and }
  \phi_1,\dots,\phi_k\in\DMK{}\cup[\forall\exists^*]_=.
\]}%

By Lemma~\ref{lem:forall-exists-equality-elimination}, every
$[\forall\exists^*]_=$ conjunct can be replaced by an equality-free
$[\forall\exists^*]$ conjunct, and the latter class is a syntactic
subfragment of $\DMK{}$.
More precisely, we apply the lemma separately to each
$[\forall\exists^*]_=$ conjunct, introducing a fresh binary relation
symbol $E$ each time.
This yields an equisatisfiable conjunction of $\DMK{}$ sentences, and
thus a reduction to satisfiability for $\DMDK{}$.

\begin{proposition}\label{prop:DMDK-plus-forall-exists}
  The satisfiability problem for $\DMDK+[\forall\exists^*]_=$ is
  decidable over signatures without constant symbols.
\end{proposition}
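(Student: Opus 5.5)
The plan is to reduce satisfiability for $\DMDK+[\forall\exists^*]_=$ to satisfiability for $\DMDK{}$, and then invoke Theorem~\ref{theorem:main-DMDK}. Fix an input sentence $\phi=\phi_1\wedge\dots\wedge\phi_k$ over a constant-free signature $\Sigma$. Each conjunct lies in $\DMK{}\cup[\forall\exists^*]_=$. Leave the $\DMK{}$ conjuncts untouched. For each conjunct $\phi_i\in[\forall\exists^*]_=$, apply Lemma~\ref{lem:forall-exists-equality-elimination} with a fresh binary symbol $E_i$ (distinct for distinct $i$), obtaining $(\phi_i)_{E_i}\in[\forall\exists^*]$. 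Let $\phi'$ be the conjunction of the untouched conjuncts with all the $(\phi_i)_{E_i}$.

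The first step is to check that $\phi'\in\DMDK{}$. A sentence $\forall x_1\,\exists x_2\cdots\exists x_m\,\psi$ with $\psi$ equality-free and constant-free is in $\DMK{}$ with no uniform variables. Indeed, every atom either has $\phi$-prefix of length at most~$1$ (its only variable is $x_1$, or it has a single variable), or it contains some $x_j$ with $j\ge2$. In the latter case the last quantifier in its prefix binds an existential variable, so condition~\ref{DMK-category3} holds. Hence $\phi'$ is a conjunction of $\DMK{}$ sentences.

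The second step is to prove equisatisfiability. If $\fA\models\phi$, then for each $i$ the lemma gives an expansion of the $\Sigma$-reduct of $\fA$ by $E_i$ satisfying $(\phi_i)_{E_i}$. Because the symbols $E_i$ are pairwise distinct and fresh, these expansions can be combined into a single $(\Sigma\cup\{E_1,\dots,E_k\})$-structure. That structure satisfies every $(\phi_i)_{E_i}$, since each one mentions only $\Sigma\cup\{E_i\}$, and it still satisfies the $\DMK{}$ conjuncts, which mention only $\Sigma$. Conversely, if $\fB\models\phi'$, then $(\phi_i)_{E_i}\models\phi_i$ gives $\fB\models\phi_i$ for every $i$, and hence $\fB\models\phi$. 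The reduction is computable, and $\DMDK{}$ satisfiability is decidable (in fact \NExpTime{} by Theorem~\ref{theorem:main-DMDK}), so decidability follows.

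The only delicate point is the entailment $\phi_E\models\phi$ in Lemma~\ref{lem:forall-exists-equality-elimination}. We need it in this strong form, not merely as equisatisfiability, because several conjuncts share the signature $\Sigma$. A bare equisatisfiability statement for each conjunct separately would not compose across the conjunction. The paper states that the proof of \cite[Proposition~6.3.31]{BGG} with $p=0$ yields this entailment, so I take it as given. Note also that the absence of constants is what allows us to stay inside $[\forall\exists^*]$ and hence inside $\DMK{}$.
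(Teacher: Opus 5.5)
Your proposal is correct and follows the paper's argument: apply Lemma~\ref{lem:forall-exists-equality-elimination} to each $[\forall\exists^*]_=$ conjunct with its own fresh binary symbol, observe that equality-free $[\forall\exists^*]$ sentences lie in $\DMK{}$ with no uniform variables, and conclude by decidability of $\DMDK{}$; your explicit use of the entailment $\phi_E\models\phi$ to combine the conjuncts spells out the step the paper leaves implicit. One small correction to your closing remark: constant-freeness is needed for the equality-elimination lemma, not for membership in $\DMK{}$, since equality-free $\forall\exists^*$ sentences with constants still belong to $\DMK{}$ (and with constants, $\theta_{\forall,c}\in[\forall\exists^*]_=$ already yields undecidability).
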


\paragraph{Acknowledgment.}
The author thanks Emanuel Kiero\'nski for discussions,
and the reviewers of LPAR 2026 for their comments.
The author is supported by Polish National Science Center, grant No. 2021/41/B/ST6/00996.

\bibliographystyle{plain}
\bibliography{master}

\appendix

\section{Missing Proofs from Subsection~\ref{subsec:EEE-lower}}
\label{appendix:equivalences}

\subsection{Proof of Proposition~\ref{prop:EEE-lower}}

In this subsection, we establish that satisfiability for
$\MON{}+\EEE$ is $\TwoNExpTime{}$-hard.
The corresponding $\TwoNExpTime{}$ lower bound for $\DMK{}+\EEE$ can
also be inferred from the work of Kiero\'nski and
Kuusisto~\cite[Section~5.1]{KK15} on the uniform one-dimensional
fragment with one equivalence relation.
Our construction is similar in spirit, but yields the stronger result
that all non-distinguished relation symbols can be restricted to unary
predicates.

We reduce from the \emph{doubly exponential tiling problem}, which is
$\TwoNExpTime{}$-complete.
An instance of this problem is a tuple
\(
  \cT
  =
  \langle
    \cC,c_I,c_F,\cH,\cV,n
  \rangle,
\)
where $\cC$ is a non-empty finite set of \emph{colours},
$c_I,c_F \in \cC$ are the \emph{initial} and \emph{final} colours,
$\cH,\cV \subseteq \cC \times \cC$ are the sets of admissible
\emph{horizontal} and \emph{vertical} pairs, respectively, and
$n \in \N$ is given in unary.
The instance $\cT$ is \emph{solvable} if there exists a tiling
\(
  f \colon
  [0,2^{2^n}-1] \times [0,2^{2^n}-1]
  \longrightarrow
  \cC
\)
such that:
\begin{itemize}[nosep]
  \item
    $f(0,0)=c_I$ and
    $f(2^{2^n}-1,2^{2^n}-1)=c_F$;
  \item
    $\langle f(i,j),f(i+1,j)\rangle \in \cH$
    for every $i \in [0,2^{2^n}-2]$ and
    $j \in [0,2^{2^n}-1]$;
  \item
    $\langle f(i,j),f(i,j+1)\rangle \in \cV$
    for every $i \in [0,2^{2^n}-1]$ and
    $j \in [0,2^{2^n}-2]$.
\end{itemize}
The doubly exponential tiling problem asks whether a given instance
$\cT$ is solvable.

Fix an instance
$\cT=\langle\cC,c_I,c_F,\cH,\cV,n\rangle$.
We construct, in time polynomial in the size of $\cT$, a sentence
\(
  \phi_\cT \in \MON{} + \EEE
\)
such that $\cT$ is solvable if and only if $\phi_\cT$ is satisfiable.

The signature of $\phi_\cT$ is constant-free and, besides the
distinguished equivalence symbol $\EEE$, contains the following unary
relation symbols: $P_i$ ($i \in [0,n-1]$), $F_c$ ($c \in \cC$), $X$, $Y$.

We naturally view the predicates $P_0,\dots,P_{n-1}$ as bits, thereby
associating with each domain element an integer in the range
$[0,2^n-1]$.
More precisely, for a domain element $a$, let
\tightdisplay{\[
  p_a
  \coloneq
  \sum_{i=0}^{n-1} p_{a,i}\cdot 2^i,
  \qquad\text{where}\qquad
  p_{a,i}
  \coloneq
  \begin{cases}
    1 & \text{if $P_i(a)$ holds,}\\
    0 & \text{otherwise.}
  \end{cases}
\]}%

Let $\psi^P_i$, $\psi^P_{+1}$, $\psi^P_=$, and $\psi^P_<$ be the
standard formulas, of size polynomial in $n$, expressing
\tightdisplay{\[
  \psi^P_i(a)
    \;\Longleftrightarrow\; p_a{=}i,
  \quad
  \psi^P_{+1}(a,b)
    \;\Longleftrightarrow\; p_a{+}1{=}p_b,
  \quad
  \psi^P_=(a,b)
    \;\Longleftrightarrow\; p_a{=}p_b,
  \quad
  \psi^P_<(a,b)
    \;\Longleftrightarrow\; p_a{<}p_b.
\]}%

We first ensure that every equivalence class contains a representative of each possible value of $p_a$, ranging from $0$ to $2^n-1$:
\tightdisplay{\begin{equation}\label{eq:tiling-generate-positions}
  \forall x\,
  \Bigl(
    \exists y\,
      \bigl(x \EEE y \wedge \psi^P_0(y)\bigr)
    \;\wedge\;
    \bigl(
      \neg\psi^P_{2^n-1}(x)
      \rightarrow
      \exists y\,
        \bigl(x \EEE y \wedge \psi^P_{+1}(x,y)\bigr)
    \bigr)
  \Bigr).
\end{equation}}%

We further require that elements of the same equivalence class
representing the same integer agree on the predicates $X$ and $Y$:
\tightdisplay{\begin{equation}\label{eq:tiling-consistent-X-and-Y}
  \forall x,y\,
  \Bigl(
    \bigl(x \EEE y \wedge \psi^P_=(x,y)\bigr)
    \rightarrow
    \bigl(
      (X(x)\leftrightarrow X(y))
      \wedge
      (Y(x)\leftrightarrow Y(y))
    \bigr)
  \Bigr).
\end{equation}}%

We can now associate with every equivalence class of $\EEE$ a pair of
integers in the range $[0,2^{2^n}-1]$.
Let $\cE$ be an equivalence class of the interpretation of $\EEE$.
We define
\tightdisplay{\[
  X_\cE
  \coloneq
  \sum_{i=0}^{2^n-1} X_{\cE,i}\cdot 2^i,
  \qquad\text{where}\qquad
  X_{\cE,i}
  \coloneq
  \begin{cases}
    1
      & \text{if $X(a)$ holds for some $a\in\cE$ with $p_a=i$,}\\
    0
      & \text{otherwise.}
  \end{cases}
\]}%
The integer $Y_\cE$ is defined analogously, using $Y$ in place of $X$.

Every value $i\in[0,2^n-1]$ is represented in $\cE$, and any two
elements of $\cE$ representing the same value agree on both $X$ and
$Y$.
Hence, $X_\cE$ and $Y_\cE$ are well-defined.
We call them the \emph{$X$-coordinate} and \emph{$Y$-coordinate} of
$\cE$, respectively, and view the pair
$\langle X_\cE,Y_\cE\rangle$ as a position in the doubly exponential
grid.
For convenience, we extend the notions of $X$-coordinate and $Y$-coordinate to domain elements:
for each $a \in \cE$,
we set
$X_a \coloneq X_\cE$
and
$Y_a \coloneq Y_\cE$.

Under this convention, we define formulas
$\psi^X_0$, $\psi^X_{2^{2^n}-1}$, $\psi^X_{+1}$, and $\psi^X_=$
expressing
\tightdisplay{\[
\begin{array}{rcl@{\qquad}rcl}
  \psi^X_0(a)
    &\Longleftrightarrow& X_a=0,
  &
  \psi^X_{2^{2^n}-1}(a)
    &\Longleftrightarrow& X_a=2^{2^n}-1,
  \\[2mm]
  \psi^X_{+1}(a,b)
    &\Longleftrightarrow& X_a+1=X_b,
  &
  \psi^X_=(a,b)
    &\Longleftrightarrow& X_a=X_b.
\end{array}
\]}%

The formulas for the two extreme values and for equality of coordinates are straightforward:
\tightdisplay{\begin{align*}
  &\psi^X_0(x)
  \;\coloneq\;
  \forall y\,
  \bigl(x \EEE y \rightarrow \neg X(y)\bigr),
  \qquad
  \psi^X_{2^{2^n}-1}(x)
  \;\coloneq\;
  \forall y\,
  \bigl(x \EEE y \rightarrow X(y)\bigr), \\
  &\psi^X_=(x,y)
    \;\coloneq\;
    \forall x',y'\,
    \Bigl(
      \bigl(x \EEE x'
      \wedge y \EEE y'
      \wedge \psi^P_=(x',y')\bigr)
      \rightarrow
      \bigl(X(x')\leftrightarrow X(y')\bigr)
    \Bigr).
\end{align*}}%

The increment formula is slightly more involved.
We first define a bit-level auxiliary formula
\tightdisplay{\[
  \psi^X_{+1,\mathrm{bit}}(x',y')
  \;\coloneq\;
  \Bigl(
    \exists x''\,
    \bigl(
      x' \EEE x''
      \wedge
      \psi^P_<(x'',x')
      \wedge
      \neg X(x'')
    \bigr)
  \Bigr)
  \leftrightarrow
  \bigl(X(x')\leftrightarrow X(y')\bigr).
\]}%
Using this auxiliary formula, we set
\tightdisplay{\[
  \psi^X_{+1}(x,y)
  \;\coloneq\;
  \neg \psi^X_{2^{2^n}-1}(x)
  \;\wedge\;
  \forall x',y'\,
  \Bigl(
    \bigl(x \EEE x'
    \wedge
    y \EEE y'
    \wedge
    \psi^P_=(x',y')\bigr)
    \rightarrow
    \psi^X_{+1,\mathrm{bit}}(x',y')
  \Bigr).
\]}%

To see why this expresses incrementation, recall the standard binary
increment operation.
When adding one to an integer, the least significant $0$-bit is
changed to $1$, all less significant bits are changed from $1$ to $0$,
and all more significant bits remain unchanged.
Equivalently, a bit at position $i$ remains unchanged precisely when
there is a $0$-bit at some lower-order position $j<i$.
The formula $\psi^X_{+1,\mathrm{bit}}$ expresses exactly this
condition for each bit position, while
$\neg\psi^X_{2^{2^n}-1}(x)$ excludes the maximal value and hence prevents overflow.

The formulas
$\psi^Y_0$, $\psi^Y_{2^{2^n}-1}$, $\psi^Y_{+1}$, and $\psi^Y_=$
for the $Y$-coordinate are defined analogously by replacing $X$ by $Y$.

Having defined the formulas above, we can now encode a
solution to the tiling instance $\cT$.
We first ensure that every element is assigned exactly one colour:
\tightdisplay{\begin{equation}\label{eq:tiling-unique-colour}
  \forall x\,
  \bigvee\nolimits_{c\in\cC}
  \Bigl(
    F_c(x)
    \wedge
    \bigwedge\nolimits_{c'\in\cC\setminus\{c\}}
      \neg F_{c'}(x)
  \Bigr).
\end{equation}}%
Since different $\EEE$-classes may represent the same position in the
grid, we further require that any two representatives of the same
position have the same colour:
\tightdisplay{\begin{equation}\label{eq:tiling-colour-consistency}
  \forall x,y\,
  \Bigl(
    \psi^X_=(x,y)
    \wedge
    \psi^Y_=(x,y)
    \rightarrow
    \bigwedge\nolimits_{c\in\cC}
      \bigl(F_c(x)\leftrightarrow F_c(y)\bigr)
  \Bigr).
\end{equation}}%
Thus, each represented pair of coordinates has a well-defined colour.

We next require the lower-left and upper-right corners of the grid to
have the prescribed initial and final colours:
\tightdisplay{\begin{equation}\label{eq:tiling-initial-and-final}
  \exists x\,
  \bigl(
    \psi^X_0(x)
    \wedge
    \psi^Y_0(x)
    \wedge
    F_{c_I}(x)
  \bigr)
  \quad\wedge\quad
  \exists x\,
  \bigl(
    \psi^X_{2^{2^n}-1}(x)
    \wedge
    \psi^Y_{2^{2^n}-1}(x)
    \wedge
    F_{c_F}(x)
  \bigr).
\end{equation}}%

Finally, we ensure the existence of appropriately coloured horizontal
and vertical successors.
Every point whose $X$-coordinate is not maximal has a point immediately
to its right, with the same $Y$-coordinate and a horizontally
compatible colour:
\tightdisplay{\begin{equation}\label{eq:tiling-horizontal}
  \forall x\,
  \Bigl(
    \neg\psi^X_{2^{2^n}-1}(x)
    \rightarrow
    \exists y\,
    \Bigl(
      \psi^X_{+1}(x,y)
      \wedge
      \psi^Y_=(x,y)
      \wedge
      \bigvee\nolimits_{\langle c,c'\rangle\in\cH}
        \bigl(F_c(x)\wedge F_{c'}(y)\bigr)
    \Bigr)
  \Bigr).
\end{equation}}%
Similarly, every point whose $Y$-coordinate is not maximal has a point
immediately above it, with the same $X$-coordinate and a vertically
compatible colour:
\tightdisplay{\begin{equation}\label{eq:tiling-vertical}
  \forall x\,
  \Bigl(
    \neg\psi^Y_{2^{2^n}-1}(x)
    \rightarrow
    \exists y\,
    \Bigl(
      \psi^X_=(x,y)
      \wedge
      \psi^Y_{+1}(x,y)
      \wedge
      \bigvee\nolimits_{\langle c,c'\rangle\in\cV}
        \bigl(F_c(x)\wedge F_{c'}(y)\bigr)
    \Bigr)
  \Bigr).
\end{equation}}%

Let $\phi_\cT$ be the conjunction of sentences~\eqref{eq:tiling-generate-positions}--\eqref{eq:tiling-vertical}.
It is satisfiable if and only if the tiling instance $\cT$ is solvable.
Moreover, $\phi_\cT$ belongs to $\MON{} + \EEE$ and can be constructed from $\cT$ in polynomial time.
This gives the desired reduction from the doubly exponential tiling problem to satisfiability for $\MON{}+\EEE$.
Therefore, satisfiability for $\MON{}+\EEE$ is $\TwoNExpTime{}$-hard.

\subsection{Proof of Proposition~\ref{prop:EEE-undec}}

In this subsection,
we establish that satisfiability for $\MON{}+2\EEE$ and $\MON{}+\TRrefl$ is co-$\RE$-complete.
Membership in co-$\RE$ is immediate: the
distinguished relations are finitely axiomatisable in first-order
logic, so unsatisfiability reduces to first-order unsatisfiability and
is therefore recursively enumerable.
It thus remains to establish co-$\RE$-hardness.

For this, we build on the reductions of Kiero\'nski and
Kuusisto~\cite[Section~5.5]{KK15}.
We first explain how their construction for $\TR$ can be modified for
the case with $\TRrefl$, and then give details of the corresponding
construction for $2\EEE$, which are not spelled out
in~\cite{KK15}.

We reduce from the \emph{infinite tiling problem}, which is co-$\RE$-complete.
An instance of this problem is a tuple
\(
  \cT
  =
  \langle
    \cC,c_I,\cH,\cV
  \rangle,
\)
where $\cC$ is a non-empty finite set of \emph{colours},
$c_I \in \cC$ is the \emph{initial} colour,
and $\cH,\cV \subseteq \cC \times \cC$ are the sets of admissible
\emph{horizontal} and \emph{vertical} pairs, respectively.
The instance $\cT$ is \emph{solvable} if there exists a tiling
\(
  f \colon
  \N \times \N
  \longrightarrow
  \cC
\)
such that:
\begin{itemize}[nosep]
  \item
    $f(0,0)=c_I$;
  \item
    $\langle f(i,j),f(i+1,j)\rangle \in \cH$
    for every $i,j \in \N$;
  \item
    $\langle f(i,j),f(i,j+1)\rangle \in \cV$
    for every $i,j \in \N$.
\end{itemize}
The infinite tiling problem asks whether a given instance
$\cT$ is solvable.

\paragraph{Reflexive-transitive relation.}
Fix an instance
$\cT=\langle\cC,c_I,\cH,\cV\rangle$.
We construct, effectively from $\cT$, a sentence
\(
  \phi_\cT\in\MON{}+\TRrefl
\)
such that $\cT$ is solvable if and only if $\phi_\cT$ is satisfiable.

Besides the distinguished symbol $\TRrefl$, the signature of
$\phi_\cT$ contains the following unary relation symbols:
$X_i$ ($i\in[0,3]$), $Y_j$ ($j\in[0,3]$), and
$F_c$ ($c\in\cC$).

\begin{figure}[h]
  \centering
  \begin{subfigure}{.5\textwidth}
    \centering
    \begin{tikzpicture}
  \foreach \x in {0,1,2,3} {
    \foreach \y in {0,1,2,3} {
      \begin{scope}[shift={(\x,\y)}]
        \draw (0,0) circle (0.1cm);
        \draw[rotate=-48, arrows=-Latex] (0.03,0.1) .. controls (0.3,1) and (-0.3,1) .. (-0.0,0.1);
      \end{scope}
    }
  }
  \foreach \x in {0,2} {
    \foreach \y in {0,2} {
      \begin{scope}[shift={(\x,\y)}]
        \draw[arrows=-Latex] (0.1,0) -- (0.9,0);
        \draw[arrows=-Latex] (0,0.1) -- (0,0.9);
      \end{scope}
    }
  }
  \foreach \x in {1,3} {
    \foreach \y in {1,3} {
      \begin{scope}[shift={(\x,\y)}]
        \draw[arrows=-Latex] (0.1,0) -- (0.9,0);
        \draw[arrows=-Latex] (0,0.1) -- (0,0.9);
      \end{scope}
    }
  }
  \foreach \x in {1,3} {
    \foreach \y in {0,2} {
      \begin{scope}[shift={(\x,\y)}]
        \draw[arrows=Latex-] (0.1,0) -- (0.9,0);
        \draw[arrows=Latex-] (0,0.1) -- (0,0.9);
      \end{scope}
    }
  }
  \foreach \x in {0,2} {
    \foreach \y in {1,3} {
      \begin{scope}[shift={(\x,\y)}]
        \draw[arrows=Latex-] (0.1,0) -- (0.9,0);
        \draw[arrows=Latex-] (0,0.1) -- (0,0.9);
      \end{scope}
    }
  }

  \node at (0,-0.4) {$X_0$};
  \node at (1,-0.4) {$X_1$};
  \node at (2,-0.4) {$X_2$};
  \node at (3,-0.4) {$X_3$};
  \node at (4,-0.4) {$X_0$};

  \node at (-0.4,0) {$Y_0$};
  \node at (-0.4,1) {$Y_1$};
  \node at (-0.4,2) {$Y_2$};
  \node at (-0.4,3) {$Y_3$};
  \node at (-0.4,4) {$Y_0$};
\end{tikzpicture}
    \label{fig:grid-trans}
    \caption{$\MON+\TRrefl$}
  \end{subfigure}%
  \begin{subfigure}{.5\textwidth}
    \centering
    \begin{tikzpicture}
  \foreach \x in {0,2} {
    \foreach \y in {0,2} {
      \begin{scope}[shift={(\x,\y)}]
        \draw[fill=gray] (0,0) rectangle (1,1);
      \end{scope}
    }
  }
  \foreach \x in {1,3} {
    \foreach \y in {1,3} {
      \begin{scope}[shift={(\x,\y)}]
        \draw[fill=lightgray, draw=none] (0,0) rectangle (1,1);
      \end{scope}
    }
  }
  \foreach \x in {1,3} {
    \foreach \y in {0,1,2,3} {
      \begin{scope}[shift={(\x,\y)}]
        \draw[dashed] (0,0) -- (1,0);
      \end{scope}
    }
  }
  \foreach \x in {0,1,2,3} {
    \foreach \y in {1,3} {
      \begin{scope}[shift={(\x,\y)}]
        \draw[dashed] (0,0) -- (0,1);
      \end{scope}
    }
  }
  \foreach \x in {0,1,2,3} {
    \foreach \y in {0,1,2,3} {
      \begin{scope}[shift={(\x,\y)}]
        \draw[fill=white] (0,0) circle (0.1cm);
      \end{scope}
    }
  }

  \node at (0,-0.4) {$X_0$};
  \node at (1,-0.4) {$X_1$};
  \node at (2,-0.4) {$X_2$};
  \node at (3,-0.4) {$X_3$};
  \node at (4,-0.4) {$X_0$};

  \node at (-0.4,0) {$Y_0$};
  \node at (-0.4,1) {$Y_1$};
  \node at (-0.4,2) {$Y_2$};
  \node at (-0.4,3) {$Y_3$};
  \node at (-0.4,4) {$Y_0$};
\end{tikzpicture}
    \label{fig:grid-twoeq}
    \caption{$\MON+2\EEE$}
  \end{subfigure}
  \caption{Grid structures.}
  \label{fig:grids}
\end{figure}

The intended structure representing the infinite
$\N\times\N$ grid is depicted on the left in Figure~\ref{fig:grids}.
Each element represents a single tile at some position $(i,j)$.
It satisfies precisely one predicate $X_{i'}$, where
$i'\equiv i\pmod 4$, and precisely one predicate $Y_{j'}$, where
$j'\equiv j\pmod 4$. 
The pairs related by $\TRrefl$ are indicated by arrows. In particular,
every element has a self-loop, as required by reflexivity.
The relation is arranged so that
there are no directed paths of length two consisting entirely of
non-reflexive edges, and therefore transitivity forces no additional connections.

We now describe this structure by a sentence in $\MON{}+\TRrefl$.
First, we require every element to satisfy precisely one $X_i$ and
precisely one $Y_j$:
\tightdisplay{\begin{equation}
  \forall x\,
  \bigvee\nolimits_{i,j\in[0,3]}
  \Bigl(
    X_i(x)
    \wedge
    Y_j(x)
    \wedge
    \bigwedge\nolimits_{i'\neq i} \neg X_{i'}(x)
    \wedge
    \bigwedge\nolimits_{j'\neq j} \neg Y_{j'}(x)
  \Bigr).
\end{equation}}%

We now define a formula expressing that $y$ is a horizontal successor
of $x$.
For $k\in\N$, write $\TR_k^*(x,y)$ for $\TRrefl(x,y)$ when $k$ is even and for $\TRrefl(y,x)$
when $k$ is odd.
Then set
\tightdisplay{\[
  \psi_H(x,y)
  \;\coloneq\;
  \bigvee\nolimits_{i,j\in[0,3]}
  \Bigl(
    X_i(x)
    \wedge X_{i+1}(y)
    \wedge Y_j(x)
    \wedge Y_j(y)
    \wedge \TR_{i+j}^*(x,y)
    \wedge \neg\TR_{i+j}^*(y,x)
  \Bigr),
\]}%
where the subscript of $X_{i+1}$ is taken modulo~$4$.

The formula $\psi_V(x,y)$ expressing that $y$ is a vertical successor
of $x$ is defined analogously: the $Y$-index is increased by one
modulo~$4$, while the $X$-index remains unchanged.

We also introduce an auxiliary formula expressing that two elements
represent the same tile:
\tightdisplay{\[
  \psi_E(x,y)
  \;\coloneq\;
  \TRrefl(x,y)\wedge\TRrefl(y,x).
\]}%
Since $\TRrefl$ is reflexive and transitive, the relation defined by
$\psi_E$ is an equivalence relation.

We now describe the grid connections.
First, elements representing the same tile are required to agree on all
unary predicates, that is, on the predicates $X_i$, $Y_j$, and $F_c$:
\tightdisplay{\begin{equation}\label{eqn:infinite-tiling-agree-on-all}
  \forall x,y\;
  \Bigl(
    \psi_E(x,y)
    \rightarrow
    \bigwedge\nolimits_{\text{$P$ is unary}}
    \bigl(P(x)\leftrightarrow P(y)\bigr)
  \Bigr).
\end{equation}}%

Next, horizontal successors are required to be unique up to
$\psi_E$-equivalence:
\tightdisplay{\begin{equation}\label{eqn:infinite-tiling-unique-H}
  \forall x,y,z\;
  \Bigl(
    \bigl(\psi_H(x,y)\wedge\psi_H(x,z)\bigr)
    \rightarrow
    \psi_E(y,z)
  \Bigr).
\end{equation}}%
Likewise, vertical successors are required to be unique up to
$\psi_E$-equivalence:
\tightdisplay{\begin{equation}\label{eqn:infinite-tiling-unique-V}
  \forall x,y,z\;
  \Bigl(
    \bigl(\psi_V(x,y)\wedge\psi_V(x,z)\bigr)
    \rightarrow
    \psi_E(y,z)
  \Bigr).
\end{equation}}%

Finally, we require every element to be the lower-left corner of a grid square:
\tightdisplay{\begin{equation}\label{eqn:infinite-tiling-square}
  \forall x\;
  \exists y,z,t\;
  \bigl(
    \psi_H(x,y)
    \wedge
    \psi_V(y,z)
    \wedge
    \psi_H(t,z)
    \wedge
    \psi_V(x,t)
  \bigr).
\end{equation}}%

Let us briefly explain why these conditions suffice to enforce the
intended grid structure.
The key point is that the squares provided
by~\eqref{eqn:infinite-tiling-square} can be consistently arranged
into a grid.

Fix an element $a_x$, and let $a_y,a_z,a_t$ be witnesses
for~\eqref{eqn:infinite-tiling-square}.
Set $b_x\coloneq a_y$, and let $b_y,b_z,b_t$ be witnesses for the same
formula with $b_x$ in place of $a_x$.
Then both $a_z$ and $b_t$ are vertical successors of
$a_y=b_x$.
Hence, by~\eqref{eqn:infinite-tiling-unique-V},
$\psi_E(a_z,b_t)$ holds.
Thus, up to $\psi_E$-equivalence, the square
$(b_x,b_y,b_z,b_t)$ shares its left vertical side with the right
vertical side of $(a_x,a_y,a_z,a_t)$.

Analogously, using~\eqref{eqn:infinite-tiling-unique-H}, squares can be
joined along their horizontal sides.
Iterating this construction therefore yields a representation of the
$\N\times\N$ grid, up to replacing individual grid points by
$\psi_E$-equivalent elements.
By~\eqref{eqn:infinite-tiling-agree-on-all}, such replacements preserve
all $X_i$, $Y_j$, and $F_c$ predicates, and hence do not affect either
the grid coordinates modulo~$4$ or the assigned colours.

It remains to impose the tiling constraints, which can be expressed in
the standard way.
First, every element is assigned precisely one colour and the initial colour is realised:
\tightdisplay{\begin{equation}
  \forall x\,
  \bigvee\nolimits_{c\in\cC}
  \Bigl(
    F_c(x)
    \wedge
    \bigwedge\nolimits_{c'\in\cC\setminus\{c\}}
      \neg F_{c'}(x)
  \Bigr)
  \;\wedge\;
  \exists x\; F_{c_I}(x).
\end{equation}}%
The horizontal and vertical compatibility conditions are expressed,
respectively, by
\tightdisplay{\begin{equation}
  \forall x\; \exists y\;
  \Bigl(
    \psi_H(x,y)
    \wedge
    \bigvee\nolimits_{\langle c,c'\rangle\in\cH}
      \bigl(F_c(x)\wedge F_{c'}(y)\bigr)
  \Bigr)
\end{equation}}%
and
\tightdisplay{\begin{equation}
  \forall x\; \exists y\;
  \Bigl(
    \psi_V(x,y)
    \wedge
    \bigvee\nolimits_{\langle c,c'\rangle\in\cV}
      \bigl(F_c(x)\wedge F_{c'}(y)\bigr)
  \Bigr).
\end{equation}}%

\paragraph{Two equivalence relations.}
Fix an instance
$\cT=\langle\cC,c_I,\cH,\cV\rangle$.
We construct, effectively from $\cT$, a sentence
\(
  \phi_\cT\in\MON{}+2\EEE
\)
such that $\cT$ is solvable if and only if $\phi_\cT$ is satisfiable.

We adapt the construction given above for $\MON{}+\TRrefl$.
Let $E_1,E_2$ denote the two distinguished equivalence symbols of
$\MON{}+2\EEE$, and use the same unary predicates $X_i$, $Y_j$, and
$F_c$ as before.

The intended structure representing the infinite $\N\times\N$ grid is
depicted on the right in Figure~\ref{fig:grids}.
The dark grey regions represent $E_2$-classes, while the light grey
regions represent $E_1$-classes.
The classes alternate so that every pair of horizontally or vertically
adjacent grid points belongs to exactly one common equivalence class,
with the choice of the equivalence relation determined by the parity
of the corresponding coordinate.
Moreover, the intersection of an $E_1$-class and an $E_2$-class
contains precisely one grid point in the intended structure.

As before, we first require every element to satisfy precisely one
predicate $X_i$ and precisely one predicate $Y_j$.
For $k\in\N$, write $E_k$ for $E_1$ when $k$ is odd and for $E_2$
when $k$ is even.
We then define the horizontal-successor formula by
\tightdisplay{\[
  \psi_H(x,y)
  \;\coloneq\;
  \bigvee\nolimits_{i,j\in[0,3]}
  \Bigl(
    X_i(x)
    \wedge X_{i+1}(y)
    \wedge Y_j(x)
    \wedge Y_j(y)
    \wedge E_i(x,y)
    \wedge \neg E_{i+1}(x,y)
  \Bigr),
\]}%
where the subscript of $X_{i+1}$ is taken modulo~$4$.
Similarly, we define
\tightdisplay{\[
  \psi_V(x,y)
  \;\coloneq\;
  \bigvee\nolimits_{i,j\in[0,3]}
  \Bigl(
    X_i(x)
    \wedge X_i(y)
    \wedge Y_j(x)
    \wedge Y_{j+1}(y)
    \wedge E_j(x,y)
    \wedge \neg E_{j+1}(x,y)
  \Bigr),
\]}%
where the subscript of $Y_{j+1}$ is again taken modulo~$4$.

Finally, we redefine the auxiliary formula identifying copies of the
same grid point by
\tightdisplay{\[
  \psi_E(x,y)
  \;\coloneq\;
  E_1(x,y)\wedge E_2(x,y).
\]}%
Since $E_1$ and $E_2$ are equivalence relations, $\psi_E$ defines an
equivalence relation as well.

We now use these formulas in place of the corresponding formulas from
the $\MON{}+\TRrefl$ construction, leaving all the remaining
constraints unchanged.
The same argument as before applies:
the conditions in~\eqref{eqn:infinite-tiling-unique-H} and~\eqref{eqn:infinite-tiling-unique-V} ensure
that the squares supplied by~\eqref{eqn:infinite-tiling-square} can be
joined consistently up to $\psi_E$-equivalence, while
\eqref{eqn:infinite-tiling-agree-on-all} ensures that
$\psi_E$-equivalent elements agree on their coordinate and colour
predicates.
Consequently, the resulting structure represents an
$\N\times\N$ grid carrying a valid tiling exactly when $\cT$ is
solvable.

\end{document}